\newcommand{\be}{\begin{equation}}
\newcommand{\ee}{\end{equation}}
\newcommand{\Tr}{\textrm{Tr}}
\newcommand{\aeq}{\underset{O(\epsilon)}{\approx}}
\newcommand{\aeqn}{\underset{O(n\epsilon)}{\approx}}
\newcommand{\aeqm}{\underset{O(m\epsilon)}{\approx}}
\newcommand{\pfun}{\mathrel{\ooalign{\hfil$\mapstochar$\hfil\cr$\to$\cr}}}
\theoremstyle{definition}
\newtheorem{lem}{Lemma}
\newtheorem{thm}{Theorem}
\newtheorem{defi}{Definition}
\newtheorem{prop}{Proposition}
\tikzset{cross/.style={cross out, draw=black, minimum size=2*(#1-\pgflinewidth), inner sep=0pt, outer sep=0pt},
cross/.default={7pt}}
\begin{document}
\title{Markovian Marginals}
\author{Isaac H. Kim}
\affiliation{IBM T.J. Watson Research Center, Yorktown Heights, NY 10598, USA}
\affiliation{Perimeter Institute for Theoretical Physics, Waterloo ON N2L 2Y5, Canada}
\affiliation{Institute for Quantum Computing, University of Waterloo, Waterloo ON N2L 3G1, Canada}

\date{\today}

\begin{abstract}
We introduce a class of so called Markovian marginals, which gives a natural framework for constructing solutions to the quantum marginal problem. We consider a set of marginals that possess a certain internal quantum Markov chain structure. If they are equipped with such a structure and are locally consistent on their overlapping supports, there exists a global state that is consistent with all the marginals. The proof is constructive, and relies on a reduction of the marginal problem to a certain combinatorial problem. By employing an entanglement entropy scaling law, we give a physical argument that the requisite structure exists in any states with finite correlation lengths. This includes topologically ordered states as well as finite temperature Gibbs states.
\end{abstract}

\maketitle

\section{Introduction}
Many of the great challenges in modern physics lies on the studies of strongly interacting quantum many-body systems. The main difficulty in these studies is often attributed to the curse of dimensionality. The Hilbert space dimension of the many-body system grows exponentially with the system size, and the number of parameters becomes quickly unmanagable. Methods such as exact diagonalization become unwieldy in those regimes. Then the question is whether one can make reliable approximations to make the problem more tractable.

The fact that certain variational methods, e.g., the density matrix renormalization group(DMRG),\cite{White1992} work so well in practice can be attributed to the fact that the underlying variational ansatz\cite{Fannes1992,DPG2006} can reliably approximate the ground states of physical states with a moderate number of parameters\cite{Hastings2007,Scholz2016}; see also Ref.\cite{Landau2013} for a provably efficient algorithm. Unfortunately, much less is known in higher dimensions. What is clear is that, unlike in one dimension, simply having an area-like upper bound on the entanglement entropy is insufficient to guarantee the existence of an efficient classical description.\cite{Ge2016} Then a natural question is whether there is a different structure that allows us to make progress. It is known that low-temperature states of the systems that obey a reasonable assumption on the density of states can be approximated well by a projected entangled pair states.\cite{Molnar2015} Unfortunately, this in itself does not imply that we can numerically study these systems, because contracting the tensor network is in general computationally hard.\cite{Schuch2007} This means that approximations are necessary, and formulating a condition under which one can approximately contract the network is an ongoing area of research; see \cite{Schwarz2016} for a recent progress in this direction. On the other hand, certain tensor networks can be contracted provably efficiently.\cite{Vidal2008} Here the problem is reversed; many physical states can be shown to be described by such an ansatz in a case-by-case basis\cite{Aguado2008,Koenig2009,Swingle2014} but few formal results are known.

Motivated by this state of affairs, here we propose a completely different direction to study such strongly interacting systems, by providing a certain solution to the quantum marginal problem. In the quantum marginal problem, one is given a set of marginals, i.e., reduced density matrices. Then the question is whether there exists some global state such that the reduced density matrices of this global state is equal to the given set of marginals. When such a global state exists, the marginals are said to be consistent. In its most general form, this problem is QMA-complete, and is unlikely to admit an efficient solution.\cite{Liu2006,Liu2007,Wei2010} Our solution can certify the existence of a global state, but cannot rule out such a possibility. Also, while the condition can be verified efficiently, the set of states that obeys our condition is nonconvex. As such, our result is not in violation of any of the known results.

Our solution is physically relevant, in a sense that the condition that guarantees the consistency of the marginals follows from a rather mild physical condition. A class of states for which our solution becomes nontrivial includes states that obey the following form of entanglement entropy scaling law:
\begin{equation}
S(\rho^A) = \alpha_0 l^D + \alpha_1 l^{D-1} + \cdots, \label{eq:EEscaling}
\end{equation}
where $\alpha_i$ are some constants, $D$ is the number of spatial dimensions, $l$ is the lengthscale of some subsystem $A$ and $S(\rho^A)$ is the von Neumann entropy of a state $\rho^A$.\cite{Hamma2005,Levin2006,Kitaev2006,Eisert2010}  To be more precise, we demand certain linear combinations of these entropies to be sufficiently small, and Eq.\ref{eq:EEscaling} fulfills our demand. While there is a physical argument that all quantum many-body systems with a spectral gap should obey such a relation,\cite{Grover2010} we do not necessarily advocate that point of view. Our stance is merely that our solution is applicable to a wide range of models that have appeared in the literature. The evidence for the universality of Eq.\ref{eq:EEscaling} is surely an encouraging sign for the prospect of our approach, but whether it is indeed true or not remains to be seen.

Our solution is flexible and nontrivial. The class of states that obeys our condition includes so called topologically ordered states\cite{Bravyi2006} as well as highly mixed states. Furthermore, the marginals are allowed to be supported on regions that overlap with each other. We also allow the marginals to be approximately consistent on their overlaps, as opposed to being exactly consistent on their overlaps. In those cases, we can certify the approximate consistency. That is, we can show the existence of some state whose reduced density matrices are close to the given marginals with an uniform upper bound on their trace distance. The nature of our solution is thus very different from the solutions to the so called one-body marginal problem.\cite{Klyachko2004,Daftuar2004,Christandl2006} As a side note, we point out that some of our ideas have already appeared in Ref.\cite{Kato2016}. The main difference is that we have built an entire formalism to be able to deal with multipartite systems.

Admittedly though, these strengths come with a price. Our proof relies on a heavy dosage of formalism, which, to the best of the author's knowledge, does not seem to appear elsewhere. However, for two reasons, we believe there are certain benefits in constructing such a formalism. For one thing, doing so illuminates the essential insight behind the proof. The most important observation is that the original problem, in a certain context that we specify later, can be reduced to a certain combinatorial problem. Our formalism makes this reduction explicit. Second, the formalism opens up a possibility to generate special solutions to the quantum marginal problem by solving a combinatorial problem. It is not completely unreasonable to expect that such an approach might prove useful elsewhere, i.e., in quantum chemistry.

We should also mention an important caveat concerning topologically ordered systems, such as the toric code.\cite{Kitaev1997} While our solution is applicable to arbitrarily large subregions of such system, it is not applicable to the entire system. In fact, we present a general no-go argument in Section \ref{section:conclusion}. As such, a sequence of Markovian marginals designed for such states with increasing sizes does not form a family of states with decreasing energy. It is only the energy density that decreases in this sequence. However, the excitations in this construction would be localized near the boundary, and we expect this effect to be negligible if we only ask questions deep inside the bulk.

The setup and the summary of our results can be found in Section \ref{section:summary}. In Section \ref{section:formalism} we develop the formalism. In Section \ref{section:Proof1} and \ref{section:Proof2}, we prove the main results. We end with some comments in Section \ref{section:conclusion}.

\section{Setup and Summary\label{section:summary}}
Markovian marginal -- the central object of this paper  -- is defined in terms of a collection of three data: partition, clusters, and marginals. Partition, as the name suggests, refers to the partition of the physical degrees of freedom into disjoint subsets. Clusters consist of elements of the partitions, and these are the degrees of freedom on which the marginals are defined. The reason why we make the notion of partition and cluster explicit is because we demand each of the marginals to be equipped with a certain structure that is defined in terms of these objects.

It will be convenient to describe the physical space in terms of a graph. Once this is done, we can make the notion of partition and cluster precise. We consider an undirected graph $G=(V,E)$, where $V$ is a set vertices and $E$ is a set of edges. Physically, one should view the vertices to be a collection of physical particles and the edges to be a bookkeeping device that encodes the locality structure of the underlying system. If for two vertices $v_1,v_2 \in V$ there exists $(v_1,v_2)\in E$, we denote this fact as $v_1- v_2.$ Otherwise, $v_1 \centernot{-} v_2.$ We shall denote the set of neighbors as $\mathcal{N}(v)=\{v'| v-v' \}$. The same convention applies to subsets of the vertices. A partition $\mathcal{P}$ of a graph $G$ is a partition of $V$. That is, $\mathcal{P}=\{V_1,\cdots, V_n \}$ such that $V_i\cap V_j = \emptyset$ $\forall i\neq j$ and $\bigcup_{i=1}^{n} V_i = V.$ The elements of $\mathcal{P} $ shall be referred to as \emph{cells}. A cluster, say $A$, is a collection of cells. We shall denote the union of its elements as $\bar{A}\subset V$.

The physical Hilbert space is a tensor product of finite-dimensional Hilbert spaces labeled by $v\in V$. The degrees of freedom in $\Lambda \subset V$ is defined on the Hilbert space $\mathcal{H}_{\Lambda} = \bigotimes_{v\in \Lambda} \mathcal{H}_v$, where $\dim (\mathcal{H}_v) = d < \infty$. Its algebra of observables is $\mathcal{A}_{\Lambda} = \mathcal{B}(\mathcal{H}_{\Lambda})$. As usual, we define the local algebra of observables as
\begin{equation}
\mathcal{A}_{\text{loc}} = \bigcup_{\Lambda \subset V} \mathcal{A}_{\Lambda}.
\end{equation}
An observable $O$ is supported on $\Lambda$ if $O\in \mathcal{A}_{\Lambda}$. A smallest $\Lambda$ such that $O$ is supported on $\Lambda$ is called as the support of $O$. The space of states, i.e., positive linear functionls of norm $1$, is denoted as $\mathcal{D}_{\Lambda}.$ For our work, it will be useful to define the set of local states:
\begin{equation}
\mathcal{D}_{\text{loc}} = \bigcup_{\Lambda \subset V} \mathcal{D}_{\Lambda}.
\end{equation}
Similar to the observables, we say that a state $\rho$ is supported on $\Lambda$ if $\rho \in \mathcal{D}_{\Lambda}$. Also, if $\rho \in \mathcal{D}_{\Lambda}$, $\Lambda$ is said to be the support of $\rho.$ From now on, we shall specify the support of every state by affixing it in the superscript, e.g., $\rho^{\Lambda}.$  We shall also use completely positive trace preserving(CPTP) maps.\cite{Choi1975} The domain and the codomain of these maps shall appear in the subscript and the superscript respectively. For instance, the domain and the codomain of a CPTP map $\Phi_A^{A'}$ is $A$ and $A'$.  Certain CPTP maps can be represented in a special form:
\begin{equation}
\Phi_A^{A'} = I_B \otimes \Phi_{\tilde{A}}^{\tilde{A}'},
\end{equation}
where $I_B$ is the identity superoperator acting on $\mathcal{A}_B$.
In those cases, we shall say that $\Phi_A^{A'}$ is supported on $\tilde{A}\bigcup \tilde{A}'$. Sometimes, the fact that we are taking a union of two sets will be obvious from the context. In those cases, we will suppress the union symbol. For example, $AB$ should be read as $A\cup B.$

One of the constraints we impose on the marginals is formulated in terms of the von Neumann entropy, $S(\rho) = - \Tr(\rho \log \rho)$, where $\Tr(\cdot)$ is the trace of an operator. Specifically, it involves a linear combination of von Neumann entropy that is known as the conditional quantum mutual information:
\begin{equation}
I(A:C|B)_{\rho} = S(\rho^{AB}) + S(\rho^{BC}) - S(\rho^B) - S(\rho^{ABC}).\label{eq:CMI}
\end{equation}

Now we can define the Markovian marginal in terms of these objects.
\begin{defi}
(Markovian marginal) A Markovian marginal over $G=(V,E)$ is $(\mathcal{P},\mathcal{C}, \mathcal{M})$, where $\mathcal{P}$ is a partition of $V$, $\mathcal{C}$ is a set of clusters, and $\mathcal{M}$ is a set of marginals such that
\begin{itemize}
	\item Cluster condition: $\forall A \in \mathcal{C}$, $A\subset \mathcal{P}$.
	\item Local consistency condition: $\forall \rho^{\bar{A}}, \rho^{\bar{B}} \in \mathcal{M}$, $\Tr_{\bar{A} \setminus \bar{B}}(\rho^{\bar{A}}) = \Tr_{\bar{B} \setminus \bar{A}}(\rho^{\bar{B}})$.
	\item Local Markov condition: $\forall A \in \mathcal{C}$, $\forall a \in A$, $I(a:\bar{A} \setminus(a\bigcup (\mathcal{N}(a) \bigcap \bar{A}))| \mathcal{N}(a) \bigcap \bar{A})_{\rho^{\bar{A}}}=0.$
\end{itemize}
\end{defi}
In words, each of the conditions means the following. The cluster condition means that a cluster is a collection of cells. The local consistency condition means that each of the marginals should be equal to each other once they are restricted to a region on which they overlap. The local Markov condition means that a cell $a$ should obey a certain conditional independence condition within the cluster that it is contained in.\footnote{A tripartite state is loosely said to be conditionally independent if the conditional quantum mutual information is $0.$} This is to be contrasted with the conditional independence relation that is present in Gibbs states of classical statistical mechanics, which can be read as follows\cite{Pearl1988}:
\begin{equation}
I(a: V\setminus (a\cup \mathcal{N}(a)) | \mathcal{N}(a))=0. \label{eq:Markov_graphical_model}
\end{equation}
There are two differences between these two conditions. First, the local Markov condition is formulated in a bounded region of space whereas Eq.\ref{eq:Markov_graphical_model} is formulated in the entire space($V$). Second, the conditioning subsystem(the subsystem $B$ of Eq.\ref{eq:CMI}) is generally a subset of the neighbors of $a$ in the local Markov condition. On the other hand, the conditioning subsystem is chosen to be the entire set of neighbors in Eq.\ref{eq:Markov_graphical_model}.

Markovian marginal, in its present form, is an idealistic construct because one would need infinite precision to verify the constraints. This is clearly unrealistic, which is why we need to introduce an $\epsilon-$Markovian marginal.
\begin{defi}
($\epsilon-$Markovian marginal) An $\epsilon-$Markovian marginal over $G=(V,E)$ is $(\mathcal{P},\mathcal{C}, \mathcal{M})$, where $\mathcal{P}$ is a partition of $V$, $\mathcal{C}$ is a set of clusters, and $\mathcal{M}$ is a set of marginals such that
\begin{itemize}
	\item Cluster condition: $\forall A \in \mathcal{C}$, $A\subset \mathcal{P}$.
	\item Local consistency condition: $\forall \rho^{\bar{A}}, \rho^{\bar{B}} \in \mathcal{M}$, $\|\Tr_{\bar{A} \setminus \bar{B}}(\rho^{\bar{A}}) - \Tr_{\bar{B} \setminus \bar{A}}(\rho^{\bar{B}})\|_1 \leq \epsilon$. If $\bar{A}\subset \bar{B}$, $\rho^{\bar{A}} = \Tr_{\bar{B} \setminus \bar{A}}(\rho^{\bar{B}})$.
	\item Local Markov condition: $\forall A \in \mathcal{C}$, $\forall a \in A$, $I(a:\bar{A} \setminus(a\bigcup (\mathcal{N}(a) \bigcap \bar{A}))| \mathcal{N}(a) \bigcap \bar{A})\leq \epsilon^2.$
\end{itemize}
\end{defi}
Here $\|\cdots \|_1$ is the trace distance.\footnote{The reason why we chose $\epsilon^2$ as opposed $\epsilon$ is to simplify the analysis, as it shall become apparent later. In any case, for a sufficiently small $\epsilon$, the main conclusion of this paper is insensitive to such details.} An $\epsilon-$Markovian marginal is a more realistic construct. Provided that each of the marginals are supported on a finite-dimensional space, by keeping $O(\log \frac{1}{\epsilon})$ bits of precision, the conditions can be certified rigorously. This is a simple fact that follows from the triangle inequality for norms and the continuity of the von Neumann entropy.\cite{Fannes1973} We emphasize that the exact equality in the local consistency condition for the case of $A\subset B$ is not an unrealistic assumption. In such cases, one can simply store the marginal $\rho^{\bar{B}}$ up to a fixed precision. Then one can simply \emph{define} $\rho^{\bar{A}}$ as the reduced density matrix of $\rho^{\bar{B}}$.

By making an appropriate choice of $G, \mathcal{P},$ and $\mathcal{C}$, we shall prove that there exists some state $\sigma$ whose marginals over $\mathcal{C}$ are consistent with $\mathcal{M}$ up to a small error, say $\delta$. Specifically, we shall be able to show that certain $\epsilon-$Markovian marginals are $O(|V|\epsilon)$-consistent, where we define $\delta-$consistency as follows:
\begin{defi}
An $\epsilon-$Markovian marginal $(\mathcal{P},\mathcal{C},\mathcal{M})$ is $\delta-$consistent if $\exists \sigma \geq0$ such that $\forall \rho^{\bar{A}} \in \mathcal{M}$, $\| \rho^{\bar{A}} - \sigma^{\bar{A}}\|_1\leq \delta$.
\end{defi}

\subsection{Main Result: $D=1$}

We state one of our main results and justify its physical relevance by invoking a physical argument in the literature. We begin by specifying the graph. Let $G=(V,E)$, where $V=\{1,\cdots, n \}$ and $E=\{(i,i+1)| i=1, \cdots, n-1 \}$. We assume that $n$ is an even number. We prove the statement later in Section IV.
\begin{thm}
An $\epsilon-$Markovian marginal $(\mathcal{P}, \mathcal{C}, \mathcal{M})$ with $\mathcal{P} = \{[i]| i=1, \cdots, \frac{n}{2} \}$ where $[i]:= \{2i-1, 2i\}$, $\mathcal{C} = \{ \{[i],[i+1] \}| i=1, \cdots, \frac{n}{2}-1 \}$ is $cn\epsilon$-consistent, where $c$ is a constant that is independent of $d$ and $n$.\label{thm:1D}
\end{thm}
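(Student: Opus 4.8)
The plan is to build the global state $\sigma$ by a single left-to-right sweep of Petz recovery maps, appending one cell at a time, and then to control the accumulated error with an approximate-recovery inequality. First I would unpack the structural hypotheses. Write $B_i := [i] = \{2i-1,2i\}$ and $\rho_i := \rho^{\overline{A_i}}$ for the marginal of cluster $A_i = \{B_i,B_{i+1}\}$, supported on the four consecutive sites $2i-1,2i,2i+1,2i+2$. The two instances of the local Markov condition for $A_i$ read $I(\{2i-1,2i\}:\{2i+2\}\mid\{2i+1\})_{\rho_i}\le\epsilon^2$ and $I(\{2i-1\}:\{2i+1,2i+2\}\mid\{2i\})_{\rho_i}\le\epsilon^2$; these are exactly the conditional-mutual-information conditions at the two interior cuts of the four-site window, so (in the $\epsilon=0$ case) each $\rho_i$ is a nearest-neighbor quantum Markov chain along the line. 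The local consistency condition makes $\rho_i$ and $\rho_{i+1}$ agree on the shared cell $B_{i+1}$. The goal is therefore to produce a global quantum Markov chain $\sigma$ on sites $1,\dots,n$ that restricts to each $\rho_i$.

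I would establish the exact case ($\epsilon=0$) by induction, maintaining a global Markov chain $\sigma^{(k)}$ on sites $1,\dots,2k+2$ with $\sigma^{(k)}\big|_{\overline{A_j}}=\rho_j$ for all $j\le k$. To extend to $\sigma^{(k+1)}$, apply the Petz recovery map $\mathcal{R}_{2k+2\to 2k+2,B_{k+2}}$ that appends the cell $B_{k+2}=\{2k+3,2k+4\}$ conditioned on the single site $2k+2$, built from $\rho_{k+1}$. Two facts must be checked. That the new cluster marginal comes out correctly, $\sigma^{(k+1)}\big|_{\overline{A_{k+1}}}=\rho_{k+1}$, follows from the Hayden--Jozsa--Petz--Winter structure theorem together with the consistency of the $B_{k+1}$ marginal, and uses precisely the cluster condition $I(\{2k+1\}:\{2k+3,2k+4\}\mid\{2k+2\})_{\rho_{k+1}}=0$. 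That the previously built marginals are undisturbed, $\sigma^{(k+1)}\big|_{1,\dots,2k+2}=\sigma^{(k)}$, uses that $\sigma^{(k)}$ is itself a Markov chain: it block-decomposes across the conditioning site $2k+1$, so preservation reduces to the statement that the recovery's round-trip channel $\Tr_{B_{k+2}}\circ\mathcal{R}$ fixes the shared-cell marginal $\rho_{k+1}^{B_{k+1}}$, which again follows from the same Markov condition. The remaining per-cluster condition $I(\{2k+1,2k+2\}:\{2k+4\}\mid\{2k+3\})_{\rho_{k+1}}=0$ guarantees that $\sigma^{(k+1)}$ is a genuine global Markov chain, so that the block decomposition is available at the next step. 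Thus both Markov conditions of each cluster are consumed, and $\sigma:=\sigma^{(n/2-1)}$ is exactly consistent.

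For the $\epsilon$-version I would replace the exact structure theorem by the Fawzi--Renner approximate-recovery inequality: $I(A:C\mid B)\le\epsilon^2$ guarantees a recovery map $\mathcal{R}_{B\to BC}$ with $\|\rho^{ABC}-\mathcal{R}_{B\to BC}(\rho^{AB})\|_1=O(\epsilon)$, which is exactly why the Markov condition is posed with $\epsilon^2$. I would run the identical sweep, now replacing exact consistency by the allowed $O(\epsilon)$ trace-distance mismatch on overlaps. Each of the $O(n)$ append steps injects $O(\epsilon)$ error into the relevant marginals, while the monotonicity of trace distance under the recovery CPTP maps and under partial trace prevents earlier errors from being amplified. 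Telescoping triangle inequalities then bounds $\|\sigma^{\overline{A_i}}-\rho_i\|_1$ by $O(n\epsilon)$, uniformly in $i$, with a constant that is dimension-independent because the Fawzi--Renner bound is dimension-free; this is the asserted $cn\epsilon$-consistency.

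The main obstacle is the approximate marginal-preservation step. In the exact proof it rests on the brittle \emph{exact} block decomposition of $\sigma^{(k)}$ across the conditioning site, and in the approximate setting this structure only holds up to an error I must quantify. I would handle it by propagating the error through a \emph{fixed} sequence of recovery channels whose references are the given cluster data $\rho_{k+1}$ (rather than the running state $\sigma^{(k)}$), and by invoking contractivity so that the per-step errors accumulate additively rather than multiplicatively; a robust (Fawzi--Renner) version of the recovery applied to $\sigma^{(k)}$ itself supplies the approximate block structure. The delicate point is to verify that the conditional mutual informations invoked at each step are exactly the ones bounded by the local Markov condition of the relevant cluster, and not larger quantities over extended regions that the hypotheses do not control, so that the constant $c$ neither grows with $n$ beyond the linear factor nor picks up powers of the local dimension $d$.
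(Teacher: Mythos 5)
Your construction of the global state is the same as the paper's (Eq.~\ref{eq:proposed_state_1D} is exactly your left-to-right sweep of recovery maps conditioned on the single shared site), and your exact-case ($\epsilon=0$) induction is essentially correct: it is the standard merging of consistent quantum Markov chains, and your bookkeeping of which of the two Markov conditions per cluster is consumed where is accurate. The gap is in the step you yourself flag as the main obstacle, the approximate preservation of previously built marginals, and your proposed fix does not close it. Tracing the appended cell back out replaces $\sigma^{(k)}$ by $(I\otimes\Lambda)(\sigma^{(k)})$, where $\Lambda=\Tr_{B_{k+2}}\circ\mathcal{R}$ acts on the conditioning site; $\Lambda$ approximately fixes the two-site marginal $\rho_{k+1}^{B_{k+1}}$, but to conclude that $I\otimes\Lambda$ approximately fixes the full $(2k{+}2)$-site state you need the (approximate) block decomposition of $\sigma^{(k)}$ across the conditioning site. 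That structure is not available: (i) small conditional mutual information does \emph{not} imply closeness to a state with the exact Koashi--Imoto block structure (only approximate recoverability, which is a strictly weaker statement); and (ii) the hypotheses bound CMIs only of the given four-site marginals $\rho_i$, not of the running $n$-site state $\sigma^{(k)}$, and transferring a CMI bound to $\sigma^{(k)}$ from a trace-distance bound costs Alicki--Fannes-type $\log(\dim)$ factors, which would make your constant $c$ depend on $d$ and on $n$, contrary to the claim.

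The paper avoids this entirely with a different mechanism that your proposal is missing: the \emph{cell exchange} relation $[i]^R[i+1]^L\aeq[i+1]^L[i]^R$ (Lemma~\ref{lemma:1D_cell_exchange}), each instance justified purely by a four-site Markov condition of a \emph{given} marginal, lets one rewrite the string so that the state appears to have been built from the right end leftward. After this reversal, the cells to be discarded are ``first-added'' in the new representation, and the backward contraction (Lemma~\ref{lemma:1D_backward_contraction}) removes them by a relation that again only ever compares two-cell strings to the given cluster marginals; errors then accumulate additively by contractivity of CPTP maps and the triangle inequality, with no reference to any global (approximate) Markov structure of the running state. To repair your argument you would need to replace the ``round-trip channel fixes the running state'' step by this reordering device (or an equivalent one); as written, the telescoping $O(n\epsilon)$ bound is not established.
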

The size, i.e., the number of vertices in each clusters, is $O(1)$, which means that each of the marginals is supported on a $d^{O(1)}$-dimensional space. Therefore, the total number of parameters that define these states is $O(nd^{O(1)})$, up to a multiplicative factor that depends on the precision.\footnote{We can choose it to be this factor to be $O(\log n)$ to ensure $1/\text{poly}(n)$ accuracy. This additional factor does not change our conclusion.}

Now let us write down the constraints and discuss their physical relevance. The local Markov conditions are of the following form:
\begin{equation}
\begin{aligned}
I( \{2i-1,2i \} : \{2i+2 \} | \{2i+1 \}) &\leq \epsilon^2 \\
I(\{2i+1,2i+2 \}: \{2i-1 \} | \{2i \}) &\leq \epsilon^2.
\end{aligned}
\end{equation}
This means that we demand $I(A:C|B)\leq \epsilon^2$ for three contiguous subsystems $A,B,$ and $C$ that are sitting side by side on a line. By vieweing the $d-$dimensional particle as a collection of $l$ elementary particles on a line, we can invoke the so called scaling law of entanglement entropy.\cite{Eisert2010} This ``law'' is a conjecture that, the von Neumann entropy of such subsystems in certain states can be decomposed into a sum of local contributions so that each of these local contributions can be canceled out from different choices of subsystems, modulo a correction term that decays exponentially in $l$.\cite{Grover2010} If this conjecture is true, then every local contribution from $S(\rho^{ABC})$, $S(\rho^{BC})$, $S(\rho^B)$, and $S(\rho^{ABC})$ cancels each other out, and what remains is a term that decays exponentially in $l$.\footnote{See Ref.\cite{Kato2016} for a recent progress in proving a similar statement for thermal states.} For states that obey such a scaling law, one can choose $l= \Theta(\log n)$ so that the corresponding Markovian marginal is $\frac{1}{\text{poly}(n)}$-consistent and has $\text{poly}(n)$ number of parameters. Under this circumstance, a Markovian marginal would be an efficient classical description of some state that is approximately consistent with the given marginals.\footnote{Of course, we cannot guarantee the uniqueness of the state. However, if the uniqueness is truly necessary, one can simply choose the state to be the maximum entropy state that is consistent with the given marginals. }

\subsection{Main Result: $D=2$}
We present our second main result, drawing parallels with our first result from time to time. The underlying graph behind this construction is not planar, so it will be more convenient to depict the physical space and then define the graph in terms of this physical space. Consider a bounded region of a triangular lattice superimposed with its dual lattice; see FIG.\ref{fig:physical_space_2d} The vertices of the graph consists of the faces, and an edge exists if and only if the faces corresponding to the vertices are adjacent to each other.
\begin{figure}[h]
\includegraphics[width=3in]{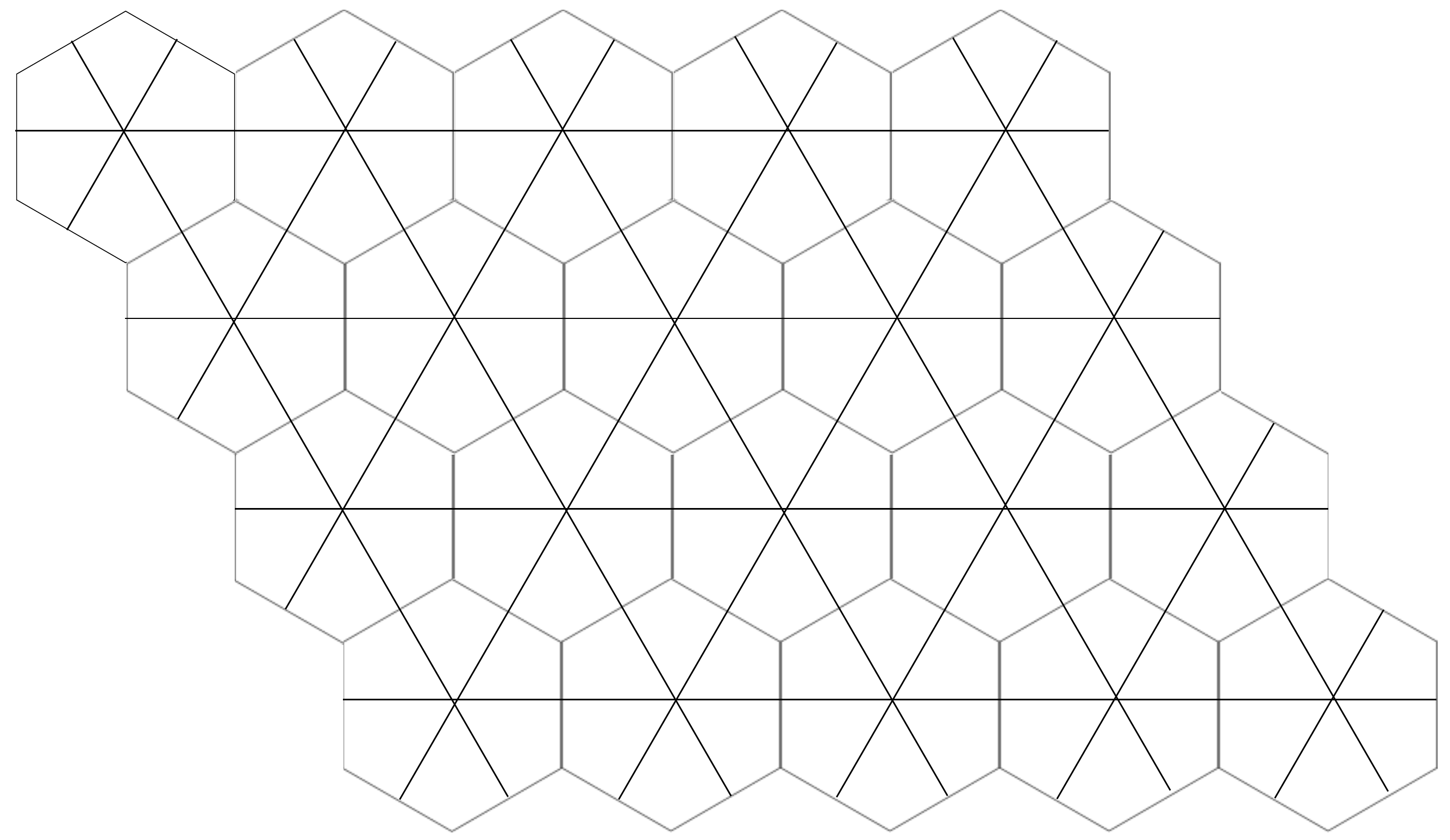}
\caption{Each of the faces can be thought as a partition of the physical space that contains bounded number of physical degrees of freedom. The faces in this diagram corresponds to the vertices of the graph $G=(V,E)$. An edge exists between two vertices if the corresponding faces are adjacent to each other.\label{fig:physical_space_2d} }
\end{figure}

Now we discuss the partition of this graph; see FIG.\ref{fig:partition_2d} The original physical space in FIG.\ref{fig:physical_space_2d} is partitioned into a set of hexagons, and the set of vertices that correspond to the faces of these hexagons are labeled by tuples of integers $[i,j]$, which represents the coordinate of the center of the hexagons. Specifically, $[i,j]$ represents a hexagon at coordinate $(i-\frac{1}{2}j, \frac{\sqrt{3}}{2}j)$.
\begin{figure}[h]
\includegraphics[width=3in]{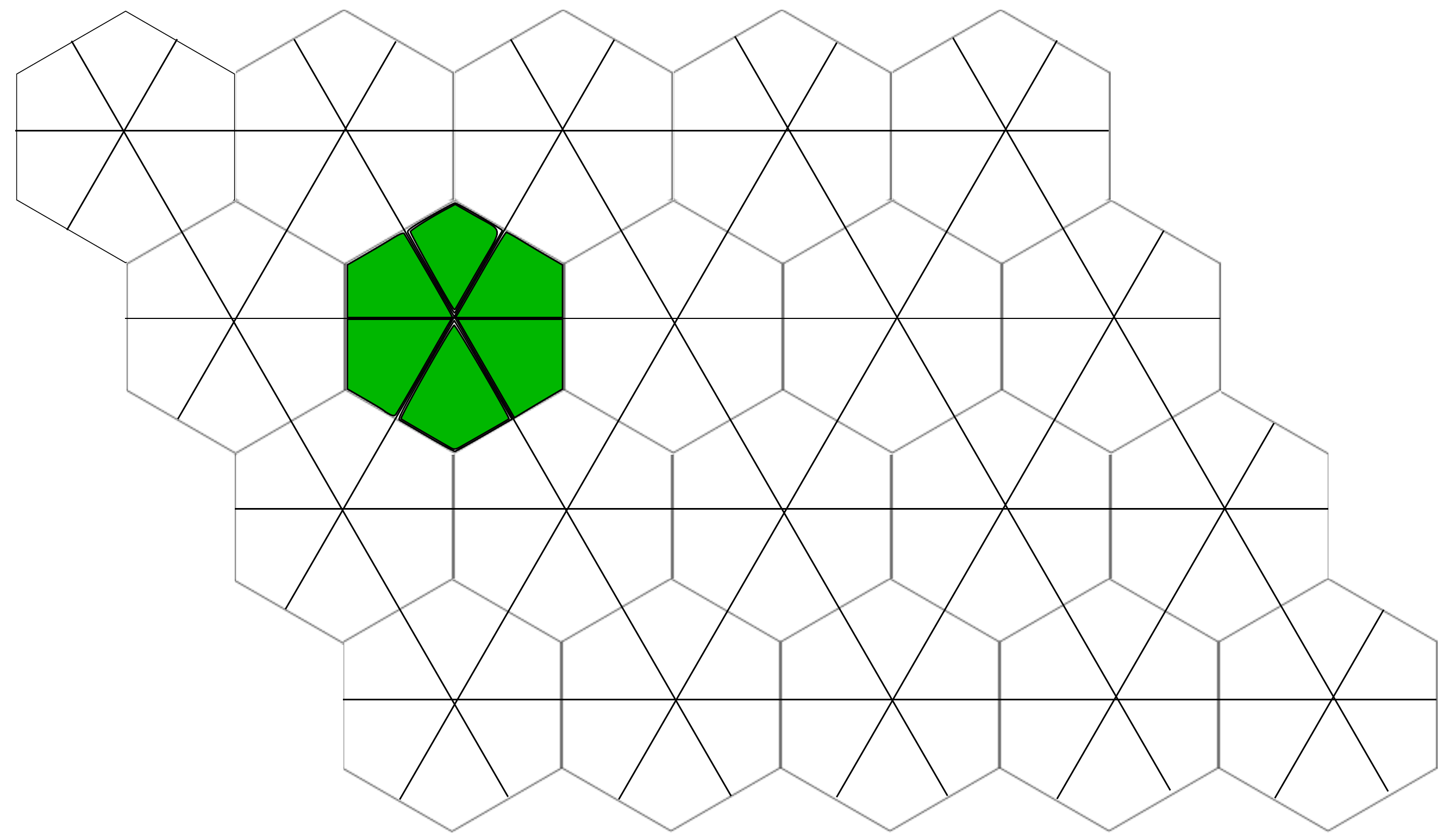}
\caption{The original physical space is partitioned into a set of hexagons. Each of the hexagons are labeled by tuples of integers $[i,j].$ These tuples represent the cells of the partition, and also the location of the center of the hexagon. For reference, we set the center of the hexagon left bottom corner to be $(0,0)$ and the distance between the center of the hexagons to be $1.$ \label{fig:partition_2d}}
\end{figure}

Now we can state our main result.
\begin{thm}
An $\epsilon-$Markovian marginal $(\mathcal{P},\mathcal{C},\mathcal{M})$ with $\mathcal{P}$ described in FIG.\ref{fig:partition_2d}, $\mathcal{C}= \mathcal{C}_3 \bigcup \mathcal{C}_4$, where
\begin{equation}
\begin{aligned}
\mathcal{C}_3 &= \{ \{[i,j], [i+1,j], [i,j+1] \}, \{[i+1,j], [i+1,j+1], [i,j+1] \}| i,j = 1, \cdots, n-1\} \\
\mathcal{C}_4 &= \{\{[i,j], [i+1,j], [i,j+1], [i+1,j+1] \}|i,j = 1, \cdots, n-1 \}
\end{aligned}
\end{equation}
is $cn^2\epsilon$-consistent, where $c$ is independent of $d$ and $n$. \label{thm:2D}
\end{thm}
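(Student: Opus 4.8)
The plan is to mirror the strategy behind Theorem \ref{thm:1D}: by the formalism developed in Section \ref{section:formalism}, the construction of a globally consistent $\sigma$ reduces to a combinatorial problem, namely exhibiting an order in which the cells are adjoined one at a time, each adjunction being implemented by a local recovery map that is licensed by the marginals and the local Markov conditions. Concretely, I would build $\sigma$ as a composition of CPTP maps, $\sigma = \Phi_N \circ \cdots \circ \Phi_1(\sigma_0)$, where $\sigma_0$ is a seed marginal and each $\Phi_k$ adjoins one new hexagon $a$ to the already assembled region $\Omega$. Each $\Phi_k$ is a (rotated) Petz recovery map reconstructing the joint state of $a$ and its \emph{interface} $S \defeq \mathcal{N}(a)\cap\Omega$ from a known cluster marginal, acting only on $S$ while creating $a$. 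The target state built this way is, by design, an approximate quantum Markov network, exactly as the left-to-right sweep of Theorem \ref{thm:1D} produces an approximate Markov chain.

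The heart of the proof is the choice of this order, and here the triangular-dual geometry of FIG.\ref{fig:partition_2d} is what makes both cluster families necessary. I would adjoin the hexagons $[i,j]$ in order of increasing $i+j$ (anti-diagonals). When $[i,j]$ is adjoined, its neighbors already present are exactly the three ``backward'' cells $[i-1,j-1]$, $[i,j-1]$, $[i-1,j]$, since the three ``forward'' neighbors $[i+1,j]$, $[i,j+1]$, $[i+1,j+1]$ lie on later anti-diagonals. Thus the interface $S=\{[i-1,j-1],[i,j-1],[i-1,j]\}$ is precisely a triangular cluster of $\mathcal{C}_3$, while $S\cup\{[i,j]\}=\{[i-1,j-1],[i,j-1],[i-1,j],[i,j]\}$ is precisely a rhombus cluster of $\mathcal{C}_4$. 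The rhombus marginal supplies the joint state of the new cell with its interface that the recovery map needs, and the triangle marginal supplies the interface state itself; the subset clause of the local consistency condition (for $\bar{A}\subset\bar{B}$) guarantees these nest exactly, so no mismatch is introduced. In one dimension a single family of two-cell clusters sufficed because each interface was a single cell; in two dimensions the triangle clusters furnish the separators and the rhombus clusters close them into cliques containing the new cell, which is why $\mathcal{C}_3$ and $\mathcal{C}_4$ appear together. Boundary cells with $i=1$ or $j=1$ have truncated interfaces and are handled analogously with the same or smaller clusters.

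With the order fixed, two points remain. First, because each $\Phi_k$ acts as the identity away from $S\cup\{a\}$, adjoining later cells never disturbs a cluster marginal that has already been set, and the local consistency condition guarantees the maps built from different marginals agree on overlaps, so the composition is well defined. Second, the errors must accumulate additively: each adjunction is an approximate recovery whose trace-distance error is controlled by the relevant conditional mutual information, which the local Markov condition bounds by $\epsilon^2$, yielding $O(\epsilon)$ per step by the dimension-independent approximate-recovery bound for quantum Markov chains; this dimension independence is what makes the final constant $c$ independent of $d$. Monotonicity of $\|\cdot\|_1$ under CPTP maps together with the triangle inequality then bounds $\|\sigma^{\bar A}-\rho^{\bar A}\|_1$ by the sum of the per-step errors over the $O(n^2)$ cells, giving $cn^2\epsilon$ uniformly in $\bar A$.

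The main obstacle I expect is the propagation of the Markov property, which is also the subtlety already present in the one-dimensional case. The local Markov condition only certifies that $a$ is screened from the remainder of its own cluster by $\mathcal{N}(a)\cap\bar A$, whereas the recovery step requires $a$ to be conditionally independent of the \emph{entire} assembled region $\Omega$ given the interface $S$. Graph-theoretically $S$ does separate $[i,j]$ from $\Omega\setminus S$, so this independence is the global Markov property of the partially assembled state; the delicate task is to show inductively that $\sigma_\Omega$ genuinely retains an approximate quantum Markov-network structure along the advancing front, so that separator-conditioning yields approximate conditional independence with an error that merely adds to, rather than compounds with, the errors already incurred. Since the front in two dimensions has length $O(n)$, certifying that the recovery errors stay additive across this growing boundary, and that the anti-diagonal order keeps every interface equal to an honest $\mathcal{C}_3$ separator sitting inside a $\mathcal{C}_4$ clique, is where the real work of the argument lies.
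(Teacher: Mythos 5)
Your proposed state is, up to the handling of boundary cells, the same one the paper uses: your anti-diagonal order and the paper's row-by-row order agree on which of any two \emph{adjacent} cells comes first, so the corresponding polymorphic extensions can be commuted into one another by the manifest relations, and each interior cell is indeed adjoined by reading a $\mathcal{C}_3$ triangle inside a $\mathcal{C}_4$ rhombus. The gap is that you have only constructed the candidate $\sigma$ and named the difficulty; you have not proved consistency, and the mechanism you propose for doing so is not one that the hypotheses support. The error of a recovery step relative to the \emph{assembled} state would be controlled by $I\bigl([i,j]:\Omega\setminus S\,\big|\,S\bigr)$ evaluated on a global state that does not exist yet; the local Markov condition only bounds conditional mutual informations \emph{inside single clusters} (and for your interface, where all three cells of $S$ are neighbors of the new cell, the condition inside the rhombus is actually vacuous --- the ``far'' subsystem is empty). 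So the assertion that each step costs $O(\epsilon)$ ``by the approximate-recovery bound'' begs the question, and your fallback --- inductively establishing that $\sigma_\Omega$ is an approximate quantum Markov network along the advancing front --- is precisely the unproven claim, not a route to it; naively the error in the interface marginal feeds into the next recovery map and could compound.

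The paper avoids ever establishing a global Markov property. Instead it fixes the cluster $\bar A$ whose marginal is to be checked and rewrites the string representing $\sigma^{\bar A}$ using a hierarchy of \emph{derived relations}, each an approximate identity between two short strings justified solely by local Markov conditions of the \emph{given} marginals plus local consistency: the inherited relations (Lemmas \ref{lemma:2D_inheritance}--\ref{lemma:2D_contraction_UR}), the row-level forward contraction, exchange, internal reversal and backward contraction (Lemmas \ref{lemma:2D_internal_reversal}--\ref{lemma:2D_backward_row_contraction}), and finally the supercell relations (Lemmas \ref{lemma:2D_forward_supercell_contraction}--\ref{lemma:2D_backward_supercell_contraction}). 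Because the relation $S\aeq S'$ implies $SS''\aeq S'S''$ (CPTP maps are norm-nonincreasing), each of the $O(n^2)$ rewrites contributes $O(\epsilon)$ additively, which is exactly the bookkeeping your proposal asserts but does not justify. In particular, the nontrivial content --- that tracing out cells adjoined \emph{after} the target cluster returns you to the cluster marginal (the backward contractions), and that the order of rows and of cells within rows can be reversed at controlled cost (exchange and internal reversal) --- is where the theorem is actually proved, and none of it appears in your argument.
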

Again, each clusters contain at most $O(1)$ particles of dimension $d$, which means that each of the marginals is supported on a $d^{O(1)}$-dimensional space. The total number of parameters that define these states is then $O(n^2d^{O(1)})$.

Let us write down the local Markov constraints and see if they are physically reasonable. The conditions from $\mathcal{C}_3$ are of the following form:
\begin{equation}
I(A:C|B) \leq \epsilon^2,
\end{equation}
where $A,B,$ and $C$ are two-dimensional regions that are depicted in FIG.\ref{fig:CMI2D_3}
\begin{figure}[h]
\subfigure[]{\includegraphics[width=1.5in]{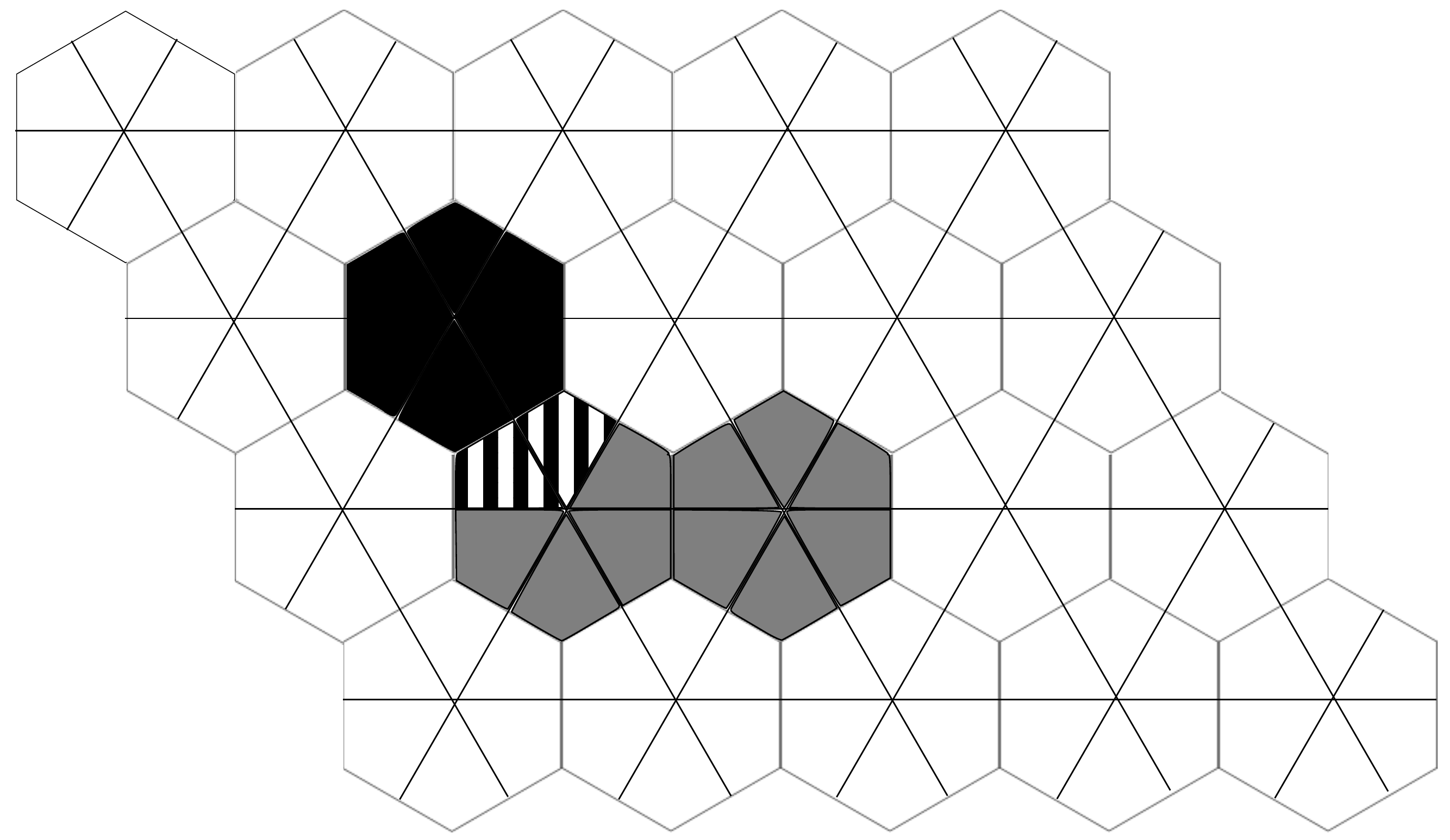}\label{fig:CMI2D_3-1}}
\subfigure[]{\includegraphics[width=1.5in]{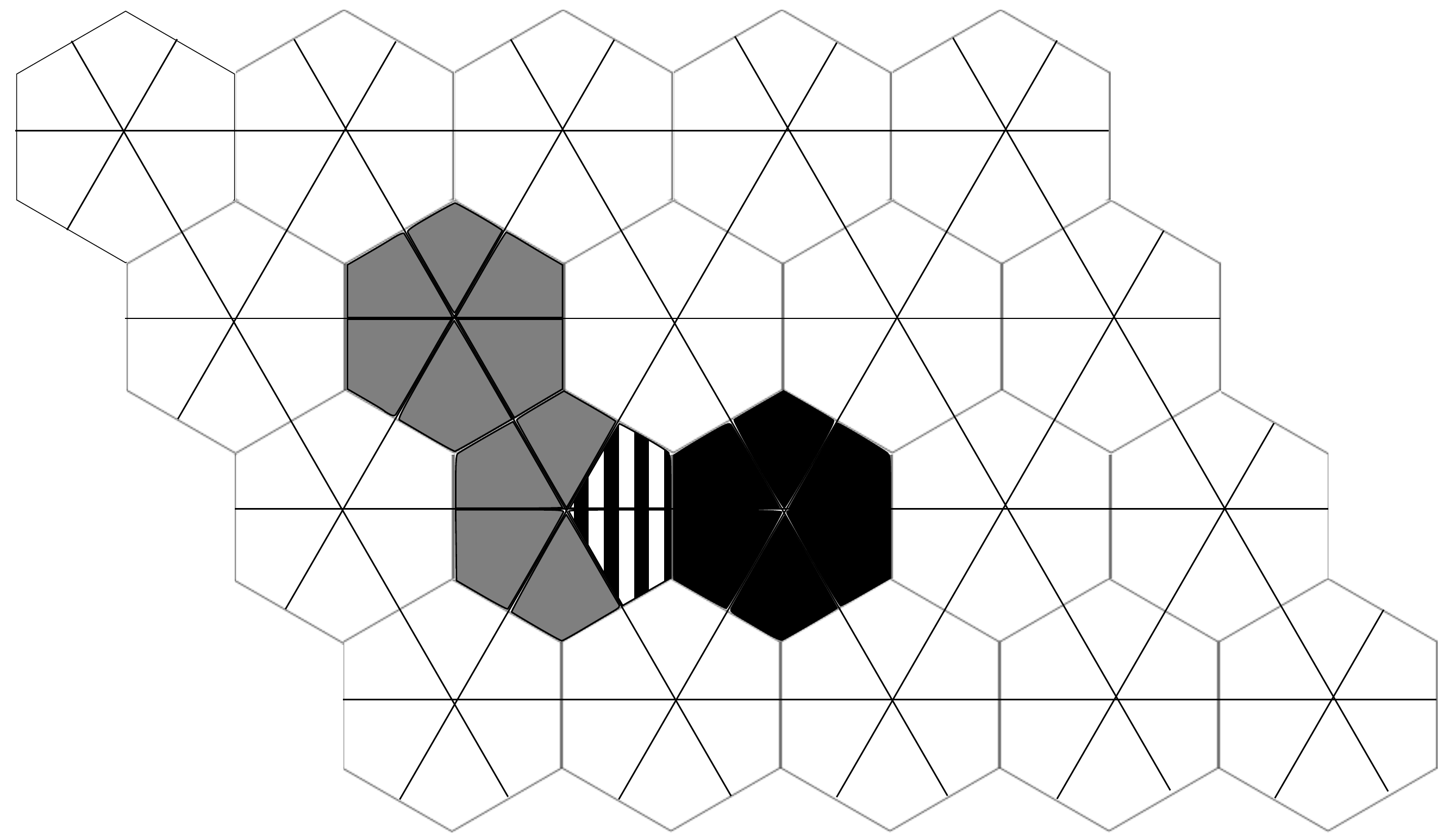}\label{fig:CMI2D_3-2}}
\subfigure[]{\includegraphics[width=1.5in]{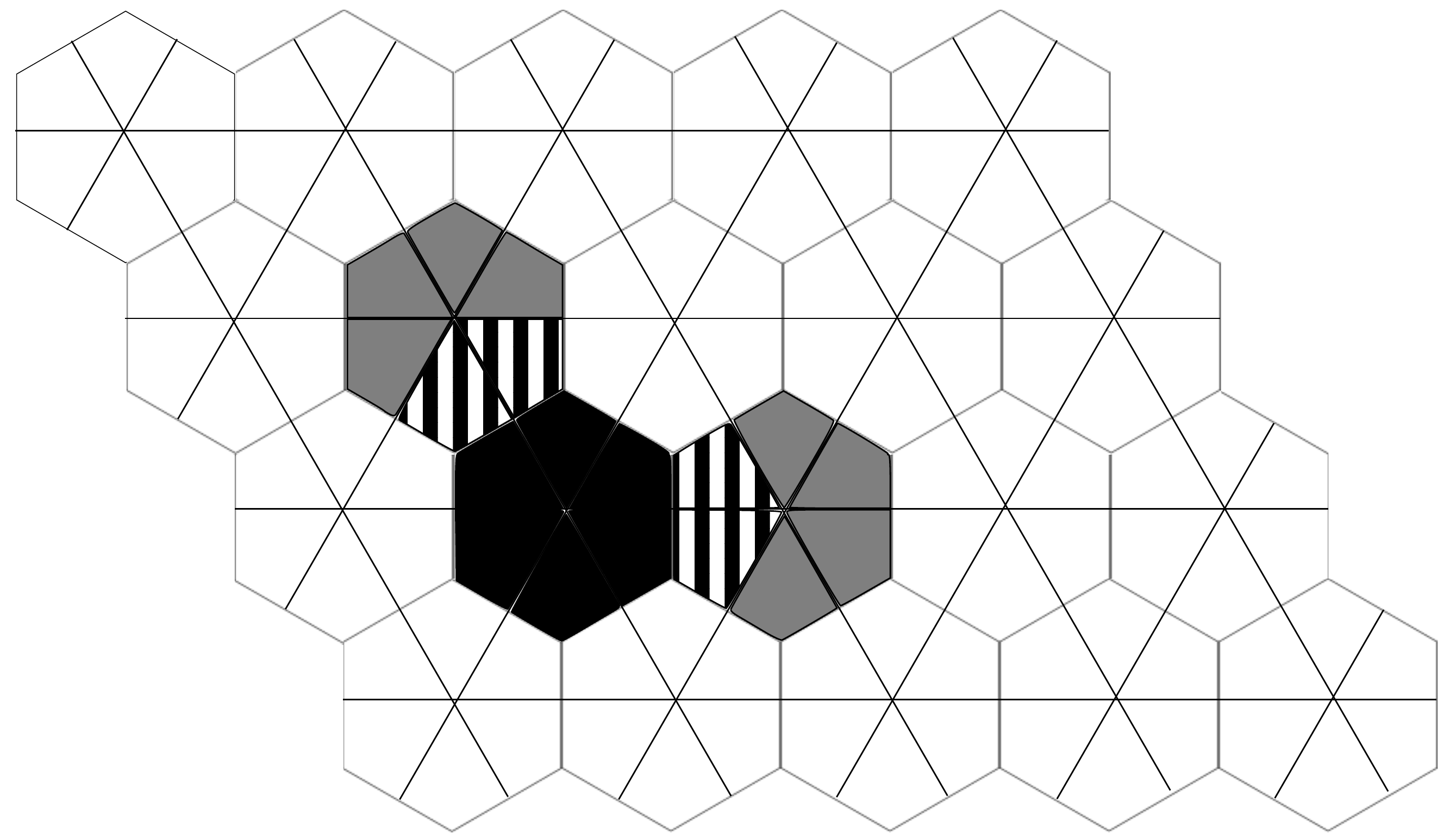}\label{fig:CMI2D_3-3}}

\subfigure[]{\includegraphics[width=1.5in]{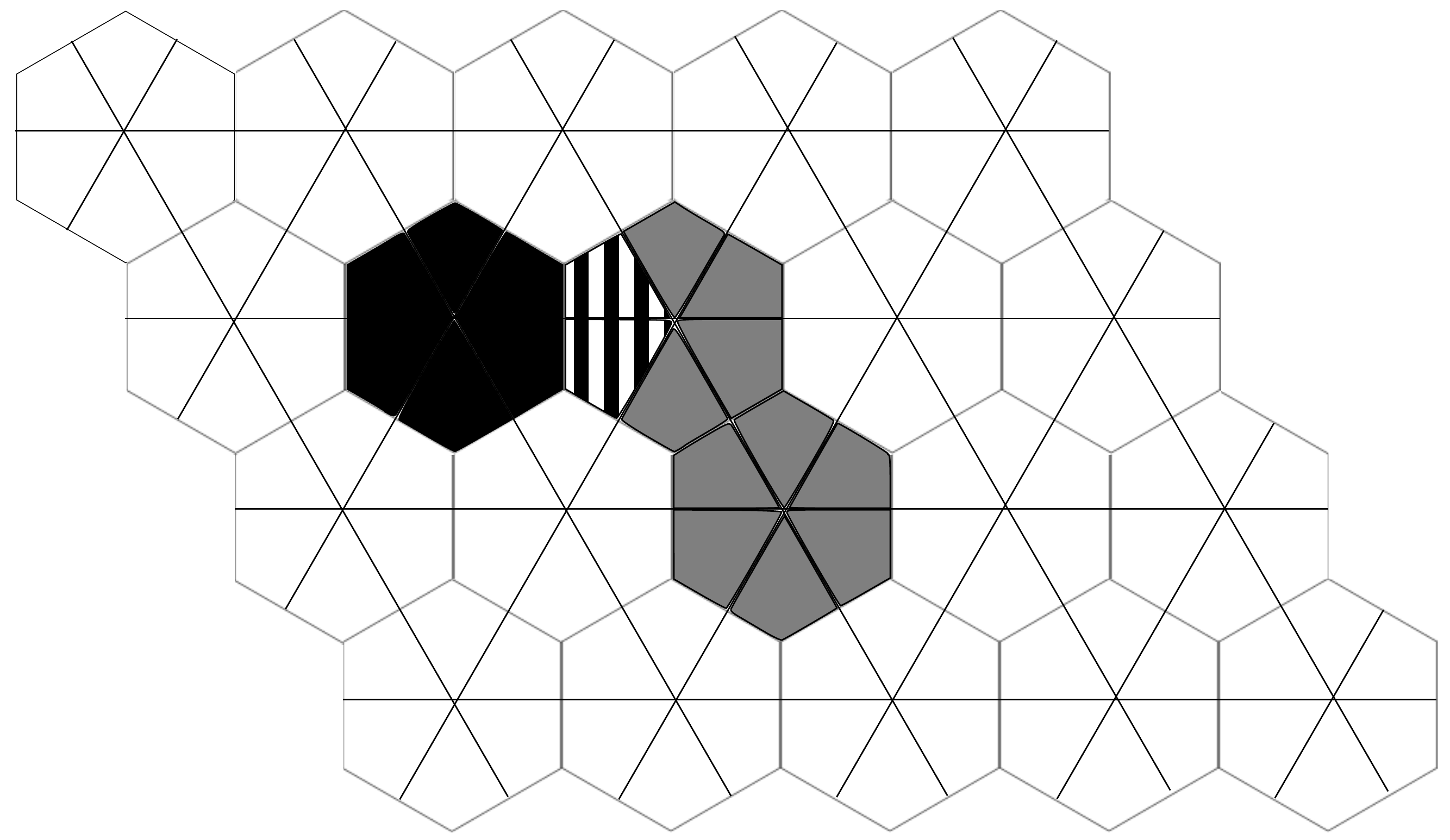}\label{fig:CMI2D_3-4}}
\subfigure[]{\includegraphics[width=1.5in]{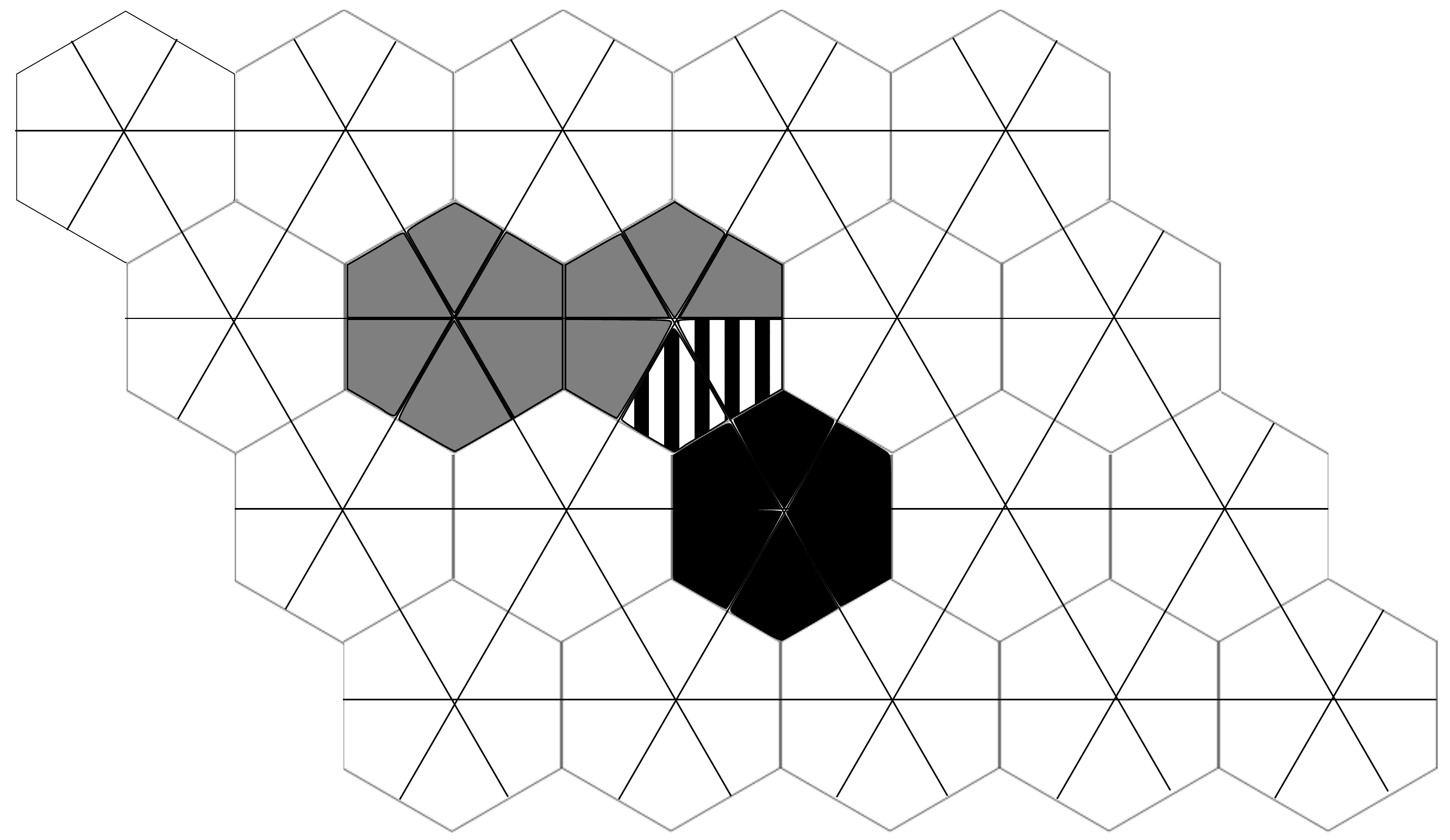}\label{fig:CMI2D_3-5}}
\subfigure[]{\includegraphics[width=1.5in]{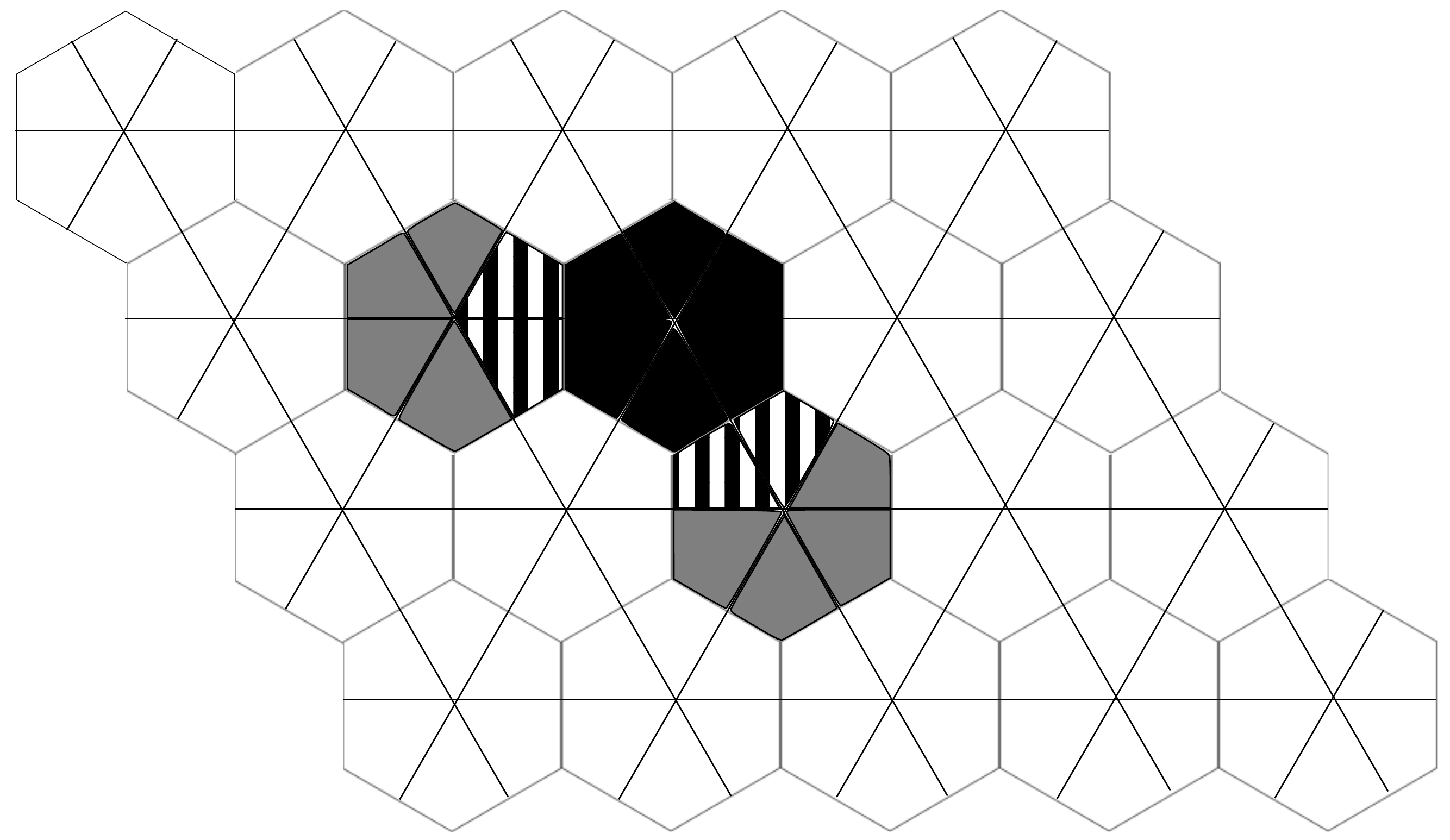}\label{fig:CMI2D_3-6}}
\caption{The dark region is $A$, the striped region is $B$, and the gray region is $C$; see Eq.\ref{eq:CMI}. \label{fig:CMI2D_3}}
\end{figure}
It is interesting to apply the existing formulae for the entanglement entropy and compute the conditional quantum mutual information for a set of subsystems depicted in FIG.\ref{fig:CMI2D_3}. According to a certain physical argument, in ground states of systems that have a spectral gap to the excited states, the entropy of these regions are given by the following expression:\cite{Kitaev2006,Grover2010}
\begin{equation}
S(\rho^A) = \alpha l - n(A)\gamma + e^{-O(l)},\label{eq:TEE}
\end{equation}
where $l$ is the perimeter of a region $A$ and $n(A)$ is the number of connected components.\footnote{It is assumed that each of the connected components is simply connected.} If we view each of the $d$-dimensional particles to be a collection of $O(l^2)$ elementary particles, plugging in Eq.\ref{eq:TEE} leads to the following conclusion:
\begin{equation}
I(A:C|B) = e^{-O(l)}
\end{equation}
for the choices of $A,B,$ and $C$ in FIG.\ref{fig:CMI2D_3}. Therefore, for such states, the local Markov condition that is implied by $\mathcal{C}_3$ would be reasonable so long as $l$ is chosen to be $\Theta(\log \frac{1}{\epsilon})$.

Now let us move on to the local Markov conditions that follow from $\mathcal{C}_4.$ Again the condition reads as $I(A:C|B)\leq\epsilon^2$ for certain choices of $A,B,$ and $C$. All the possibilities are depicted in FIG.\ref{fig:CMI2D_4}.
\begin{figure}[h]
\subfigure[]{\includegraphics[width=1.5in]{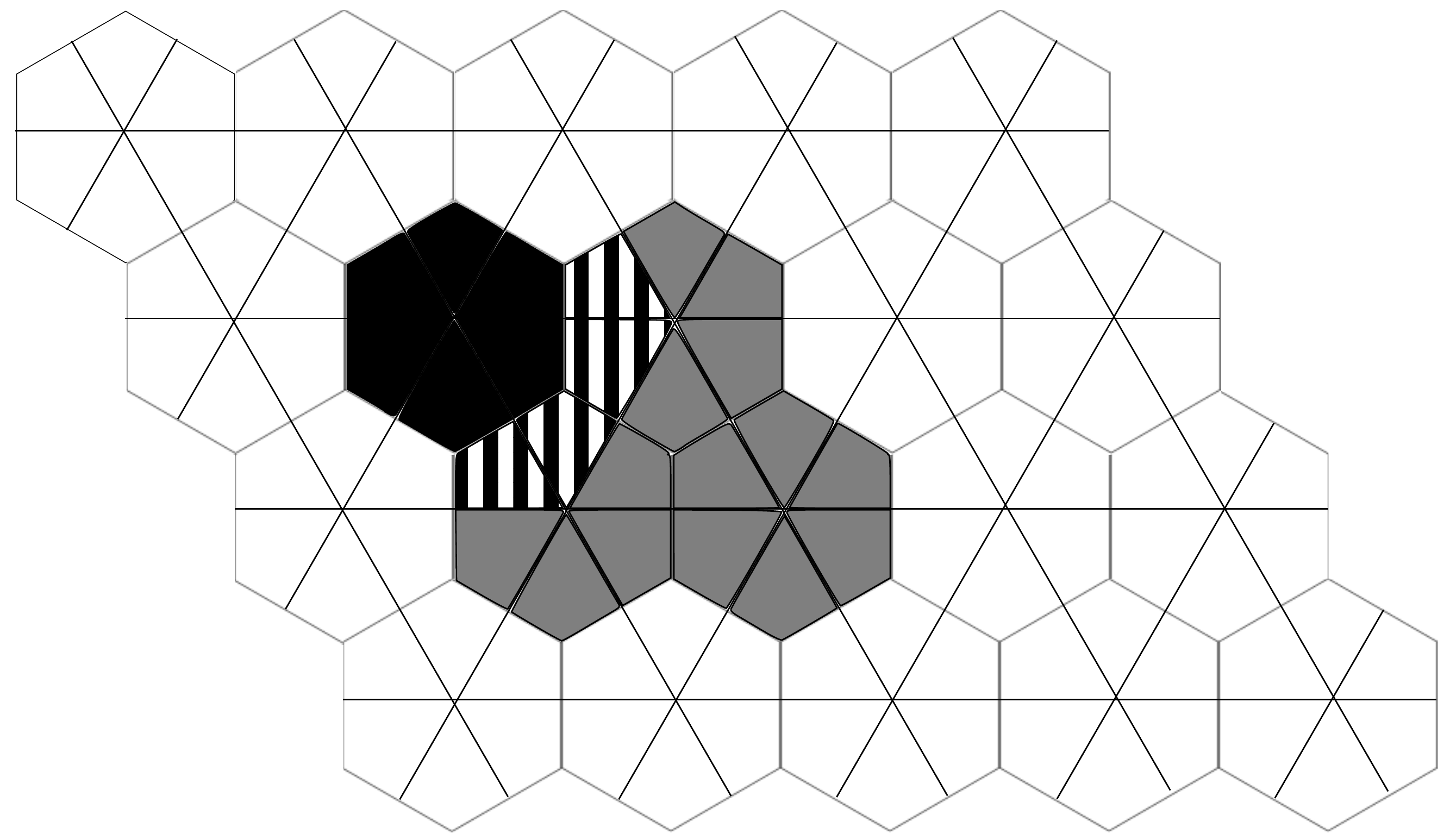}\label{fig:CMI2D_4-1}}
\subfigure[]{\includegraphics[width=1.5in]{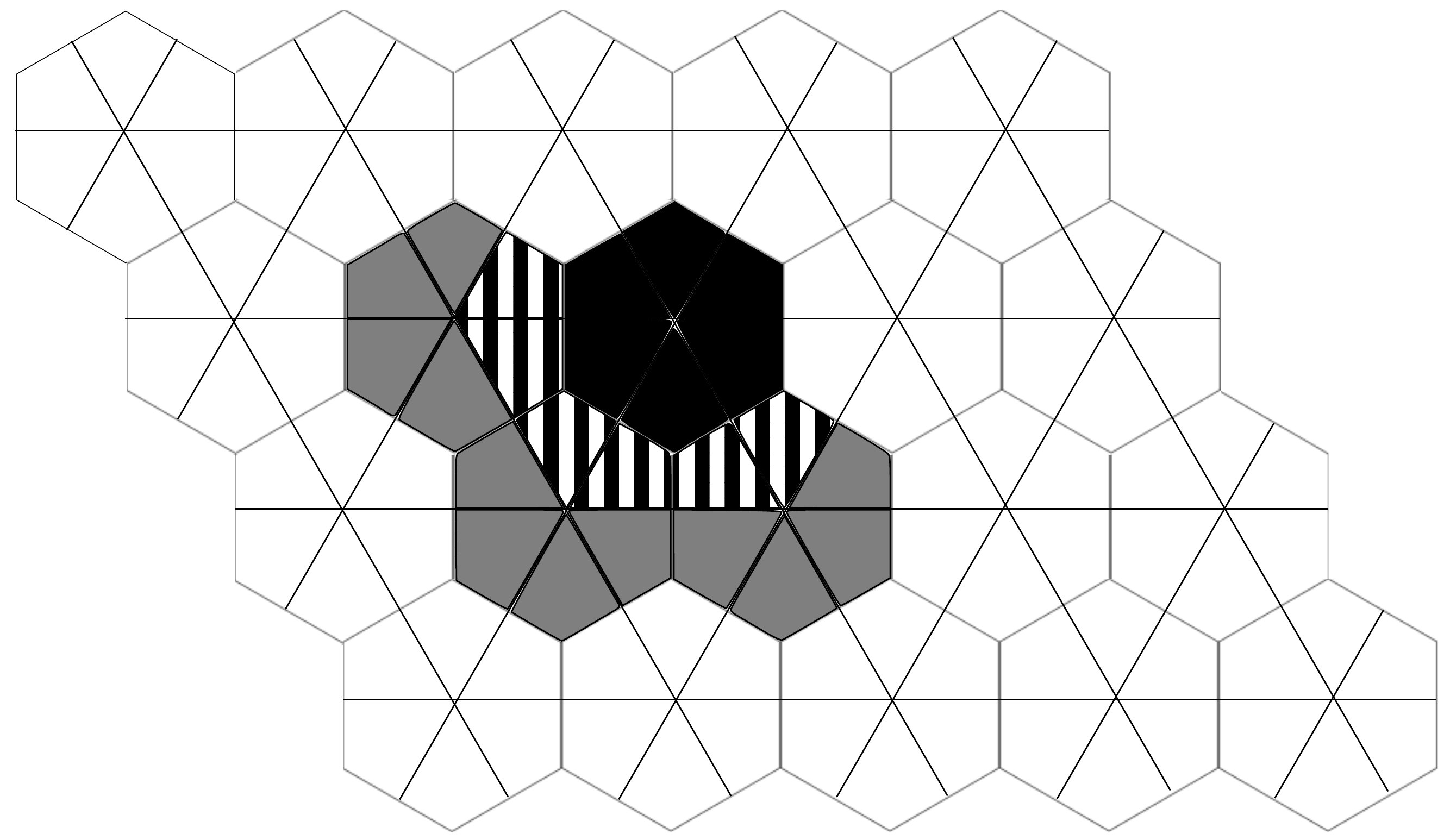}\label{fig:CMI2D_4-2}}
\subfigure[]{\includegraphics[width=1.5in]{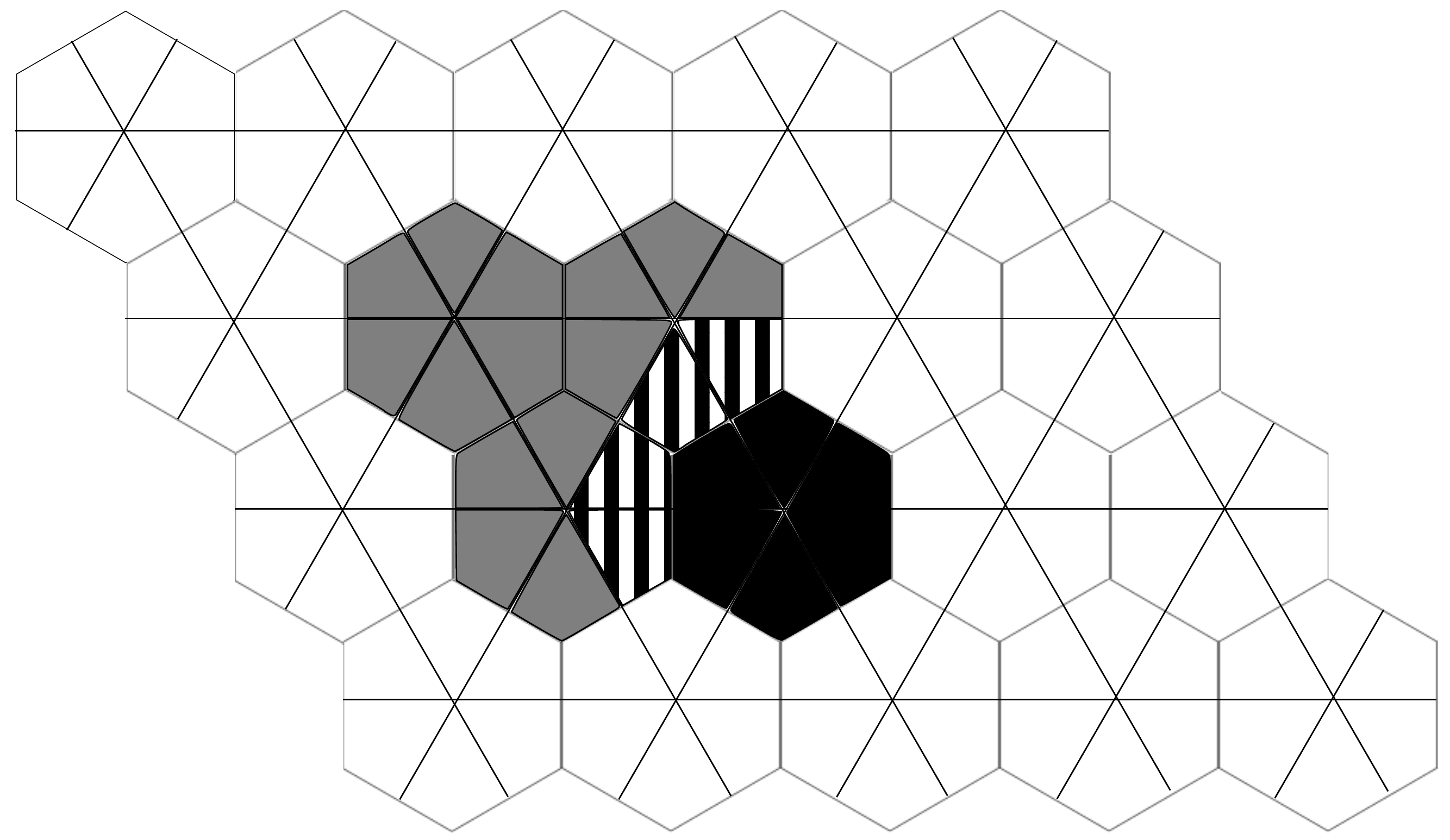}\label{fig:CMI2D_4-3}}
\subfigure[]{\includegraphics[width=1.5in]{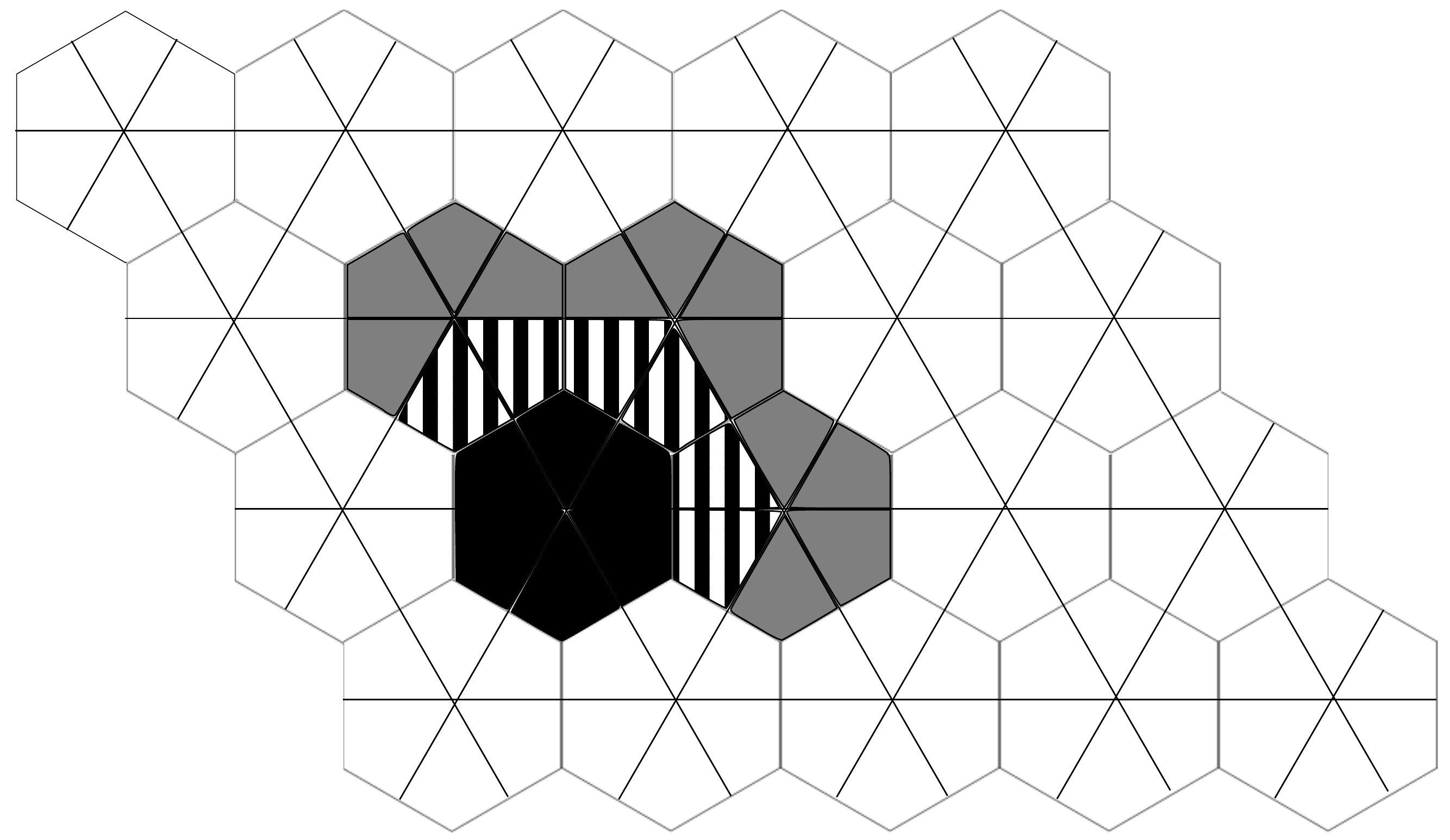}\label{fig:CMI2D_4-4}}
\caption{The dark region is $A$, the striped region is $B$, and the gray region is $C$; see Eq.\ref{eq:CMI}.\label{fig:CMI2D_4}}
\end{figure}
Again, by choosing each of the $d$-dimensional particles to be a collection of $O(l^2)$ elementary particles and plugging in Eq.\ref{eq:TEE}, we conclude that $I(A:C|B)= e^{-O(l)}$ for the choices of subsystems in FIG.\ref{fig:CMI2D_4}.

The conclusion that we can draw from this exercise is that the local Markov condition we demand for the Markovian marginal is a physically reasonable and well-motivated one, at  least for many well-studied models of quantum many-body systems with a spectral gap. Furthermore, by choosing $l=\Theta(\log n)$, the corresponding Markovian marginal becomes $\frac{1}{\text{poly}(n)}$-consistent and has $O(\exp(c \log^2 n))$ number of parameters. (Here $c$ is some numerical constant.) This leads to an exponential reduction on the number of parameters, as well as the computation time for computing the expectation values of local observables, at least compared to exact methods. We prove these claims in the remaining part of the paper.

\section{Formalism\label{section:formalism}}
The main objective of our formalism is to reduce the marginal problem, which is a problem of an algebraic nature, to a combinatorial problem. This procedure can be divided largely into two steps. In the first step, we introduce a family of partial functions that are well-defined for any Markovian marginals. In the second step, we show that these partial functions obey a certain set of identities. At that point, what remains is an application of such identitites, which shall be extensively used in Section IV to prove our main results.

The partial functions are defined in two steps. We first define a family of maps that act on different spaces of states, and then we define two different families of partial functions, $\mathfrak{C}_a: \mathcal{D}_{\text{loc}} \pfun \mathcal{D}_{\text{loc}}$ and $\mathfrak{E}_{a^A}: \mathcal{D}_{\text{loc}} \pfun \mathcal{D}_{\text{loc}}$, which are defined in terms of these linear maps. The linear maps acting on different spaces of states are the so called \emph{universal recovery maps}.\cite{Sutter2015}. By defining a partial function that is constructed out of a collection of such maps, we shall define what we call as \emph{polymorphic extensions}($\mathfrak{E}_{a^A}$) and \emph{polymorphic contractions}($\mathfrak{C}_a$).

The identitites that we derive are identities involving the polymorphic extensions and contractions. We shall refer to these identities as \emph{relations}. There are two types of relations. The type of the first kind is called as \emph{manifest relations}, and these identities hold exactly. The type of a second kind is called as \emph{derived relations}, and they are only guaranteed to be correct up to an error of order $O(\epsilon)$. The first type follows directly from the definition of the partial functions, wheras the second type follows from the local consistency and the local Markov condition. For this reason, the derivation of the manifest relations will be identical for both of our main results. On the other hand, the derivation of the derived relations shall be different. The proof of the manifest relations can be applied to both the $D=1$ and the $D=2$ case, whereas the derived relations are derived separately. The manifest relations are summarized in Table \ref{table:1D} and \ref{table:2D}. The derived relations shall appear in Section \ref{section:Proof1} and \ref{section:Proof2}; see also Table \ref{table:derived_1D}, \ref{table:inheritance}, and \ref{table:row_relations} for the summary. Both Theorem \ref{thm:1D} and \ref{thm:2D} heavily rely on extensive use of these relations.

\subsection{Universal recovery maps}
We first begin by defining the so called universal recovery maps. Recently, a great deal of advance has been made on the structure of tripartite states with a small conditional quantum mutual information.\cite{Fawzi2015,Wilde2015,Sutter2015,Junge2015} A relevant result to our paper is due to Sutter, Fawzi, and Renner:
\begin{thm}
\cite{Sutter2015} There exists a CPTP map $\Phi_B^{BC}: \mathcal{A}_{B} \to \mathcal{A}_{BC}$ such that
\begin{equation}
-2\log F(\rho^{ABC}, I_A \otimes \Phi_B^{BC}(\rho^{AB})) \leq I(A:C|B).
\end{equation}
In particular, $\Phi_B^{BC}$ can be defined only in terms of $\rho^{BC}$, and not $\rho^{ABC}$. \label{thm:universal_recovery}
\end{thm}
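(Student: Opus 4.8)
The plan is to recognize this as the Sutter--Fawzi--Renner recoverability bound specialized to a partial trace, and to reduce it to the general statement that the loss of relative entropy under a CPTP map is controlled by how well a \emph{universal} recovery map undoes the map. Concretely, I would set $\rho = \rho^{ABC}$, $\sigma = I_A \otimes \rho^{BC}$ (an unnormalized positive operator, which is harmless for relative entropy), and $\mathcal{N} = \Tr_C$. Using $\log(I_A\otimes\rho^{BC}) = I_A\otimes\log\rho^{BC}$ and $\mathcal{N}(\sigma) = I_A\otimes\rho^B$, a direct computation gives
\begin{equation}
D(\rho\|\sigma) - D(\mathcal{N}(\rho)\|\mathcal{N}(\sigma)) = \big(S(\rho^{BC}) - S(\rho^{ABC})\big) - \big(S(\rho^{B}) - S(\rho^{AB})\big) = I(A:C|B).
\end{equation}
Thus the theorem follows once I establish, for general $\rho,\sigma,\mathcal{N}$, a recovery map $\mathcal{R}_{\sigma,\mathcal{N}}$ depending only on $\sigma$ and $\mathcal{N}$ with
\begin{equation}
D(\rho\|\sigma) - D(\mathcal{N}(\rho)\|\mathcal{N}(\sigma)) \geq -2\log F\big(\rho, \mathcal{R}_{\sigma,\mathcal{N}}(\mathcal{N}(\rho))\big).
\end{equation}
Because here $\mathcal{N}^\dagger$ is the embedding $X \mapsto X \otimes I_C$ and both $\sigma$ and $\mathcal{N}(\sigma)$ depend only on $\rho^{BC}$, the resulting $\mathcal{R}$ factorizes as $I_A \otimes \Phi_B^{BC}$ with $\Phi_B^{BC}$ a function of $\rho^{BC}$ alone, which is exactly the ``in particular'' clause.

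Next I would construct the recovery map explicitly as an average of \emph{rotated Petz maps}. Writing $\mathcal{U}_{\omega,t}(X) = \omega^{it/2}X\,\omega^{-it/2}$, letting $\mathcal{P}_{\sigma,\mathcal{N}}(\cdot) = \sigma^{1/2}\,\mathcal{N}^\dagger\!\big(\mathcal{N}(\sigma)^{-1/2}(\cdot)\mathcal{N}(\sigma)^{-1/2}\big)\sigma^{1/2}$ be the Petz map, and setting $\mathcal{P}^{[t]} = \mathcal{U}_{\sigma,t}\circ\mathcal{P}_{\sigma,\mathcal{N}}\circ\mathcal{U}_{\mathcal{N}(\sigma),-t}$, I define
\begin{equation}
\mathcal{R}_{\sigma,\mathcal{N}} = \int_{-\infty}^{\infty} dt\,\beta_0(t)\,\mathcal{P}^{[t]}, \qquad \beta_0(t) = \frac{\pi}{2}\big(\cosh(\pi t)+1\big)^{-1}.
\end{equation}
Each $\mathcal{P}^{[t]}$ is CPTP and depends only on $\sigma,\mathcal{N}$, and $\beta_0$ integrates to $1$, so $\mathcal{R}_{\sigma,\mathcal{N}}$ is a legitimate universal recovery channel. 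The payoff of averaging is that, by joint concavity of the fidelity in its second argument, $F(\rho,\mathcal{R}_{\sigma,\mathcal{N}}(\mathcal{N}(\rho))) \geq \int dt\,\beta_0(t)\,F\big(\rho,\mathcal{P}^{[t]}(\mathcal{N}(\rho))\big)$, so it suffices to lower bound the $\beta_0$-averaged fidelity of the \emph{individual} rotated Petz maps.

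The technical engine is complex interpolation. I would introduce an operator-valued function $G(z)$, analytic and bounded on the strip $0 \leq \mathrm{Re}(z)\leq 1$, built from the powers $\sigma^{z}$, $\mathcal{N}(\sigma)^{z}$ together with $\rho$ and $\mathcal{N}(\rho)$ (realizing $\mathcal{N}^\dagger$ through a Stinespring isometry), designed so that a suitable norm of $G$ at an interior point reproduces a difference of sandwiched R\'enyi divergences $\widetilde{D}_\alpha(\rho\|\sigma) - \widetilde{D}_\alpha(\mathcal{N}(\rho)\|\mathcal{N}(\sigma))$, while on the boundary line $z = it$ it reproduces $-2\log F(\rho,\mathcal{P}^{[t]}(\mathcal{N}(\rho)))$. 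Applying the Hirschman strengthening of the Hadamard three-line theorem produces exactly the weight $\beta_0(t)$ as $\alpha\to 1$, and this limit replaces the R\'enyi divergences by the von Neumann relative entropies. Combining the three-line bound with the concavity step of the previous paragraph then yields the desired inequality.

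I expect the main obstacle to be precisely this interpolation step: assembling the correct analytic family $G(z)$, verifying its boundedness and the matching of its boundary and interior values to the fidelity and R\'enyi quantities, and carefully justifying the $\alpha\to 1$ limit. Handling $\mathcal{N}^\dagger$ inside the analytic continuation (via the Stinespring dilation) and controlling support and invertibility issues when $\sigma$ or $\mathcal{N}(\sigma)$ are singular are the delicate points. The conceptual leap over the classical Petz argument---which only bounds a \emph{measured} relative entropy, hence a weaker notion of recovery---is that the rotation-and-three-line mechanism upgrades the bound to the genuine fidelity $F$; getting that upgrade to go through cleanly is the crux.
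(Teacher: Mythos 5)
The paper does not prove this statement at all: Theorem \ref{thm:universal_recovery} is imported verbatim from Ref.~\cite{Sutter2015} and used as a black box, so there is no internal proof to compare against. Your outline is nonetheless a faithful roadmap of how the result is actually established in the literature. The reduction is correct: with $\sigma = I_A\otimes\rho^{BC}$ and $\mathcal{N}=\Tr_C$ one indeed gets $D(\rho\|\sigma)-D(\mathcal{N}(\rho)\|\mathcal{N}(\sigma)) = I(A:C|B)$, and the rotated Petz maps built from $\sigma$ and $\mathcal{N}(\sigma)$ all factorize as $I_A\otimes(\cdot)$ with the nontrivial factor depending only on $\rho^{BC}$, which delivers the ``in particular'' clause. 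The chain of inequalities (Hirschman three-line bound for the $\beta_0$-averaged fidelity of individual rotated Petz maps, then joint concavity of fidelity to pass to the single averaged channel) is the correct logical order and is known to close. Two remarks: first, what you describe is really the later, sharper route of Junge--Renner--Sutter--Wilde--Winter (and Sutter--Berta--Tomamichel), which produces an \emph{explicit} universal recovery map; the original Sutter--Fawzi--Renner paper cited here obtained universality by a compactness/minimax (Sion) argument on top of the Fawzi--Renner bound, without an explicit formula. Either route suffices for the theorem as stated. Second, as you acknowledge, the entire analytic engine --- constructing the bounded analytic family $G(z)$, matching its boundary values to $-2\log F(\rho,\mathcal{P}^{[t]}(\mathcal{N}(\rho)))$ and its interior value to a difference of sandwiched R\'enyi divergences, and controlling the $\alpha\to 1$ limit and support issues --- is deferred, so what you have is a correct and well-targeted plan rather than a self-contained proof; since the paper itself only cites the result, that is an acceptable level of detail here.
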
\footnote{Here $F(\rho, \sigma)= \| \rho^{\frac{1}{2}} \sigma^{\frac{1}{2}}\|_1$ is the fidelity.}
The map $\Phi_B^{BC}$ is called as a universal recovery map from $B$ to $BC.$ Some of its useful properties include $\Phi_{B}^{BC}(\rho^B) = \rho^{BC}$ and the fact that it is norm-nonincreasing. Also, by invoking the standard relation between fidelity and the trace norm, for $I(A:C|B)\leq \epsilon^2$, we can infer that $\|\rho^{ABC} - I_A \otimes \Phi_B^{BC}(\rho^{AB}) \|_1 \leq O(\epsilon)$.

In the context of the Markovian marginal, these universal recovery maps will be defined in terms of the given marginals. Since the state over each clusters are uniquely defined, these maps can be unambiguously defined by declaring the cluster that supports the marginal.
\begin{defi}
For an $\epsilon-$Markovian marginal $(\mathcal{P},\mathcal{C}, \mathcal{M})$, for $A\in \mathcal{C}$, $BC\subset \bar{A}$
\begin{equation}
[\mathcal{R}_A]_B^{BC} := \Phi_B^{BC},
\end{equation}
where $\Phi_B^{BC}$ is a map defined in Theorem \ref{thm:universal_recovery} in terms of a reduced state of $\rho^{\bar{A}}\in \mathcal{M}$ over $BC.$
\end{defi}

\subsection{Polymorphic extensions and contractions}
We can motivate the content of this Section by posing the following question. For a tripartite system $ABC$, does the partial trace operation on $A$ and $B$ commute? The answer is no. The partial trace operation on $\mathcal{B}(\mathcal{H}_A \otimes \mathcal{H}_B \otimes \mathcal{H}_C)$ is formally $\Tr_A \otimes I_{BC}$, where $\Tr_A$ is the trace over the subsystem $A$ and $I_{BC}$ is the identity superoperator over $BC$. The partial trace operation over $B$ is formally $\Tr_B \otimes I_{AC}$. We cannot compose thse two maps because the domain of one map does not match the codomain of the other map. Nevertheless, there is a sense in which their order does not matter. The exact identity is the following:
\begin{equation}
(\Tr_B \otimes I_C) \circ (\Tr_A \otimes I_{BC}) = (\Tr_A \otimes I_C) \circ (\Tr_B \otimes I_{AC}).
\end{equation}
None of the maps involved in this identity are equal to each other.

The notion of polymorphic contraction was invented to simplify these identities. In order to do that, we should consider a partial function $\mathfrak{C}_a: \mathcal{D}_{\text{loc}} \to \mathcal{D}_{\text{loc}}$ which is defined as follows.
\begin{defi}
For a cell $a\in \mathcal{P}$
\begin{equation}
\mathfrak{C}_a(\rho^X) = I_{X\setminus a} \otimes \Tr_a(\rho^X)
\end{equation}
for $a\subset X$.
\end{defi}\footnote{Of course, $a$ does not necessarily have to be a cell, but this is all we need in this paper.}
Now, $\mathfrak{C}_a \circ \mathfrak{C}_b = \mathfrak{C}_b \circ \mathfrak{C}_a$ whenever the expression is well-defined. This makes the sense in which the partial trace operations ``commute'' more precise.

The purpose of the polymorphic extension is to make the notion of extending state precise. Let us start with the definition.
\begin{defi}
For a cell $a\in \mathcal{P}$, a cluster $A\in \mathcal{C}$, and $X\subset V$ such that $X\cap a = \emptyset$ and $\bar{A} \supset (\mathcal{N}(a)\cap X) \cup a$,
\begin{equation}
\mathfrak{E}_{a^A}(\rho^X) = I_{X\setminus \mathcal{N}(a)} \otimes [\mathcal{R}_{A}]_{\mathcal{N}(a)\cap X}^{(\mathcal{N}(a)\cap X)\cup a}(\rho^X).
\end{equation}
\end{defi}
Again $\mathfrak{E}_{a^A}$ is a partial function from $\mathcal{D}_{\text{loc}}$ to $\mathcal{D}_{\text{loc}}$.
Physically, a polymorphic extension sends a state supported on $X$ to a state supported on $X\cup a$ by applying a universal recovery map in the vicinity of $a$. The choice of the universal recovery map depends on $X$ as well as $a$, because the relevant subsystem from which the universal recovery map is defined($(\mathcal{N}(a)\cap X)\cup a$) depends on both of these sets. This is to be contrasted with the polymorphic contractions, where the relevant CPTP map is fixed to be the partial trace operation. By composing polymorphic extensions, one can gradually grow  the state until it is supported on $V$. Conversely, by composing polymorphic contractions, one can send the given state to another state in a smaller region. By defining the polymorphic extesnion, we no longer have to specify the full domain and the codomain of the universal recovery map. Once we specify the cell($a$) and the cluster that contains this cell($A$), the map is defined completely.

Now, we will develop a formalism to represent a sequence of polymorphic extensions and contractions. There are three types of data that we need to specify: the nature of the partial function, the choice of the cell, and the choice of the cluster. Obviously, these conventions greatly depend on the details of the Markovian marginal.

\subsection{Formalism for Theorem \ref{thm:1D}}
We develop a formalism to represent a variety of states that are defined in terms of the $\epsilon-$Markovian marginal described in Theorem \ref{thm:1D}. For each cell $[i]$, it is easy to verify that there are at most two clusters that contain the cell, e.g., $\{[i-1], [i] \}$ and $\{[i], [i+1] \}$;see FIG.\ref{fig:clusters_LR}. We shall refer to the first type as the left type and the second type as the right type. Once the choice of cell is clear from the context, we shall use $L$ and $R$ to refer to these clusters. We represent the polymorphic extensions and contractions with the following notation:
\begin{equation}
[i]^{I},
\end{equation}
where $[i]=\{2i-1,2i \}$ is one of the cells and $I=-1, L,$ or $R.$ If $I=-1$, the partial function is $\mathfrak{C}_{[i]}$. If $I=L$, the partial function is $\mathfrak{E}_{[i]^L}$. If $I=R,$ the partial function is $\mathfrak{E}_{[i]^{R}}$.
\begin{figure}[h]
\includegraphics[width=4in]{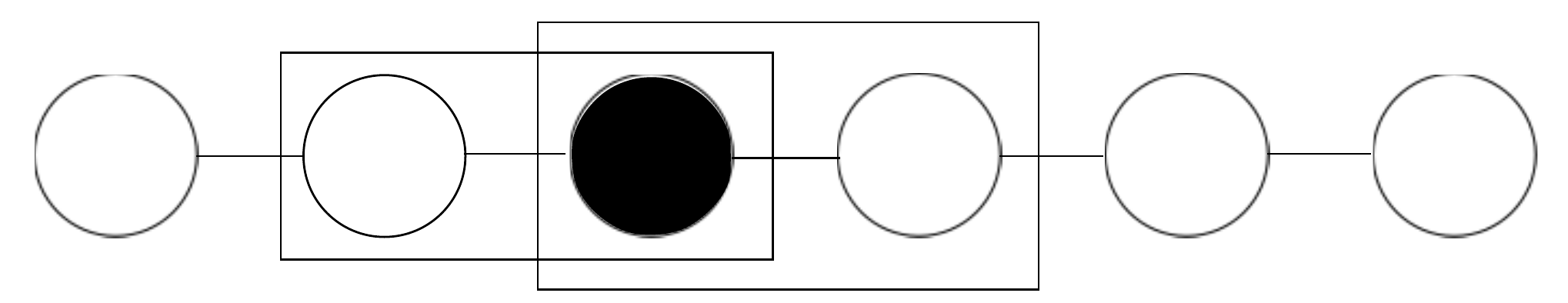}
\caption{Each cells(circles) either belong to the left or the right cluster. Once the cell is specified, the cluster can be completely specified by its position relative to the cell. For a cell $[i]$, its polymorphic extension defined in terms of the left cluster is $[i]^L$ and its polymorphic extension defined in terms of the right cluster is $[i]^R$.\label{fig:clusters_LR}}
\end{figure}

For specifying a state that is created by composing a sequence of such partial functions, we use strings with the following set of conventions. At the first step, the input to the partial function is assumed to be $1$, a scalar. In this case, the subsystem $B$ of the universal recovery map is an empty set. The resulting state is thus a reduced density matrix of the marginal that defines the universal recovery map, over a subsystem $C$. Since this rule applies to all the states we study, we do not explicitly specify the initial input. Next, we specify the sequence of partial functions by concatenating these symbols, starting from the left. The resulting string can be thought as an order in which the state is created. For example, $[1]^R[2]^L[3]^L$ represents a state over $[1]\cup [2] \cup [3]$ which is grown from $[1]$ to $[1]\cup [2]$, and then to $[1]\cup [2] \cup [3]$. Also, $[1]^R[2]^L[3]^L [2]^{-1}$ represents its reduced state over $[1]\cup [3]$.

\subsection{Formalism for Theorem \ref{thm:2D}}
Similar to the formalism developed for Theorem \ref{thm:1D}, we introduce a convention that specifies the cell, the cluster that includes the cell, and the choice of the partial function. We represent this data as follows:
\begin{equation}
[i,j]^I,
\end{equation}
where $[i,j]$ is the cell.(Recall that $i$ and $j$ represents the $x$ and the $y$ coordinate of the center of the hexagon that represents the cell.) If $I=-1$, it represents $\mathfrak{C}_{[i,j]}$. Otherwise this symbol represents a polymorphic extension whose underlying state is defined by a cluster. Since all the marginals on clusters consisting of three cells is completely determined by the marginals on clusters consisting of four cells, and because there are at most four such clusters that contain a given cell, we need a convention to specify these $4$ choices.
They are labeled by their relative position with respect to $[i,j]$. The relative position is specified in terms of four letters, $U, D, L,$ and $R$, each representing up, down, left, and right; see FIG.\ref{fig:clusters_UDLR}.

\begin{figure}[h]
\includegraphics[width=4in]{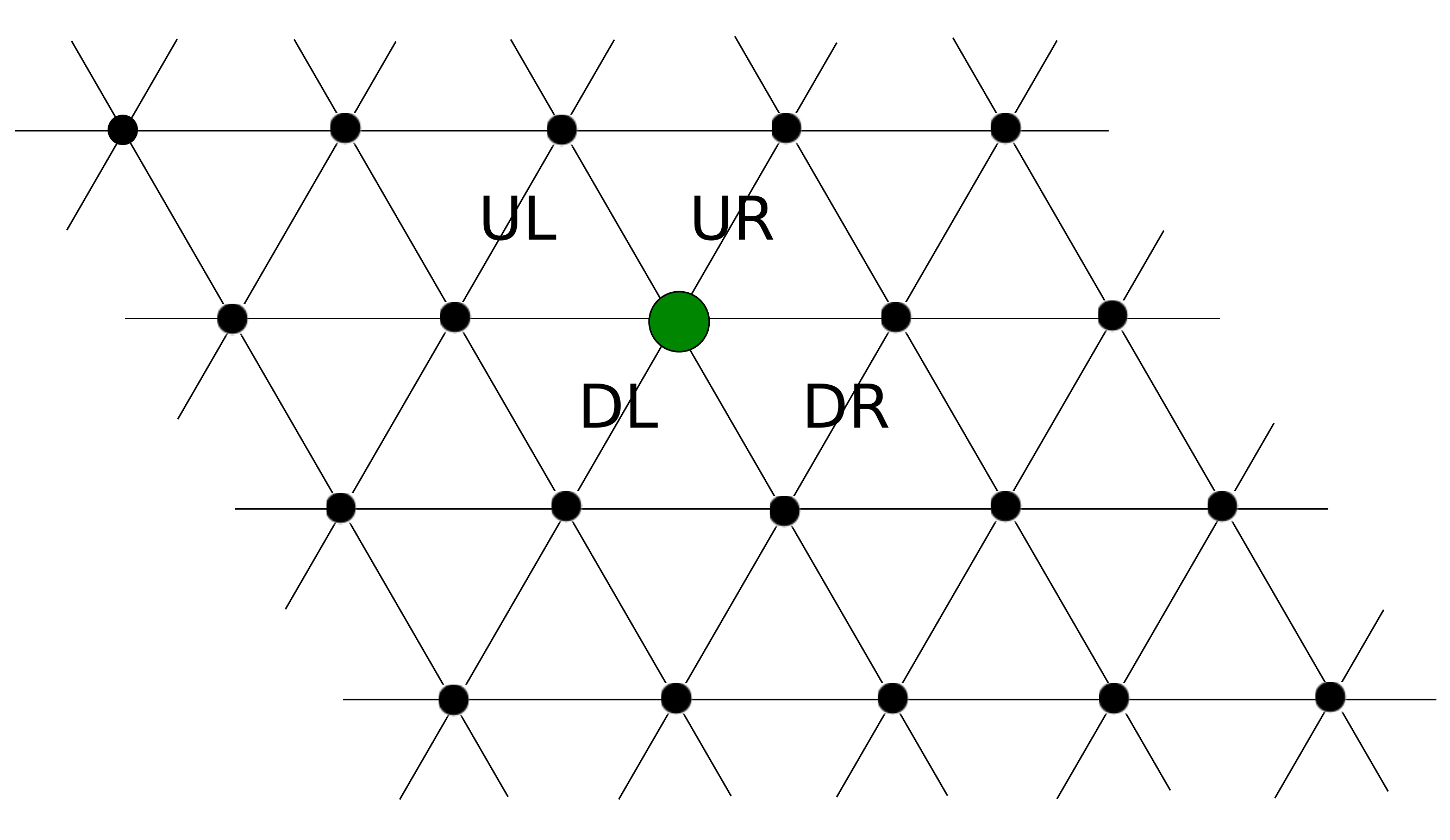}
\caption{A graph that represents the cells(circles) and their adjacency relations. For two cells $[i_1,j_1]$ and $[i_2,j_2]$, $[i_1,j_1]{-}[i_2,j_2]$ if and only if they are connected by an edge in this graph. For each cells, there are at most 4 possible clusters that contain the cell. Each of these clusters are either on the up-right, up-left, down-right, or down-left relative to the cell. The green circle is chosen as an example. Once a cell is specified, these clusters are referred to $UR$, $UL$, $DR$, and $DL$.\label{fig:clusters_UDLR}}
\end{figure}

Again, we suppress the composition symbol and specify the order in which the partial functions are applied. The leftmost symbol represents the first partial function and the rightmost symbol represents the last partial function. The initial input is always assumed to be a scalar $1$.

\subsection{Manifest relations}
The polymorphic extensions and contractions obey a certain set of relations. These relations are manifest in a sense that no extra assumption is necessary to ensure their validity. First, we note that $SS'' = S'S''$ for all $S=S'$ and $S''$, provided that the strings on both sides are well-defined. Furthermore, this relation is robust. If $S=S'$, it implies that the state represented by these strings are supported on the same space. Then the sequence of CPTP maps that are applied by the string $S''$ should be the same on both sides. Since these maps are norm-nonincreasing, $\|S-S' \|_1\leq O(\epsilon)$ implies
\begin{equation}
\| SS'' - S'S''\|_1\leq O(\epsilon).
\end{equation}
We shall use a short-hand notation of $S\aeq S'$ and $SS'' \aeq S'S''$ to denote these facts. The following fact, which is an exact identity, follows straightforwardly from the definition.
\begin{prop}
For all $\rho^X$ such that $X \supset a\cup b$, $\mathfrak{C}_a \circ \mathfrak{C}_b(\rho^X) = \mathfrak{C}_a \circ \mathfrak{C}_b(\rho^X)$.
\end{prop}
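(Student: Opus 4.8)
The plan is to prove the claim by unfolding the definition of the polymorphic contraction and reducing to the elementary commutativity of partial traces over disjoint tensor factors. (As printed, the two sides of the displayed equation are literally identical; the content evidently intended --- and already asserted in the surrounding discussion --- is the commutation relation $\mathfrak{C}_a \circ \mathfrak{C}_b(\rho^X) = \mathfrak{C}_b \circ \mathfrak{C}_a(\rho^X)$, which is what the argument below establishes.) First I would verify that both compositions are well-defined under the hypothesis $X \supset a \cup b$. Since $a$ and $b$ are distinct cells of the partition $\mathcal{P}$, they are disjoint, so $a \cap b = \emptyset$. The map $\mathfrak{C}_b$ requires $b \subset X$ and produces a state supported on $X \setminus b$; then $\mathfrak{C}_a$ requires $a \subset X \setminus b$, which holds because $a \subset X$ and $a \cap b = \emptyset$. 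The reversed order is justified symmetrically, and in both orderings the output is supported on $X \setminus (a \cup b)$.

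Next I would substitute the definition $\mathfrak{C}_c(\rho^Y) = I_{Y \setminus c} \otimes \Tr_c(\rho^Y)$ twice. The identity superoperator factors serve only to bookkeep the support and play no role in the map itself, so each side collapses to an iterated partial trace: the left-hand side equals $\Tr_a \circ \Tr_b(\rho^X)$ and the right-hand side equals $\Tr_b \circ \Tr_a(\rho^X)$. The remaining task is therefore purely kinematical and amounts to the commutation of partial traces over disjoint subsystems.

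For that core step, I would invoke the tensor-product structure: because $a$ and $b$ are disjoint, the Hilbert space factorizes as $\mathcal{H}_X = \mathcal{H}_a \otimes \mathcal{H}_b \otimes \mathcal{H}_{X \setminus (a \cup b)}$, and the trace over one tensor factor is defined independently of the trace over another. Fixing orthonormal bases $\{\ket{i}\}$ for $\mathcal{H}_a$ and $\{\ket{j}\}$ for $\mathcal{H}_b$, both $\Tr_a \circ \Tr_b$ and $\Tr_b \circ \Tr_a$ send $\rho^X$ to $\sum_{i,j} (\bra{i} \otimes \bra{j}) \rho^X (\ket{i} \otimes \ket{j})$, in which the order of the summations over $i$ and $j$ is immaterial. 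Hence the two iterated traces coincide, which yields the identity. I do not expect any genuine obstacle: this is a manifest relation and holds exactly. The only point demanding care is the domain/codomain matching of the composition --- ensuring the intermediate state lives on a region on which the second contraction is legal --- and this is handled entirely by the disjointness of cells guaranteed by the partition property.
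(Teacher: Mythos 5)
Your proposal is correct, and it correctly identifies that the displayed equation contains a typo (the intended claim is $\mathfrak{C}_a \circ \mathfrak{C}_b = \mathfrak{C}_b \circ \mathfrak{C}_a$). The paper offers no proof beyond asserting that the identity follows straightforwardly from the definition, and your argument --- unfolding $\mathfrak{C}_c(\rho^Y) = I_{Y\setminus c}\otimes \Tr_c(\rho^Y)$ and reducing to the commutation of partial traces over the disjoint tensor factors $\mathcal{H}_a$ and $\mathcal{H}_b$, with the well-definedness check supplied by disjointness of cells --- is exactly that straightforward derivation, matching the paper's own motivating discussion of the identity $(\Tr_B \otimes I_C)\circ(\Tr_A \otimes I_{BC}) = (\Tr_A \otimes I_C)\circ(\Tr_B \otimes I_{AC})$.
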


Similar relations hold between two polymorphic extensions, and also between a polymorphic extension and a polymorphic contraction, provided that the relevant CPTP maps are supported on disjoint subsystems.
\begin{prop}
If $(a\cup \mathcal{N}(a)) \cap (b\cup \mathcal{N}(b))=\emptyset$, $X\cap (a\cup b) = \emptyset$, $\mathcal{N}(a)\cap X \subset A$, and $\mathcal{N}(b)\cap X \subset B$, $\forall \rho^X$ $\mathfrak{E}_{a^A} \circ \mathfrak{E}_{b^B}(\rho^X) = \mathfrak{E}_{b^B} \circ \mathfrak{E}_{a^A}(\rho^X).$
\end{prop}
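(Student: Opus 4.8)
The plan is to reduce the claim to the elementary fact that two superoperators acting on disjoint tensor factors commute, once one has checked that both composite maps are well-defined. By definition each polymorphic extension is a universal recovery map tensored with an identity superoperator, so the entire content of the statement is that the two recovery maps in question are supported on disjoint collections of vertices and that neither extension disturbs the region on which the other acts.

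First I would unpack the hypothesis. From $(a\cup\mathcal{N}(a))\cap(b\cup\mathcal{N}(b))=\emptyset$ one reads off the four pairwise disjointness relations $a\cap b=\emptyset$, $a\cap\mathcal{N}(b)=\emptyset$, $\mathcal{N}(a)\cap b=\emptyset$, and $\mathcal{N}(a)\cap\mathcal{N}(b)=\emptyset$. Writing $N_a:=\mathcal{N}(a)\cap X$ and $N_b:=\mathcal{N}(b)\cap X$, these give $N_a\cap N_b=\emptyset$ and, more importantly, that adjoining the new cell does not change the neighborhood seen by the other map: $\mathcal{N}(a)\cap(X\cup b)=N_a$ and $\mathcal{N}(b)\cap(X\cup a)=N_b$. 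I would then verify the domain conditions of the polymorphic extension in both orders. For $\mathfrak{E}_{a^A}\circ\mathfrak{E}_{b^B}$: the map $\mathfrak{E}_{b^B}$ is applicable to $\rho^X$ by the hypotheses $X\cap b=\emptyset$ and $\bar{B}\supseteq N_b\cup b$, yielding a state on $X\cup b$; then $\mathfrak{E}_{a^A}$ is applicable since $(X\cup b)\cap a=\emptyset$ and $\bar{A}\supseteq(\mathcal{N}(a)\cap(X\cup b))\cup a=N_a\cup a$. The reverse order is checked symmetrically, and both produce states on $X\cup a\cup b$.

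With well-definedness in hand, the commutation follows from the tensor structure. The key point I would emphasize is that each universal recovery map $[\mathcal{R}_A]_{N_a}^{N_a\cup a}$ is a fixed CPTP map, determined by the marginal $\rho^{\bar{A}}\in\mathcal{M}$ alone and independent of the state on which it acts; likewise for $[\mathcal{R}_B]_{N_b}^{N_b\cup b}$. In the factorization $\mathcal{A}_{X\cup a\cup b}\cong\mathcal{A}_{N_a\cup a}\otimes\mathcal{A}_{N_b\cup b}\otimes\mathcal{A}_{R}$ with $R:=X\setminus(N_a\cup N_b)$, each extension acts as its recovery map on one block and as the identity on the rest, so using $N_a\subseteq X\setminus\mathcal{N}(b)$ and $N_b\subseteq X\setminus\mathcal{N}(a)$ both composites equal $[\mathcal{R}_A]_{N_a}^{N_a\cup a}\otimes[\mathcal{R}_B]_{N_b}^{N_b\cup b}\otimes I_{R}$, which settles the identity.

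The only genuine subtlety --- and the place where the disjointness hypothesis is really used --- is that a polymorphic extension enlarges the support by creating a new cell, so one must make sure that the identity factor of each map covers both the vertices modified by the other map and the cell it creates. The computations $\mathcal{N}(a)\cap(X\cup b)=N_a$ and $\mathcal{N}(b)\cap(X\cup a)=N_b$ above are precisely what guarantee this, so after that bookkeeping the commutation is routine and requires no appeal to the local Markov or local consistency conditions, consistent with this being a manifest relation.
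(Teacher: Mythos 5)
Your proposal is correct and follows essentially the same route as the paper: the crucial observation in both is that the disjointness hypothesis forces $\mathcal{N}(a)\cap(X\cup b)=\mathcal{N}(a)\cap X$ (and symmetrically for $b$), so the two universal recovery maps act on disjoint tensor factors and therefore commute. Your version merely spells out the well-definedness checks and the explicit factorization $[\mathcal{R}_A]_{N_a}^{N_a\cup a}\otimes[\mathcal{R}_B]_{N_b}^{N_b\cup b}\otimes I_R$ a bit more carefully than the paper does.
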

\begin{proof}
First note that the required conditions ensure that both partial functions are well-defined.
\begin{align}
\mathfrak{E}_{a^A} \circ \mathfrak{E}_{b^B}(\rho^X) &= \mathfrak{E}_{a^A}(I_{X\setminus \mathcal{N}(b)} \otimes [\mathcal{R}_B]_{\mathcal{N}(b)\cap X}^{(\mathcal{N}(b)\cap X)\cup b}(\rho^X)) \\
&=(I_{(X\cup b)\setminus \mathcal{N}(a)} \otimes [\mathcal{R}_A]_{\mathcal{N}(a)\cap (X\cup b)}^{(\mathcal{N}(a)\cap(X\cup b))\cup a})\circ (I_{X\setminus \mathcal{N}(b)} \otimes [\mathcal{R}_B]_{\mathcal{N}(b)\cap X}^{(\mathcal{N}(b)\cap X)\cup b}(\rho^X) \\
&=(I_{(X\cup b)\setminus \mathcal{N}(a)} \otimes [\mathcal{R}_A]_{\mathcal{N}(a)\cap X}^{(\mathcal{N}(a)\cap X)\cup a})\circ (I_{X\setminus \mathcal{N}(b)} \otimes [\mathcal{R}_B]_{\mathcal{N}(b)\cap X}^{(\mathcal{N}(b)\cap X)\cup b}(\rho^X).
\end{align}
Since the two universal recovery maps in the last line are supported on two disjoint subsystems, one can exchange the order in which these maps are applied, modulo an appropriate change of the identity superoperators. By applying the same sequence of logic on $\mathfrak{E}_{b^B} \circ\mathfrak{E}_{b^B}(\rho^X)$, one can show that the two expressions are equal to each other.
\end{proof}
Similarly,
\begin{prop}
If $a\cap X = \emptyset$, $b\subset X$, $\mathcal{N}(a)\cap X \subset A$, and $\mathcal{N}(a)\cap b=\emptyset$, $\mathfrak{C}_b\circ \mathfrak{E}_{a^A}(\rho^X) = \mathfrak{E}_{a^A}\circ \mathfrak{C}_b(\rho^X)$.
\end{prop}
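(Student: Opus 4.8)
The plan is to mirror the proof of the preceding proposition almost verbatim, since the essential content is again that two completely positive maps acting on disjoint subsystems may be applied in either order. First I would verify that both composites are well-defined under the stated hypotheses. For the left-hand side, $\mathfrak{E}_{a^A}(\rho^X)$ is well-defined because $a\cap X=\emptyset$ and $\bar A\supset(\mathcal{N}(a)\cap X)\cup a$; its output is supported on $X\cup a$, and since $b\subset X\subset X\cup a$ the subsequent contraction $\mathfrak{C}_b$ is legitimate. For the right-hand side, $\mathfrak{C}_b(\rho^X)$ is well-defined because $b\subset X$, producing a state on $X\setminus b$; the extension $\mathfrak{E}_{a^A}$ then applies because $a\cap(X\setminus b)=\emptyset$ and, crucially, $\mathcal{N}(a)\cap(X\setminus b)=\mathcal{N}(a)\cap X\subset\bar A$ as a consequence of $\mathcal{N}(a)\cap b=\emptyset$.

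Next I would expand both composites directly from the definitions. The observation to record explicitly is that, because $\mathcal{N}(a)\cap b=\emptyset$, the recovery map invoked inside $\mathfrak{E}_{a^A}$ is literally the same operator $[\mathcal{R}_A]_{\mathcal{N}(a)\cap X}^{(\mathcal{N}(a)\cap X)\cup a}$ on both sides: its domain and codomain are unchanged when $b$ is traced out, since $b$ lies neither in $\mathcal{N}(a)\cap X$ nor in $a$. On the left-hand side the partial trace $\Tr_b$ acts on a tensor factor sitting inside $I_{X\setminus\mathcal{N}(a)}$, which is untouched by the recovery map; on the right-hand side the recovery map acts on $(\mathcal{N}(a)\cap X)\cup a$, which is disjoint from $b$. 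Both composites therefore reduce to the single expression $I_{(X\cup a)\setminus(b\cup\mathcal{N}(a))}\otimes\Tr_b\otimes[\mathcal{R}_A]_{\mathcal{N}(a)\cap X}^{(\mathcal{N}(a)\cap X)\cup a}$ applied to $\rho^X$.

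Finally, I would invoke the same elementary fact used in the previous proposition: a map of the form $\Tr_b\otimes I$ and a map of the form $I\otimes[\mathcal{R}_A]_{\mathcal{N}(a)\cap X}^{(\mathcal{N}(a)\cap X)\cup a}$ whose nontrivial supports $b$ and $(\mathcal{N}(a)\cap X)\cup a$ are disjoint commute, modulo relabeling the identity superoperators on the spectator subsystems. This yields the desired equality $\mathfrak{C}_b\circ\mathfrak{E}_{a^A}(\rho^X)=\mathfrak{E}_{a^A}\circ\mathfrak{C}_b(\rho^X)$.

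I expect the only genuine obstacle to be the bookkeeping of supports and identity superoperators rather than any conceptual difficulty. In particular, the one place where the hypothesis $\mathcal{N}(a)\cap b=\emptyset$ is truly needed, and must be flagged carefully, is in establishing $\mathcal{N}(a)\cap(X\setminus b)=\mathcal{N}(a)\cap X$, which guarantees that the two sides employ an identical recovery map. Were $b$ adjacent to $a$, the extension on the right would act on a strictly smaller conditioning region and the identity would fail; this disjointness condition is precisely what makes the commutation go through.
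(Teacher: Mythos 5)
Your proof is correct and follows essentially the same route the paper intends: the paper states this proposition with only the remark ``Similarly,'' deferring to the preceding proof, which is exactly the expand-both-sides, disjoint-support commutation argument you give. Your explicit observation that $\mathcal{N}(a)\cap b=\emptyset$ forces $\mathcal{N}(a)\cap(X\setminus b)=\mathcal{N}(a)\cap X$, so that both sides invoke the identical recovery map, is precisely the point the paper leaves implicit.
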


The Markovian marginals introduced in Theorem \ref{thm:1D} and Theorem \ref{thm:2D} possess a useful property: that the neighbors of two cells overlap with each other if and only if the two cells are adjancet to each other. More formally, for the Markovian marginal in Theorem \ref{thm:1D} $(\mathcal{N}([i])\cup [i])\cap (\mathcal{N}([j])\cup [j])=\emptyset$ if and only if $[i] \centernot{-} [j]$. This implies that any two partial functions commute whenever the relevant cells are not adjacent to each other. These relations are summarized in Table \ref{table:1D}.
\begin{table}[h]
\begin{tabular}{c c}
\hline
Condition & Relations \\ \hline
Well-defined string and $S\aeq S'$& $SS'' \aeq S'S''$ \\ \hline
Well-defined string & $S[i]^{-1}[j]^{-1}S' = S[j]^{-1}[i]^{-1}S'$ \\ \hline
Well-defined string and $[i]\centernot{-}[j]$ & $S[i]^I[j]^JS' = S[j]^J[i]^IS'$ \\
\hline
\end{tabular}
\caption{Manifest relations for Theorem \ref{thm:1D}.Here $S$, $S'$, and $S''$ can be arbitrary strings, provided that the strings on both sides of the relations are well-defined. The superscripts($I$ and $J$) take the value in $-1,L,$ and $R.$ \label{table:1D}}
\end{table}

A similar conclusion holds for the Markovian marginal in Theorem \ref{thm:2D}. That is,
$((\mathcal{N}([i_1,j_1]) \cup [i_1,j_1]) \cap(\mathcal{N}([i_2,j_2]) \cup [i_2,j_2]) )=\emptyset$ if and only if $[i_1,j_1] \centernot{-} [i_2,j_2]$. These relations are summarized in Table \ref{table:2D}.
\begin{table}[h]
\begin{tabular}{c c }
\hline
Condition & Relations \\ \hline
Well-defined string and $S\aeq S'$ &  $SS'' \aeq S'S''$ \\ \hline
Well-defined string & $S[i_1,j_1]^{-1}[i_2,j_2]^{-1}S' = S[i_2,j_2]^{-1}[i_1,j_1]^{-1}S'$ \\ \hline
Well-defined string and $[i_1,j_1]\centernot{-}[i_2,j_2]$ & $S[i_1,j_1]^I[i_2,j_2]^JS' = S[i_2,j_2]^J[i_1,j_1]^IS'$ \\
\hline
\end{tabular}
\caption{Manifest relations for Theorem \ref{thm:2D}. Here $S$ and $S'$ can be arbitrary strings, provided that the strings on both sides of the relations are well-defined. The superscripts($I$ and $J$) take the value in $-1, UR, UL, DR,$ and $DL.$\label{table:2D}}
\end{table}

\section{Proof of Theorem \ref{thm:1D}\label{section:Proof1}}
So far, we have not yet used the local consistency and the local Markov condition. These conditions lead to what we call as \emph{derived relations.} By using these relations together with the manifest relations(Table \ref{table:1D}),  we can prove our main result. The following convention will be useful:
\begin{defi}
For $m'\geq m$,
\begin{equation}
\Pi_{i=m}^{m'} S_i = (\Pi_{i=m}^{m'-1} S_i) S_{m'}
\end{equation}
where $\Pi_{i=m}^j S_i$ is set to be an empty string if $j<m$.
\end{defi}

Our main technical statement is that the reduced density matrix of
\begin{equation}
[1]^{R}(\Pi_{i=1}^{\frac{n}{2}-1}[i+1]^L)   \label{eq:proposed_state_1D}
\end{equation}
over $[i]\cup [i+1]$ is equal to $\rho^{[i]\cup [i+1]}\in \mathcal{M}$ up to a trace distance that is bounded by $O(n\epsilon)$ for all $i=1, \cdots, \frac{n}{2}-1$. The proof of this statement can be broken down into three steps. For $1\leq m \leq \frac{n}{2}-1$  we first show that
\begin{equation}
[1]^R(\Pi_{i=1}^{\frac{n}{2}-1}[i+1]^{L})(\Pi_{i=1}^{m-1}[i]^{-1}) \aeqm [m]^R (\Pi_{i=1}^{\frac{n}{2}-m}[m+i]^L).
\end{equation}
Second, we show that
\begin{equation}
[m]^{R} (\Pi_{i=1}^{\frac{n}{2}-m}[m+i]^L) \underset{O((\frac{n}{2}-m)\epsilon)}{\approx} [n]^L (\Pi_{i=1}^{\frac{n}{2}-m} [\frac{n}{2}-i]^R).
\end{equation}
 Third, we show that
\begin{equation}
[n]^{L} (\Pi_{i=1}^{\frac{n}{2}-m} [\frac{n}{2}-i]^R) (\Pi_{i=0}^{\frac{n}{2}-m-2} [\frac{n}{2}-i]^{-1}) \underset{O(m\epsilon)}{\approx} [m]^R[m+1]^L.
\end{equation}
By invoking the triangle inequality for the trace norm, this leads to a conclusion that the marginal of the proposed state over $[m]\cup [m+1]$ is close to the state $[m]^R[m+1]^L$ up to a trace distance bounded by $O(n\epsilon)$. By the Markov condition, this state is close to the $\rho^{[m]\cup [m+1]} \in \mathcal{M}$ up to an $O(\epsilon)$ deviation. This concludes the proof.  Each of these steps can be further divided into a set of elementary derived relations, which we summarize in Table. \ref{table:derived_1D}
\begin{table}[h]
\begin{tabular}{c| c c c}
Name & Relations & Derivation & Appearance \\ \hline
$\cdot$&$[i]^L S\aeq[i]^R S$ & Consistency & $\cdot$ \\ \hline
Forward contraction&$[i]^R[i+1]^L[i]^{-1}S \aeq [i+1]^LS$ & Markov &Lemma \ref{lemma:1D_forward_contraction}\\ \hline
Cell exchange&$[i]^R[i+1]^LS \aeq [i+1]^L[i]^RS$ & Markov & Lemma  \ref{lemma:1D_cell_exchange}\\ \hline
Backward contraction&$[i]^L[i-1]^{R}[i]^{-1}S \aeq [i-1]^RS$ & Markov & Lemma \ref{lemma:1D_backward_contraction} \\ \hline
\end{tabular}
\caption{Here $S$ can be an arbitrary string, provided that the strings on both sides are well-defined. The derivation lists a set of invoked assumptions, consistency referring to the local consistency condition and the Markov referring to the local Markov condition.\label{table:derived_1D}}
\end{table}

The forward contraction can be achieved by using the following lemma as a subroutine.
\begin{lem}\label{lemma:1D_forward_contraction}
(Forward contraction) For $1\leq i\leq \frac{n}{2}-1$
\begin{equation}
[i]^R[i+1]^L[i]^{-1} \aeq [i+1]^L.
\end{equation}
\end{lem}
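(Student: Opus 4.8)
The plan is to recognize that the right cluster of $[i]$ and the left cluster of $[i+1]$ are the \emph{same} cluster $\{[i],[i+1]\}$, so that both $[i]^R$ and $[i+1]^L$ are built from the single marginal $\rho^{[i][i+1]}\in\mathcal{M}$. The strategy is then to show that the first two symbols $[i]^R[i+1]^L$ reconstruct this cluster marginal up to $O(\epsilon)$, after which the trailing contraction $[i]^{-1}$ merely traces it down to $\rho^{[i+1]}$, which is exactly what the bare string $[i+1]^L$ produces.

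First I would evaluate $[i]^R$ on the initial scalar input. Since the conditioning region $\mathcal{N}([i])\cap\emptyset$ is empty, the universal recovery map property $\Phi_B^{BC}(\rho^B)=\rho^{BC}$ with $B=\emptyset$ gives $[i]^R=\rho^{[i]}$, the reduced state of $\rho^{[i][i+1]}$ on the cell $[i]$. Next I apply $[i+1]^L=\mathfrak{E}_{[i+1]^L}$ to $\rho^{[i]}$. Writing $[i]=\{2i-1,2i\}$ and $[i+1]=\{2i+1,2i+2\}$, the conditioning region is $\mathcal{N}([i+1])\cap[i]=\{2i\}$, so this map is $I_{\{2i-1\}}\otimes[\mathcal{R}_L]_{\{2i\}}^{\{2i,2i+1,2i+2\}}$, where $[\mathcal{R}_L]_{\{2i\}}^{\{2i,2i+1,2i+2\}}$ is the Sutter--Fawzi--Renner map for the tripartition $\{2i-1\}\,|\,\{2i\}\,|\,\{2i+1,2i+2\}$ of $\rho^{[i][i+1]}$.

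The key step is to invoke Theorem \ref{thm:universal_recovery}. The conditional mutual information that controls the reconstruction is $I(\{2i-1\}:\{2i+1,2i+2\}\mid\{2i\})_{\rho^{[i][i+1]}}$; by the symmetry $I(A:C|B)=I(C:A|B)$ this is precisely the quantity bounded by $\epsilon^2$ in the local Markov condition for the cell $[i+1]$ inside the cluster $\{[i],[i+1]\}$. Hence the fidelity bound of Theorem \ref{thm:universal_recovery}, combined with the standard relation between fidelity and trace distance recorded in the excerpt, yields $\|[i]^R[i+1]^L - \rho^{[i][i+1]}\|_1\le O(\epsilon)$, i.e. $[i]^R[i+1]^L \aeq \rho^{[i][i+1]}$. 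This is exactly where the choice of $\epsilon^2$ rather than $\epsilon$ in the Markov condition pays off. Finally I would apply $[i]^{-1}=\mathfrak{C}_{[i]}=\Tr_{[i]}$ to both sides: since $\mathfrak{C}_{[i]}$ is CPTP and hence norm-nonincreasing, the $O(\epsilon)$ bound survives, giving $[i]^R[i+1]^L[i]^{-1}\aeq \Tr_{[i]}\rho^{[i][i+1]}=\rho^{[i+1]}$. On the other hand the bare string $[i+1]^L$ evaluated on the scalar has empty conditioning region and therefore equals $\rho^{[i+1]}$ as well, which closes the argument.

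I expect the main obstacle to be bookkeeping rather than any deep estimate: one must (i) observe that $[i]^R$ and $[i+1]^L$ refer to the \emph{common} cluster marginal $\rho^{[i][i+1]}$; (ii) identify the correct tripartition at the level of the underlying vertices so that the CMI demanded by Theorem \ref{thm:universal_recovery} lines up, after the harmless $A\leftrightarrow C$ swap, with the exact CMI bounded by the local Markov condition; and (iii) verify that the input fed to the recovery map in the second step is indeed the marginal of the same global cluster state produced in the first step, so that the hypotheses of Theorem \ref{thm:universal_recovery} are genuinely met with $\rho^{ABC}=\rho^{[i][i+1]}$.
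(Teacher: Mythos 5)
Your proof is correct and follows essentially the same route as the paper's (much terser) argument: the local Markov condition for the cell $[i+1]$ inside the cluster $\{[i],[i+1]\}$, fed through Theorem \ref{thm:universal_recovery}, shows $[i]^R[i+1]^L \aeq \rho^{[i]\cup[i+1]}$, after which tracing out $[i]$ and identifying $[i+1]^L$ with $\Tr_{[i]}(\rho^{[i]\cup[i+1]})$ finishes the argument. You have merely made explicit the bookkeeping (the common cluster, the $A\leftrightarrow C$ swap in the CMI, and the norm-nonincreasing step) that the paper leaves implicit.
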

\begin{proof}
\begin{equation}
[i]^R[i+1]^L[i]^{-1} \aeq I_{[i+1]} \otimes \Tr_{[i]} (\rho^{[i]\cup [i+1]})
\end{equation}
by the local Markov condition, where $\rho^{[i]\cup [i+1]} \in \mathcal{M}$.
\end{proof}
Since $[i+1]^L\aeq [i+1]^R$ for $1\leq i\leq n-2$, for those $i$ we conclude that $[i]^R[i+1]^L[i]^{-1} \aeq [i+1]^R$.\footnote{Otherwise, either $[i+1]^L$ or $[i+1]^R$ is undefined.} By repeatedly applying these relations $O(m)$ times, we conclude that$[1]^R(\Pi_{i=1}^{n-1}[i+1]^{L})(\Pi_{i=1}^{m-1}[i]^{-1}) \aeqm [m]^R (\Pi_{i=1}^{n-m}[m+i]^L).$

The next step is to flip the order of the cells. This can be achieved by using the following lemma as a subroutine.
\begin{lem}\label{lemma:1D_cell_exchange}
(Cell exchange) For $1\leq i \leq \frac{n}{2}-1$
\begin{equation}
[i]^{R}[i+1]^L \aeq [i+1]^L[i]^R
\end{equation}
\end{lem}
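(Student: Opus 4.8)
The plan is to show that both strings, each regarded as a state on $[i]\cup[i+1]$, are $O(\epsilon)$-close in trace distance to the single marginal $\rho^{[i]\cup[i+1]}\in\mathcal{M}$; the lemma then follows from the triangle inequality. What makes this clean is that the right cluster of $[i]$ and the left cluster of $[i+1]$ are the same cluster $\{[i],[i+1]\}$, so $\bar A=[i]\cup[i+1]$ throughout and every universal recovery map appearing on both sides is built from this one marginal.

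First I would unpack $[i]^R[i+1]^L$. With the scalar $1$ as input, $[i]^R$ produces the exact reduced state $\rho^{[i]}=\Tr_{[i+1]}(\rho^{[i]\cup[i+1]})$, and then $[i+1]^L$ returns $I_{\{2i-1\}}\otimes[\mathcal{R}_L]_{\{2i\}}^{\{2i,2i+1,2i+2\}}(\rho^{[i]})$, a recovery of $[i+1]$ from the single neighbor vertex $\{2i\}=\mathcal{N}([i+1])\cap[i]$. The local Markov condition for the cell $[i+1]$ in the cluster $L$ is exactly $I([i+1]:\{2i-1\}\,|\,\{2i\})_{\rho^{[i]\cup[i+1]}}\le\epsilon^2$, so Theorem \ref{thm:universal_recovery} and the fidelity-to-trace-norm bound recorded after it give $[i]^R[i+1]^L\aeq\rho^{[i]\cup[i+1]}$.

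The right-hand side is the mirror image: $[i+1]^L$ produces $\rho^{[i+1]}$, and $[i]^R$ returns $I_{\{2i+2\}}\otimes[\mathcal{R}_R]_{\{2i+1\}}^{\{2i-1,2i,2i+1\}}(\rho^{[i+1]})$, recovering $[i]$ from $\{2i+1\}=\mathcal{N}([i])\cap[i+1]$. Here the relevant Markov condition is $I([i]:\{2i+2\}\,|\,\{2i+1\})_{\rho^{[i]\cup[i+1]}}\le\epsilon^2$, and the same invocation of Theorem \ref{thm:universal_recovery} yields $[i+1]^L[i]^R\aeq\rho^{[i]\cup[i+1]}$. Combining the two estimates proves $[i]^R[i+1]^L\aeq[i+1]^L[i]^R$.

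The only genuine obstacle is bookkeeping of supports: one must confirm that $\mathcal{N}([i+1])\cap[i]$ and $\mathcal{N}([i])\cap[i+1]$ are single vertices, that the leftover spectator vertex ($\{2i-1\}$ on one side, $\{2i+2\}$ on the other) is precisely the $A$-system of the matching Markov condition, and that the reduced state fed into each recovery map is exactly the $B$-marginal demanded by Theorem \ref{thm:universal_recovery}. Once these identifications are checked, the argument reduces entirely to applying the approximate-recovery property to the correct marginal, with no further structure required.
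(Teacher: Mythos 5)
Your proposal is correct and follows exactly the paper's own route: both $[i]^R[i+1]^L$ and $[i+1]^L[i]^R$ are shown to be $O(\epsilon)$-close to $\rho^{[i]\cup[i+1]}\in\mathcal{M}$ via the local Markov condition and Theorem \ref{thm:universal_recovery}, and the triangle inequality finishes the argument. The only difference is that you spell out the support bookkeeping that the paper leaves implicit, and your identifications of the neighbor sets and the relevant conditional mutual informations are all accurate.
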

\begin{proof}
Note that
\begin{equation}
\begin{aligned}
\rho^{[i]\cup [i+1]} &\aeq [i]^R[i+1]^L \\
\rho^{[i]\cup [i+1]} &\aeq [i+1]^L[i]^R
\end{aligned}
\end{equation}
by the local Markov condition. By using the triangle inequality for the trace norm, we conclude  $[i]^{R}[i+1]^L \aeq [i+1]^L[i]^R.$
\end{proof}
Since $[i+1]^L\aeq [i+1]^R$ for $1\leq i\leq n-2$, for those $i$ we conclude that $[i]^R[i+1]^L\aeq [i+1]^R[i]^R$. After applying these operations, we use the manifest relations to move $[i]^R$ to the right end of the string. By applying the same set of relations for such $i$ and then applying the relation $[\frac{n}{2}-1]^R[\frac{n}{2}]^L \aeq [\frac{n}{2}]^L[\frac{n}{2}-1]^R$ at the end, we conclude that $[m]^{R} (\Pi_{i=1}^{\frac{n}{2}-m}[m+i]^L) \underset{O((\frac{n}{2}-m)\epsilon)}{\approx} [\frac{n}{2}]^L (\Pi_{i=1}^{\frac{n}{2}-m} [\frac{n}{2}-i]^R).$

Now we are left with the last step, backward contraction. We use the following lemma as a subroutine.
\begin{lem}\label{lemma:1D_backward_contraction}
(Backward contraction) For $3\leq i \leq \frac{n}{2}$
\begin{equation}
[i]^L[i-1]^R[i]^{-1} \aeq [i-1]^R.
\end{equation}
\end{lem}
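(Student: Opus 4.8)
The plan is to run the forward-contraction argument (Lemma \ref{lemma:1D_forward_contraction}) in reverse, exploiting the mirror symmetry between the ``left'' and ``right'' clusters. Reading the string $[i]^L[i-1]^R[i]^{-1}$ from left to right with the initial input equal to the scalar $1$, the first symbol $[i]^L$ produces the reduction of the cluster marginal $\rho^{[i-1]\cup[i]}\in\mathcal{M}$ over the cell $[i]$, i.e.\ the state $\rho^{[i]}$. The second symbol $[i-1]^R$ then applies the universal recovery map $[\mathcal{R}_R]$ built from the \emph{same} marginal, reconstructing the cell $[i-1]$ from its neighbor inside $[i]$.

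The heart of the argument is to recognize that the conditional mutual information controlling this recovery in Theorem \ref{thm:universal_recovery} is precisely the quantity bounded by the local Markov condition applied to the cell $[i-1]$ inside the cluster $R=\{[i-1],[i]\}$. Identifying the recovered subsystem as $C=[i-1]$, the conditioning subsystem as $B=\mathcal{N}([i-1])\cap\bar{R}$, and the spectator as $A=\bar{R}\setminus([i-1]\cup B)$, the local Markov condition gives $I(A:C|B)\le\epsilon^2$. Invoking Theorem \ref{thm:universal_recovery} together with the standard fidelity/trace-norm relation then yields
\begin{equation}
[i]^L[i-1]^R \aeq \rho^{[i-1]\cup[i]}.
\end{equation}
It remains only to append the final contraction $[i]^{-1}=\mathfrak{C}_{[i]}$, which traces out the cell $[i]$ and produces $\Tr_{[i]}(\rho^{[i-1]\cup[i]})=\rho^{[i-1]}$; this is exactly the state represented by the single symbol $[i-1]^R$. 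Since $\mathfrak{C}_{[i]}$ is norm-nonincreasing, the $O(\epsilon)$ estimate is preserved (first row of Table \ref{table:1D}), giving $[i]^L[i-1]^R[i]^{-1}\aeq[i-1]^R$.

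I expect no genuine obstacle here, in keeping with the one-line forward-contraction proof. The only points requiring care are bookkeeping: matching the subsystems $(A,B,C)$ that enter the recovery bound to those that enter the local Markov condition for $[i-1]$, and checking that the stated index range keeps both the extension $[i]^L$ and the recovery $[i-1]^R$ well-defined (so that $[i-1]$ is a genuine cell equipped with a right cluster). Both checks are the exact mirror images of the ones already carried out for the forward contraction, so the lemma should follow immediately once the symmetry is made explicit.
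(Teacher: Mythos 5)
Your proposal is correct and follows essentially the same route as the paper: the local Markov condition for the cell $[i-1]$ inside the cluster $\{[i-1],[i]\}$, combined with Theorem \ref{thm:universal_recovery}, gives $[i]^L[i-1]^R \aeq \rho^{[i-1]\cup[i]}$, after which tracing out $[i]$ (a norm-nonincreasing map) yields $[i-1]^R$ with the $O(\epsilon)$ bound intact. The paper's one-line proof compresses exactly this argument, so no further comparison is needed.
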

\begin{proof}
By the local Markov condition,
\begin{equation}
[i]^L[i-1]^R[i]^{-1} \aeq I_{[i-1]} \otimes \Tr_{[i]}(\rho^{[i-1] \cup [i]}),
\end{equation}
where $\rho^{[i-1] \cup [i]}\in  \mathcal{M}$.
\end{proof}
Recall that $[i-1]^{R} \aeq [i-1]^L$ for $3\leq i\leq \frac{n}{2}$. For such $i$, we have $[i]^L[i-1]^R[i]^{-1} \aeq [i-1]^L.$ We can apply these relations recursively until we are left with $[m+1]^L[m]^R$. By the local Markov condition, the trace distance between this state and $\rho^{[m]\cup[m+1]}$ is bounded by $O(\epsilon)$. The trace distance between $\rho^{[m]\cup [m+1]}$ and the reduced state of the proposed state(Eq.\ref{eq:proposed_state_1D}) is bounded from above by $O(m\epsilon) + O((\frac{n}{2}-m)\epsilon) \leq O(n\epsilon)$ for all $1\leq m\leq n-1$. This completes the proof of Theorem \ref{thm:1D}.

\section{Proof of Theorem \ref{thm:2D}\label{section:Proof2}}
The proof of Theorem \ref{thm:2D} is analogous to that of Theorem \ref{thm:1D}. Let us first sketch an overview. We  propose the following specific state:
\begin{equation}
[:,1]^U(\Pi_{i=1}^{n-1} [:,i+1]^{D}),\label{eq:proposed_state_2D}
\end{equation}
where
\begin{equation}
[:,i]^U = [1,i]^{UR}\Pi_{j=1}^{n-1}[j+1,i]^{UL}
\end{equation}
and
\begin{equation}
[:,i]^{D} = [1,i]^{DR} \Pi_{j=1}^{n-1}[j+1,i]^{DL},
\end{equation}
and then prove that the trace distance between the reduced density matrix of this state over the given clusters and the marginals in $\mathcal{M}$ are bounded from above by $O(n^2\epsilon)$ for all the clusters.

We would like to point out a similarity between Eq.\ref{eq:proposed_state_1D} and Eq.\ref{eq:proposed_state_2D}. In Eq.\ref{eq:proposed_state_1D} the state is created sequentially from the left to the right. At each steps, the polymorphic extensions are supported on at most two cells. Similarly, in Eq.\ref{eq:proposed_state_2D} the state is created sequentially from the bottom to the top. At each steps, the polymorphic extensions are supported on at most two rows of cells.  As we shall see, we can apply an analogue of the forward contraction, cell exchange, and backward contraction to our proposed state, i.e., Eq.\ref{eq:proposed_state_2D}. Instead of removing a cell at each steps, we remove a row of cells. Also, instead of exchanging the order of the cells at each steps, we exchange the order of the rows at each steps.\footnote{Actually, the order of the cells within the row changes, but we will be able to deal with this subtlety.}

These procedures are explained in Section \ref{section:row_relations}. At the end of applying these procedures, we arrive at a conclusion:
\begin{equation}
[:,1]^U(\Pi_{i=1}^{n-1} [:,i+1]^{D}) (\Pi_{i=1}^{m-1} [:,m]^{-1}) (\Pi_{i=m+2}^{n}[:,i]^{-1}) \underset{O(n^2\epsilon)}{\approx} [:,m]^{U}[:,m+1]^D. \label{eq:2D_middle_step}
\end{equation}
That is, the reduced state of the proposed state over two rows of cells is approximately equal to a certain state that is specified by a string of cells(as well as the clusters) that are confined in just two rows. Then we derive analogues of forward contraction, cell exchange, and backward contraction for these two rows. This is the content of Section \ref{section:supercell_relations}. By applying these operations, we conclude that the reduced density matrices of Eq.\ref{eq:proposed_state_2D} over the clusters is approximately equal to certain states that are specified by strings of bounded length. At this point, we use the local Markov condition and show that this state is close to the marginals in $\mathcal{M}$ that are supported on the same set of cells, thus completing the proof.

We begin by deriving a set of relations that involve bounded number of cells in Section \ref{section:localized_relations}. These relations are combined together derive the relations involving rows of cells and the relations involving two rows.

\subsection{Localized relations\label{section:localized_relations}}
We begin by deriving relations that involve bounded number of cells.  For the Markovian marginal discussed in Theorem \ref{thm:2D}, there are certain relations that are not entirely obvious from the given local Markov condition. These relations are \emph{inherited}, in a sense that it follows from a local Markov condition that is inherited from the given local Markov condition. That is, the relevant local Markov condition is not specified explicitly in Theorem \ref{thm:2D}, but it nevertheless can be shown to follow from the given local Markov conditions. As these relations are used frequently, it is best to derive them first.

All the inherited relations that we derive here concern two cells that are adjacent to each other, e.g., pairs such as $[i_1,j_1]$ and $[i_1+1,j_1]$ or pairs such as $[i_1,j_1]$ and $[i_1,j_1+1]$. The relations will be determined by a polymorphic extension that is defined in terms of the marginal over a cluster that contains both of the cells. Notice that there is an ambiguity in this statement. Specifically, given two such cells, there are four distinct clusters that contain these cells, two of which have $3$ cells and two of which have $4$ cells. The clusters with $3$ cells are contained within one of the clusters with $4$ cells, and there is no ambiguity between clusters of different sizes. The ambiguity lies on clusters of the same size; see FIG.\ref{fig:inheritance}. We need to treat these two cases separately.
\begin{figure}[h]
\subfigure[]{\includegraphics[width=2in]{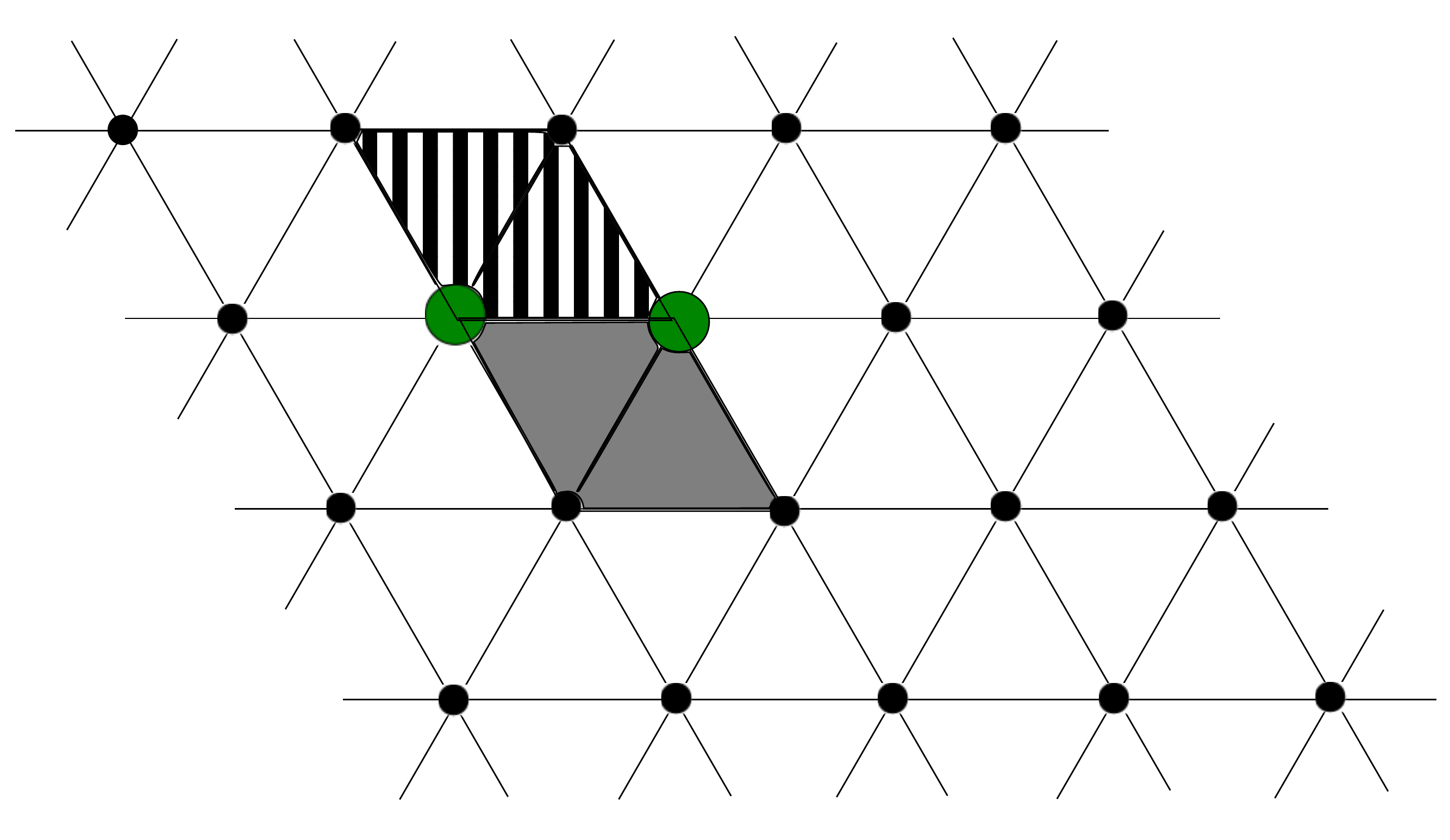}}
\subfigure[]{\includegraphics[width=2in]{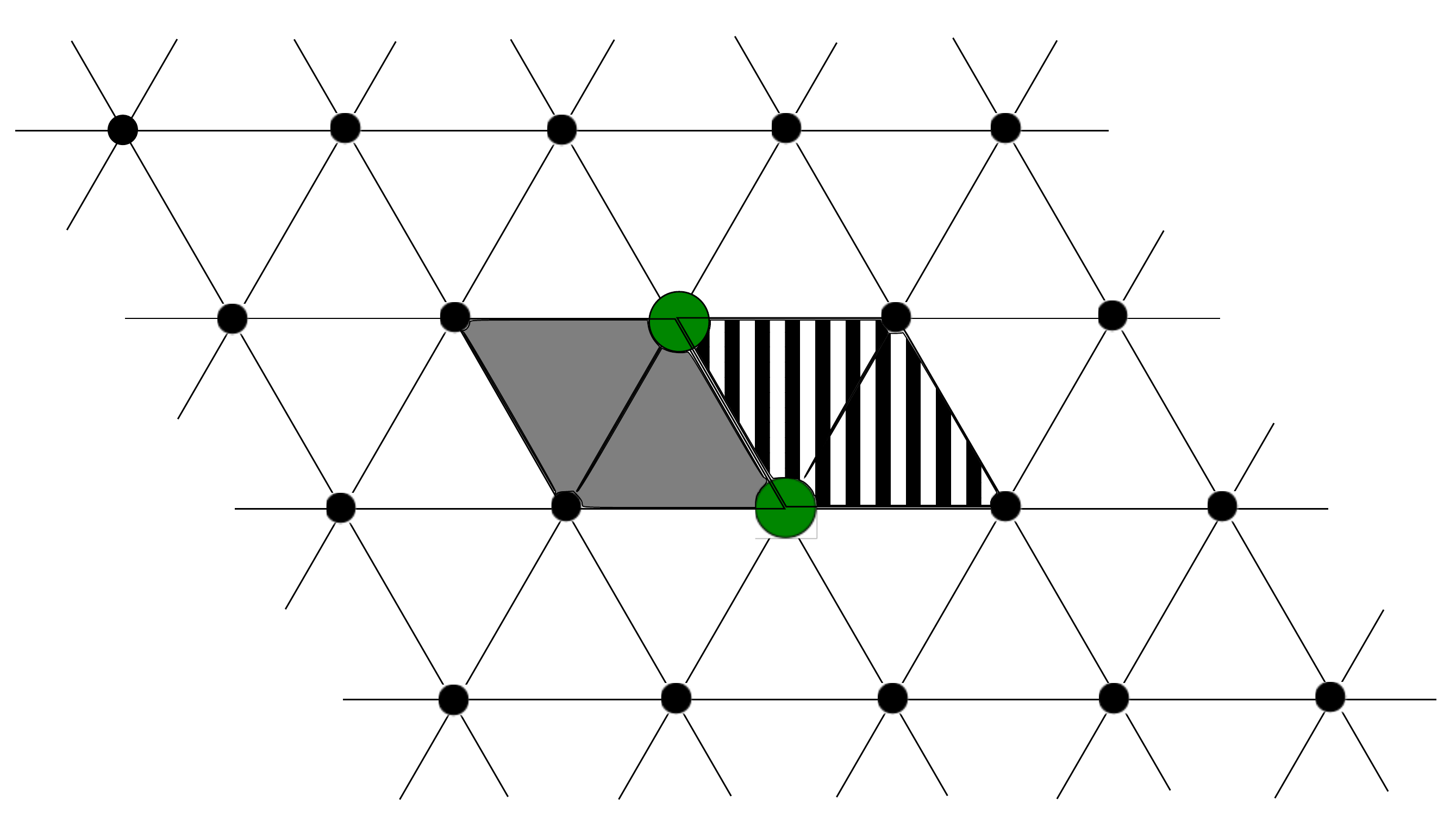}}
\caption{For certain pairs of cells that are adjacent to each other(green circles), there are two possible clusters that contain both of the cells. One is striped, and the other one is gray in this figure. Since the marginals over these clusters are only assumed to be approximately locally consistent, we should treat their reduced density matrices over the two cells separately.\label{fig:inheritance}}
\end{figure}

\begin{lem}
For $i=1,\cdots, n-1$ and $j=2,\cdots, n-1$
\begin{equation}
[i,j]^{UR}[i+1,j]^{UL} \aeq [i+1,j]^{DL} [i,j]^{DR}.
\end{equation}
\label{lemma:2D_inheritance}
\end{lem}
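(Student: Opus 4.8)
The plan is to evaluate the two strings explicitly: each one collapses to an \emph{exact} two-cell reduced density matrix of one of the two four-cell clusters containing the pair $[i,j]\cup[i+1,j]$, after which the statement is an immediate consequence of the local consistency condition. First I would read off the relevant clusters from the $UR/UL/DR/DL$ convention of FIG.~\ref{fig:clusters_UDLR}. The cluster defining $[i,j]^{UR}$ is $C_{\text{top}}:=\{[i,j],[i+1,j],[i,j+1],[i+1,j+1]\}$, and this is precisely the cluster that also defines $[i+1,j]^{UL}$, since $[i+1,j]$ occupies the bottom-right corner of $C_{\text{top}}$. Similarly, the cluster defining both $[i+1,j]^{DL}$ and $[i,j]^{DR}$ is $C_{\text{bot}}:=\{[i,j-1],[i+1,j-1],[i,j],[i+1,j]\}$. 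The ranges $1\le i\le n-1$ and $2\le j\le n-1$ are exactly those making both $C_{\text{top}}$ and $C_{\text{bot}}$ genuine elements of $\mathcal{C}_4$, so both strings and both marginals are well-defined.

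Next I would evaluate the left-hand string on the scalar input $1$. The first extension $[i,j]^{UR}$ returns the reduction of the cluster marginal $\rho^{\overline{C_{\text{top}}}}\in\mathcal{M}$ to the single cell $[i,j]$, which I abbreviate $\rho_{C_{\text{top}}}^{[i,j]}$. Since $\mathcal{N}([i+1,j])\cap[i,j]=[i,j]$, the second extension $[i+1,j]^{UL}$ applies the recovery map $[\mathcal{R}_{C_{\text{top}}}]_{[i,j]}^{[i,j][i+1,j]}$ to this state; by the exact recovery property $\Phi_B^{BC}(\rho^B)=\rho^{BC}$ recorded after Theorem~\ref{thm:universal_recovery}, the output is exactly the two-cell reduction $\rho_{C_{\text{top}}}^{[i,j][i+1,j]}$ of $\rho^{\overline{C_{\text{top}}}}$. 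The analogous computation on the right-hand string, which recovers $[i,j]$ from its neighbor $[i+1,j]$ inside $C_{\text{bot}}$, gives $[i+1,j]^{DL}[i,j]^{DR}=\rho_{C_{\text{bot}}}^{[i,j][i+1,j]}$, again exactly. The key point is that no conditional-mutual-information estimate is used at this stage: the bound of Theorem~\ref{thm:universal_recovery} governs recovery only when the input carries a nontrivial spectator factor to be left invariant, whereas here the input is supported on the single cell $[i,j]$ (or $[i+1,j]$ for the right-hand string) alone, so recovery is on the nose.

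Finally I would apply the local consistency condition. The two clusters overlap exactly on $\overline{C_{\text{top}}}\cap\overline{C_{\text{bot}}}=[i,j]\cup[i+1,j]$, with $\overline{C_{\text{top}}}\setminus\overline{C_{\text{bot}}}=[i,j+1]\cup[i+1,j+1]$ and $\overline{C_{\text{bot}}}\setminus\overline{C_{\text{top}}}=[i,j-1]\cup[i+1,j-1]$. Because $\rho^{\overline{C_{\text{top}}}}$ and $\rho^{\overline{C_{\text{bot}}}}$ both belong to $\mathcal{M}$, local consistency yields
\begin{equation}
\|\rho_{C_{\text{top}}}^{[i,j][i+1,j]}-\rho_{C_{\text{bot}}}^{[i,j][i+1,j]}\|_1\le\epsilon,
\end{equation}
which is exactly $[i,j]^{UR}[i+1,j]^{UL}\aeq[i+1,j]^{DL}[i,j]^{DR}$.

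The only genuine obstacle is bookkeeping: one must read the $UR/UL/DR/DL$ labels carefully enough to confirm that each side is governed by a single four-cell cluster, and then verify that each string produces an honest reduced state rather than an approximation of one. It is worth emphasizing that, although this relation is grouped with the ``inherited'' relations, its proof rests only on the exact-recovery identity together with the $\epsilon$-consistency of two overlapping four-cell marginals; the local Markov (small-CMI) condition is not actually invoked for this particular relation.
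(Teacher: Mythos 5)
Your identification of the two four-cell clusters and the overall skeleton (evaluate each string against the corresponding cluster marginal, then compare the two marginals via local consistency and the triangle inequality) matches the paper. But there is a genuine gap in the middle step: the claim that $\mathcal{N}([i+1,j])\cap[i,j]=[i,j]$, and hence that the recovery is exact, is false. The cells $[i,j]$ are hexagons containing many vertices of the fine-grained graph $G$ (the faces of the triangular lattice), and $\mathcal{N}(\cdot)$ is computed in $G$, not at the level of cells. So $\mathcal{N}([i+1,j])\cap[i,j]$ is only the boundary strip of $[i,j]$ facing $[i+1,j]$ --- this is precisely the striped region $B$ in FIG.~\ref{fig:CMI2D_3-2}. (If your reading were correct, the local Markov condition would condition on everything else in the cluster and would be vacuous.) Consequently the second extension $[i+1,j]^{UL}$ acts as the identity on a nontrivial spectator $[i,j]\setminus\mathcal{N}([i+1,j])$, and the property $\Phi_B^{BC}(\rho^B)=\rho^{BC}$ only guarantees the correct marginal on $BC$; it says nothing about the joint state with the spectator. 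To conclude that $[i,j]^{UR}[i+1,j]^{UL}$ is close to the reduced density matrix of $\rho^{\bar A}$ over the two cells, you need
\begin{equation}
I\bigl([i,j]\setminus\mathcal{N}([i+1,j])\,:\,[i+1,j]\,\bigm|\,\mathcal{N}([i+1,j])\cap[i,j]\bigr)_{\rho^{\bar A}}
\end{equation}
to be $O(\epsilon^2)$, and then Theorem~\ref{thm:universal_recovery} gives an $O(\epsilon)$ trace-distance bound --- not an equality.

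This small-CMI bound is not one of the assumed local Markov conditions, which is exactly why the lemma is grouped with the \emph{inherited} relations: the paper derives it from the $\mathcal{C}_3$ Markov condition of FIG.~\ref{fig:CMI2D_3-2} by strong subadditivity (discarding $[i,j+1]$ from the ``$A$'' slot of the conditional mutual information). Your closing remark that ``the local Markov (small-CMI) condition is not actually invoked for this particular relation'' is therefore the precise point at which the argument fails. The fix is to reinstate that step: invoke the $\mathcal{C}_3$ Markov condition, apply strong subadditivity, apply Theorem~\ref{thm:universal_recovery} to get each string $O(\epsilon)$-close to the corresponding cluster marginal restricted to the two cells, and only then use local consistency and the triangle inequality, yielding $\aeq$ with an $O(\epsilon)$ rather than exact-plus-$\epsilon$ error.
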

\begin{proof}
Note that the cluster $UR$ relative to $[i,j]$ and the cluster $UL$ relative to $[i+1,j]$ are the same cluster. Let us denote this cluster as $A$ and the marginal over this cluster as $\rho^{\bar{A}}\in \mathcal{M}$. Similarly, the cluster $DL$ relative to $[i+1,j]$ and the cluster $DR$ relative to $[i,j]$ are the same cluster. Let us denote this cluster as $B$ and the given marginal over this cluster as $\rho^{\bar{B}} \in  \mathcal{M}$.

By the local Markov condition that follow from $\mathcal{C}_3$(FIG.\ref{fig:CMI2D_3-2}) we have
\begin{equation}
I([i+1,j]: ([i,j]\cup [i,j+1]) \setminus \mathcal{N}([i+1,j])]| \mathcal{N}([i+1,j]) \cap \bar{A})_{\rho^{\bar{A}}} \leq \epsilon^2.
\end{equation}
By the strong subadditivity of entropy,\cite{Lieb1973}
\begin{equation}
\begin{aligned}
I([i+1,j] : [i,j] \setminus \mathcal{N}([i+1,j])]| \mathcal{N}([i+1,j] \cap \bar{A}) )_{\rho^{\bar{A}}}  \\ \leq I([i+1,j]: ([i,j]\cup [i,j+1]) \setminus \mathcal{N}([i+1,j])]| \mathcal{N}([i+1,j]) \cap \bar{A} )_{\rho^{\bar{A}}},
\end{aligned}
\end{equation}
because $[i,j] \setminus \mathcal{N}([i+1,j])] \subset ([i,j]\cup [i,j+1]) \setminus \mathcal{N}([i+1,j])]$. By applying Theorem \ref{thm:universal_recovery} to this expression, we conclude that the trace distance between $[i,j]^{UR}[i+1,j]^{UL}$ and the reduced density matrix of $\rho^{\bar{A}}$ over the two cells  is bounded from above by $O(\epsilon)$. Applying the same line of logic, we conclude that the trace distance between $[i+1,j]^{DL} [i,j]^{DR}$ and the reduced density matrix of $\rho^{\bar{B}}$ over the two cells is bounded from above by $O(\epsilon).$ The trace distance between the reduced density matrix of $\rho^{\bar{A}}$ and the reduced density matrix of $\rho^{\bar{B}}$ over the two cells are bounded from above by $O(\epsilon)$ due to the local consistency condition. By applying the triangle inequality to these three trace distance bounds, we conclude $[i,j]^{UR}[i+1,j]^{UL} \aeq [i+1,j]^{DL} [i,j]^{DR}.$
\end{proof}

With an appropriate choice of clusters, one can also prove the following lemma with the same logic.
\begin{lem}
\label{lemma:2D_inheritance_vertical}
For $i=2,\cdots, n-1$ and $j=1,\cdots, n-1$,
\begin{equation}
[i,j]^{UR} [i,j+1]^{DR} \aeq [i,j+1]^{DL}[i,j]^{UL}
\end{equation}
\end{lem}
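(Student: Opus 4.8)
The plan is to mirror the three-step structure of the proof of Lemma \ref{lemma:2D_inheritance}, transposing every statement from the horizontally-adjacent pair to the vertically-adjacent pair $[i,j]$ and $[i,j+1]$. First I would fix the two four-cell clusters that contain both cells and recognize which superscripts refer to them. The cluster $A$ that is $UR$ relative to $[i,j]$ coincides with the one that is $DR$ relative to $[i,j+1]$, namely $\{[i,j],[i+1,j],[i,j+1],[i+1,j+1]\}$; the cluster $B$ that is $UL$ relative to $[i,j]$ coincides with the one that is $DL$ relative to $[i,j+1]$, namely $\{[i-1,j],[i,j],[i-1,j+1],[i,j+1]\}$ (it is precisely the hypothesis $i\geq 2$ that guarantees $B$ exists). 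Writing $\rho^{\bar A},\rho^{\bar B}\in\mathcal{M}$ for the corresponding marginals, the key observation is that the left-hand string $[i,j]^{UR}[i,j+1]^{DR}$ is assembled entirely from $\rho^{\bar A}$, while the right-hand string $[i,j+1]^{DL}[i,j]^{UL}$ is assembled entirely from $\rho^{\bar B}$.

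Second, I would show that each string reproduces the corresponding two-cell marginal up to $O(\epsilon)$. Starting from the scalar, $[i,j]^{UR}$ outputs $\rho^{\bar A}$ restricted to $[i,j]$, and since $[i,j]$ is the only neighbor of $[i,j+1]$ present, $[i,j+1]^{DR}$ then applies the universal recovery map $[\mathcal{R}_A]_{[i,j]}^{[i,j]\cup[i,j+1]}$. To control the output I would invoke the inherited local Markov condition at the cell $[i,j+1]$ --- either the three-cell condition from a sub-cluster $A'\in\mathcal{C}_3$ of $A$ containing both $[i,j]$ and $[i,j+1]$, or the four-cell condition from $\mathcal{C}_4$ --- and, exactly as in Lemma \ref{lemma:2D_inheritance}, use strong subadditivity \cite{Lieb1973} to discard the extra cell from the ``other'' subsystem, so that the surviving conditional mutual information is precisely the one that governs the recovery of $[i,j+1]$ from $[i,j]$. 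Theorem \ref{thm:universal_recovery} then bounds the trace distance between $[i,j]^{UR}[i,j+1]^{DR}$ and $\rho^{\bar A}$ over the two cells by $O(\epsilon)$, and the same argument applied to cluster $B$ gives $[i,j+1]^{DL}[i,j]^{UL}\aeq\rho^{\bar B}$ over the two cells. I note in passing that this step is in fact exact: because $[\mathcal{R}_A]_{[i,j]}^{[i,j]\cup[i,j+1]}$ reproduces its defining marginal when fed that marginal's reduction, the left-hand string equals $\rho^{\bar A}$ over the two cells identically, so the only genuine source of error in the whole lemma is the local consistency invoked below.

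Third, I would close with local consistency and the triangle inequality. Because $\bar A\cap\bar B=[i,j]\cup[i,j+1]$, the local consistency condition asserts that the reduced density matrices of $\rho^{\bar A}$ and $\rho^{\bar B}$ over exactly these two cells differ by at most $\epsilon$ in trace distance. Chaining this with the two $O(\epsilon)$ bounds from the previous step yields $[i,j]^{UR}[i,j+1]^{DR}\aeq[i,j+1]^{DL}[i,j]^{UL}$, as claimed.

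I expect the only real obstacle to be the geometric bookkeeping: the roles of ``up/down'' and ``left/right'' are interchanged relative to Lemma \ref{lemma:2D_inheritance}, so one must verify carefully that the two clusters containing the vertical pair are the \emph{left} and \emph{right} plaquettes (rather than the top and bottom ones), that $[i,j+1]$ is the correct cell on which to impose the local Markov condition, and that strong subadditivity is applied to the appropriate superset so that the resulting conditional mutual information matches the recovery direction dictated by the polymorphic extensions. Confirming that the overlap $\bar A\cap\bar B$ is exactly the two shared cells is what makes the final appeal to local consistency legitimate.
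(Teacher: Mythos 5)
Your proof is correct and follows essentially the same route as the paper, which simply transposes the proof of Lemma \ref{lemma:2D_inheritance} "with an appropriate choice of clusters": you identify the two four-cell clusters flanking the vertical pair correctly (right plaquette for $UR/DR$, left plaquette for $UL/DL$, with $i\geq 2$ needed for the latter), control each string by the inherited Markov condition plus strong subadditivity and Theorem \ref{thm:universal_recovery}, and close with local consistency on the two-cell overlap and the triangle inequality. One remark is wrong, though inessential: the second step is \emph{not} exact, because the polymorphic extension conditions only on $\mathcal{N}([i,j+1])\cap[i,j]$, the boundary layer of the cell, not on all of $[i,j]$; the identity $\Phi_B^{BC}(\rho^B)=\rho^{BC}$ therefore only fixes the output on that layer together with $[i,j+1]$, and reproducing the correlations between the rest of $[i,j]$ and $[i,j+1]$ costs the $O(\epsilon)$ error that your (correct) conditional-mutual-information argument accounts for. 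Since your stated bound is $O(\epsilon)$ anyway, this does not affect the conclusion.
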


There are relations that involve $4$ cells.
\begin{lem}
\label{lemma:2D_4cell_permutation}
For $i,j= 1, \cdots, n-1$,
\begin{equation}
[i,j]^{UR}[i+1,j]^{UL} [i,j+1]^{DR}[i+1,j+1]^{DL} \aeq [i+1,j+1]^{DL} [i,j+1]^{DR}[i+1,j]^{UL}[i,j]^{UR}.
\end{equation}
\end{lem}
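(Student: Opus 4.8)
The plan is to prove the two strings are each $O(\epsilon)$-close to the single four-cell marginal $\rho^{\bar{A}}$, where $A=\{[i,j],[i+1,j],[i,j+1],[i+1,j+1]\}\in\mathcal{C}_4$, and then invoke the triangle inequality for the trace norm. Write $a=[i,j]$, $b=[i+1,j]$, $c=[i,j+1]$, $d=[i+1,j+1]$ for the four cells. The first thing to notice is that every polymorphic extension appearing on either side is defined through this \emph{one} cluster $A$: on the left $a^{UR}$, $b^{UL}$, $c^{DR}$, $d^{DL}$ all name the block $A$ (each cell occupies the corresponding corner of $A$), and the same holds on the right. Hence the statement is entirely internal to $\rho^{\bar{A}}$ and no inter-cluster consistency is invoked. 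I would also record the adjacency pattern inside the block: all pairs of cells are adjacent \emph{except} $b$ and $c$, the two cells on the long diagonal, which are non-adjacent.

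For the left string I read it as growing the support $\emptyset\to a\to ab\to abc\to\bar{A}$. The first two steps are \emph{exact}: $a^{UR}$ prepares $\rho^{a}$ from the scalar, and $b^{UL}$ applies $[\mathcal{R}_A]_{a}^{ab}$ to $\rho^a$, returning $\rho^{ab}$ exactly via the recovery-map identity $\Phi_B^{BC}(\rho^B)=\rho^{BC}$. The only inexact step is the third: since $b$ is not adjacent to $c$, the extension $c^{DR}$ recovers $c$ from its \emph{single present neighbor} $a$, acting as $I_{b}\otimes[\mathcal{R}_A]_{a}^{ac}$. By Theorem \ref{thm:universal_recovery} its trace-distance error is $O(\sqrt{I(b:c|a)})$, with $b$ playing the role of the distant system.

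The key point is that $I(b:c|a)$ is controlled not by the four-cell Markov condition (which would condition on $a\cup d$) but by an \emph{inherited} three-cell condition: the triangle $\{a,b,c\}$ belongs to $\mathcal{C}_3$, its marginal equals $\Tr_{d}\rho^{\bar{A}}$ exactly by the containment case of the local consistency condition, and its local Markov condition at $b$ (whose only neighbor inside the triangle is $a$) states $I(b:c|a)\le\epsilon^2$. So the third step costs $O(\epsilon)$, and the final step $d^{DL}$—which recovers $d$ from its full neighborhood $a\cup b\cup c$ and is therefore exact on $\rho^{\bar{A}}$—only propagates that $O(\epsilon)$ error, since recovery maps are norm-nonincreasing. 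Thus the left string $\aeq\rho^{\bar{A}}$.

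The right string is the $180^{\circ}$-rotated mirror of this argument: it grows $\emptyset\to d\to dc\to dcb\to\bar{A}$, its first two steps are again exact, and its unique inexact step recovers $b$ from its single present neighbor $d$ (spectator $c$), now governed by the other triangle $\{b,c,d\}\in\mathcal{C}_3$ through $I(b:c|d)\le\epsilon^2$; the closing step recovers $a$ from $b\cup c\cup d$ exactly. Hence the right string $\aeq\rho^{\bar{A}}$ as well, and the triangle inequality yields the asserted relation. The step demanding the most care is the bookkeeping of which neighbors are already present when each cell is added—this is exactly what fixes the conditioning region of each recovery map, and hence which inherited three-cell condition must be quoted—together with verifying that precisely one step per side is inexact while all the others are exact by the marginal-fixing property of the universal recovery maps.
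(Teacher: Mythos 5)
Your overall strategy is the same as the paper's: show that each side is $O(\epsilon)$-close to the marginal $\rho^{\bar{A}}$ on the four-cell cluster and conclude by the triangle inequality, tracking which Markov condition controls each growth step. However, there is a genuine error in your accounting of which steps are exact. You claim that the step $a\to ab$ applies $[\mathcal{R}_A]_{a}^{ab}$ and is therefore exact by $\Phi_B^{BC}(\rho^B)=\rho^{BC}$, and likewise that the final step $abc\to abcd$ recovers $d$ from ``its full neighborhood $a\cup b\cup c$'' and is exact. Neither is true: by the definition of the polymorphic extension, the recovery map in $\mathfrak{E}_{b^A}(\rho^X)$ acts only on $\mathcal{N}(b)\cap X$, i.e.\ on the thin layer of vertices of the already-present region adjacent to the new cell, not on the whole of $X$. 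The cells are hexagonal blocks containing many vertices, so $\mathcal{N}([i+1,j])\cap [i,j]\subsetneq [i,j]$ and $\mathcal{N}([i+1,j+1])\cap(\bar{a}\cup\bar{b}\cup\bar{c})$ is a proper subset of $\bar{a}\cup\bar{b}\cup\bar{c}$. Consequently the output of the second step equals $\rho^{ab}$ only up to an error controlled by $I\bigl([i+1,j] : [i,j]\setminus\mathcal{N}([i+1,j]) \,\big|\, \mathcal{N}([i+1,j])\cap[i,j]\bigr)$, which must be bounded by the \emph{inherited} $\mathcal{C}_3$ Markov condition (precisely the content of Lemma \ref{lemma:2D_inheritance}, and the reason the paper derives the inherited conditions first); and the output of the fourth step equals $\rho^{\bar{A}}$ only up to an error controlled by the $\mathcal{C}_4$ local Markov condition at $[i+1,j+1]$ (FIG.~\ref{fig:CMI2D_4-2}), which you never invoke.

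The damage is limited: each of these steps contributes a further $O(\epsilon)$, there are only a constant number of them, and the requisite conditions are all among the hypotheses, so the claimed bound survives once you insert them. But as written your argument would establish the lemma without ever using the $\mathcal{C}_4$ Markov conditions --- a strictly stronger statement than the paper's --- and that strengthening is not justified: the four-cell condition at $[i+1,j+1]$ conditions on $\mathcal{N}([i+1,j+1])$ intersected with all three remaining cells, and it cannot be derived from the three-cell conditions by strong subadditivity (SSA only lets you shrink the far system, not enlarge the conditioning system). Your treatment of the one step you do flag as inexact --- recovering $[i,j+1]$ from its only present neighbor $[i,j]$ using the $\mathcal{C}_3$ condition transferred through the containment case of local consistency --- is correct and matches the paper's use of FIG.~\ref{fig:CMI2D_3-1}.
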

\begin{proof}
Note that all the specified clusters are in fact the same cluster. Let us refer to this cluster as $A$. By the inherited local Markov conditions, the reduced density matrix of the marginal on this cluster over $[i,j]\cup [i+1,j]$ is close to $[i,j]^{UR}[i+1,j]^{UL}$ up to a trace distance $O(\epsilon)$. By a local Markov condition  that follows from $\mathcal{C}_3$(FIG.\ref{fig:CMI2D_3-1}) the reduced density matrix over $[i,j] \cup [i+1,j] \cup [i,j+1]$ is close to
$[\mathcal{R}_{[i,j+1]^A}](\rho^{[i,j] \cup [i+1,j]})$, where $\rho^{[i,j] \cup [i+1,j]}$ is the reduced density matrix of $\rho^{\bar{A}}$ over $[i,j]\cup [i+1,j]$. By using the triangle inequality, we conclude that the reduced density matrix over $[i,j]\cup [i+1,j] \cup [i,j+1]$ is close to $[i,j]^{UR}[i+1,j]^{UL}[i,j+1]^{DR}$ up to a trace distance $O(\epsilon)$. We then invoke a local Markov condition that follow from $\mathcal{C}_4$(FIG.\ref{fig:CMI2D_4-2}) to conclude that $\rho^{\bar{A}}$ is close to $[i,j]^{UR}[i+1,j]^{UL} [i,j+1]^{DR}[i+1,j+1]^{DL}$ up to a trace distance $O(\epsilon)$. A similar line of logic can be applied to  $[i+1,j+1]^{DL} [i,j+1]^{DR}[i+1,j]^{UL}[i,j]^{UR},$ by invoking the local Markov conditions in FIG.\ref{fig:CMI2D_3-5} and FIG.\ref{fig:CMI2D_4-4}. We then invoke the triangle inequality to conclude $[i,j]^{UR}[i+1,j]^{UL} [i,j+1]^{DR}[i+1,j+1]^{DL} \aeq [i+1,j+1]^{DL} [i,j+1]^{DR}[i+1,j]^{UL}[i,j]^{UR}.$
\end{proof}

\begin{lem}
\label{lemma:2D_contraction_LD}
For $i,j=1, \cdots, n-1$,
\begin{equation}
[i,j]^{UR}[i+1,j]^{UL} [i,j+1]^{DR}[i+1,j+1]^{DL}[i,j]^{-1} \aeq [i+1,j+1]^{DL} [i+1,j]^{UL}[i,j+1]^{DR}
\end{equation}
\end{lem}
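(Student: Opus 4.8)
The plan is to show that both sides of the claimed relation are $O(\epsilon)$-close to a single reference state: the reduced density matrix of the genuine four-cell marginal $\rho^{\bar{A}} \in \mathcal{M}$ over the three cells $\{[i+1,j],[i,j+1],[i+1,j+1]\}$, where $A$ is the unique four-cell cluster $\{[i,j],[i+1,j],[i,j+1],[i+1,j+1]\} \in \mathcal{C}_4$. The first thing to record is that every recovery map on either side is defined from this same cluster: $UR$ relative to $[i,j]$, $UL$ relative to $[i+1,j]$, $DR$ relative to $[i,j+1]$, and $DL$ relative to $[i+1,j+1]$ all refer to $A$, so $\rho^{\bar{A}}$ is unambiguous throughout. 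I would then note the adjacency structure inside $A$: the cells $[i,j]$ and $[i+1,j+1]$ each neighbor the three other cells, the four rhombus edges are adjacent pairs, but the opposite corners $[i+1,j]$ and $[i,j+1]$ are \emph{not} adjacent. This non-adjacency is the geometric input that makes the manifest relations of Table \ref{table:2D} applicable here.

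For the left-hand side I would simply invoke the proof of Lemma \ref{lemma:2D_4cell_permutation}, which already establishes $[i,j]^{UR}[i+1,j]^{UL}[i,j+1]^{DR}[i+1,j+1]^{DL} \aeq \rho^{\bar{A}}$. Appending the contraction $[i,j]^{-1} = \mathfrak{C}_{[i,j]}$ and using that the partial trace is nonincreasing in trace norm (the first manifest relation of Table \ref{table:2D}) yields $[i,j]^{UR}[i+1,j]^{UL}[i,j+1]^{DR}[i+1,j+1]^{DL}[i,j]^{-1} \aeq \Tr_{[i,j]}(\rho^{\bar{A}})$, which is exactly the three-cell reduced density matrix of $\rho^{\bar{A}}$ over $\{[i+1,j],[i,j+1],[i+1,j+1]\}$.

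For the right-hand side I would build the state up one cell at a time, paralleling the argument of Lemma \ref{lemma:2D_4cell_permutation}. First, $[i+1,j+1]^{DL}$ is the single-cell marginal $\rho^{[i+1,j+1]}$ of $\rho^{\bar{A}}$. Applying $[i+1,j]^{UL}$ recovers $[i+1,j]$ from its neighbor $[i+1,j+1]$; since the conditioning system is the entire current support, the marginal property $\Phi_B^{BC}(\rho^B)=\rho^{BC}$ of the recovery map makes $[i+1,j+1]^{DL}[i+1,j]^{UL}$ equal to the two-cell marginal of $\rho^{\bar{A}}$ over $\{[i+1,j+1],[i+1,j]\}$. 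Finally, applying $[i,j+1]^{DR}$ recovers $[i,j+1]$: because $[i,j+1] \centernot{-} [i+1,j]$, this recovery map is supported on $\{[i+1,j+1],[i,j+1]\}$ and leaves $[i+1,j]$ untouched, so Theorem \ref{thm:universal_recovery} together with the local Markov condition $I([i,j+1]:[i+1,j]\mid[i+1,j+1])_{\rho^{\bar{A}}} \leq \epsilon^2$ following from $\mathcal{C}_4$ (FIG.\ref{fig:CMI2D_4}) gives $[i+1,j+1]^{DL}[i+1,j]^{UL}[i,j+1]^{DR} \aeq \Tr_{[i,j]}(\rho^{\bar{A}})$. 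A single triangle inequality then identifies the right-hand side with the same reference state as the left-hand side.

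The step I expect to be the main obstacle is the characterization of the right-hand side: one must pin down precisely which local Markov condition (and which panel of FIG.\ref{fig:CMI2D_4}) legitimizes recovering $[i,j+1]$ from $[i+1,j+1]$ alone while $[i+1,j]$ is already present, and this hinges entirely on the non-adjacency $[i,j+1] \centernot{-} [i+1,j]$ so that the recovery map genuinely conditions only on $[i+1,j+1]$. Everything else is bookkeeping: confirming that all four maps reference the single cluster $A$, and summing the $O(\epsilon)$ contributions from the handful of recovery and contraction steps via the triangle inequality.
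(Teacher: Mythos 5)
Your overall architecture is the right one and is essentially the paper's: show the four-cell string on the left is $O(\epsilon)$-close to $\rho^{\bar{A}}$ (exactly as in Lemma \ref{lemma:2D_4cell_permutation}), trace out $[i,j]$ using norm-nonincrease, and then show the right-hand string is close to the same three-cell reduction of $\rho^{\bar{A}}$. Your treatment of the left-hand side, and the bookkeeping that all four superscripts refer to the single cluster $A\in\mathcal{C}_4$, are fine. There are, however, two genuine gaps on the right-hand side, both stemming from treating each cell as a single vertex rather than as a collection of many faces of the underlying graph.

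First, you claim $[i+1,j+1]^{DL}[i+1,j]^{UL}$ is \emph{exactly} the two-cell marginal of $\rho^{\bar{A}}$ because ``the conditioning system is the entire current support.'' It is not: by the definition of the polymorphic extension, the recovery map in $[i+1,j]^{UL}$ conditions only on $\mathcal{N}([i+1,j])\cap[i+1,j+1]$, the thin layer of faces of $[i+1,j+1]$ bordering $[i+1,j]$ (the striped regions in the figures), not on all of $[i+1,j+1]$. So $\Phi_B^{BC}(\rho^B)=\rho^{BC}$ does not apply; what is needed is the \emph{inherited} local Markov condition $I([i+1,j] : [i+1,j+1]\setminus\mathcal{N}([i+1,j]) \mid \mathcal{N}([i+1,j])\cap[i+1,j+1])\leq\epsilon^2$, obtained from a $\mathcal{C}_3$ condition by strong subadditivity as in Lemma \ref{lemma:2D_inheritance}, and it yields only $O(\epsilon)$-closeness. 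Second, the condition legitimizing the final extension $[i,j+1]^{DR}$ does not come from $\mathcal{C}_4$. The $\mathcal{C}_4$ condition for the cell $[i,j+1]$ conditions on $\mathcal{N}([i,j+1])\cap\bar{A}$, which includes faces of $[i,j]$ --- but $[i,j]$ is absent from the right-hand side, and strong subadditivity cannot remove systems from the conditioning register. The condition actually needed, namely that $[i,j+1]$ is nearly conditionally independent of the rest given only $\mathcal{N}([i,j+1])\cap[i+1,j+1]$, is precisely the local Markov condition of the three-cell cluster $\{[i+1,j],[i+1,j+1],[i,j+1]\}\in\mathcal{C}_3$ (FIG.\ref{fig:CMI2D_3-4}), transported to $\rho^{\bar{A}}$ by the exact consistency of a three-cell marginal with the four-cell marginal that contains it. You flagged this step as the main obstacle, and rightly so; you were just looking in the wrong figure.
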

\begin{proof}
Note that all the specified clusters are in fact the same cluster. We refer to this cluster as $A$. By the inherited local Markov conditions, the reduced density matrix of the marginal on this cluster over $[i,j]\cup [i+1,j]$ is close to $[i,j]^{UR}[i+1,j]^{UL}$ up to a trace distance $O(\epsilon)$. By a local Markov condition  that follow from $\mathcal{C}_3$(FIG.\ref{fig:CMI2D_3-1}) the reduced density matrix over $[i,j] \cup [i+1,j] \cup [i,j+1]$ is close to
$[\mathcal{R}_{[i,j+1]^A}](\rho^{[i,j] \cup [i+1,j]})$, where $\rho^{[i,j] \cup [i+1,j]}$ is the reduced density matrix of $\rho^{\bar{A}}$ over $[i,j]\cup [i+1,j]$. By using the triangle inequality, we conclude that the reduced density matrix over $[i,j]\cup [i+1,j] \cup [i,j+1]$ is close to $[i,j]^{UR}[i+1,j]^{UL}[i,j+1]^{DR}$ up to a trace distance $O(\epsilon)$. We then invoke a local Markov condition that follow from $\mathcal{C}_4$(FIG.\ref{fig:CMI2D_4-2}) to conclude that $\rho^{\bar{A}}$ is close to $[i,j]^{UR}[i+1,j]^{UL} [i,j+1]^{DR}[i+1,j+1]^{DL}$ up to a trace distance $O(\epsilon)$. The reduced density matrix of $[i,j]^{UR}[i+1,j]^{UL} [i,j+1]^{DR}[i+1,j+1]^{DL}$ is thus close to the reduced density matrix of $\rho^{\bar{A}}$ over $[i+1,j]\cup[i+1,j+1] \cup [i,j+1]$. Now, we can apply a similar line of logic to show that the reduced density matrix of $\rho^{\bar{A}}$ over $[i+1,j]\cup[i+1,j+1] \cup [i,j+1]$ to show that it is close to $[i+1,j+1]^{DL}[i+1,j]^{UL} [i,j+1]^{DR}$ up to a trace distance $O(\epsilon)$. Specifically, we use the inherited local Markov condition to show that the reduced density matrix of $\rho^{\bar{A}}$ over $[i+1,j]\cup [i+1,j+1]$ is close to $[i+1,j+1]^{DL}[i+1,j]^{UL}$. Then we use the local Markov condition in $\mathcal{C}_3$(FIG.\ref{fig:CMI2D_3-4}). By applying the triangle inequality, the claim is proved.
\end{proof}

\begin{lem}
\label{lemma:2D_contraction_UR}
For $i,j=1, \cdots, n-1$,
\begin{equation}
[i+1,j+1]^{DL}[i,j+1]^{DR}[i+1,j]^{UL}[i,j]^{UR}[i+1,j+1]^{-1} \aeq [i,j]^{UR}[i,j+1]^{DR}[i+1,j]^{UL}
\end{equation}
\end{lem}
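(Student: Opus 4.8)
The plan is to mirror the proof of Lemma~\ref{lemma:2D_contraction_LD}, exploiting the fact that all of the polymorphic extensions appearing on both sides are defined from the marginal $\rho^{\bar{A}}$ of a \emph{single} cluster $A=\{[i,j],[i+1,j],[i,j+1],[i+1,j+1]\}$ (each of the superscripts $DL,DR,UL,UR$ on the respective cells selects this same $\mathcal{C}_4$ cluster). Consequently local consistency plays no role here, only the local Markov conditions do. Both sides are states supported on the three cells $[i,j]\cup[i+1,j]\cup[i,j+1]$, namely the block $A$ with its upper-right corner removed, and I would prove the relation by showing that each side is $O(\epsilon)$-close to the single reduced density matrix $\Tr_{[i+1,j+1]}\rho^{\bar{A}}$ and then invoking the triangle inequality for the trace norm.

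For the left-hand side, observe that the leading four symbols $[i+1,j+1]^{DL}[i,j+1]^{DR}[i+1,j]^{UL}[i,j]^{UR}$ are exactly the right-hand side of Lemma~\ref{lemma:2D_4cell_permutation}. By that lemma together with the robustness rule $S\aeq S' \Rightarrow SS''\aeq S'S''$ (first row of Table~\ref{table:2D}), appending $[i+1,j+1]^{-1}$ shows this equals $[i,j]^{UR}[i+1,j]^{UL}[i,j+1]^{DR}[i+1,j+1]^{DL}[i+1,j+1]^{-1}$ up to $O(\epsilon)$. The forward build-up $[i,j]^{UR}[i+1,j]^{UL}[i,j+1]^{DR}[i+1,j+1]^{DL}$ was already shown, in the course of proving Lemma~\ref{lemma:2D_contraction_LD} (and Lemma~\ref{lemma:2D_4cell_permutation}), to be $O(\epsilon)$-close to the full marginal $\rho^{\bar{A}}$; since the partial trace $\mathfrak{C}_{[i+1,j+1]}$ is norm-nonincreasing, the left-hand side is $O(\epsilon)$-close to $\Tr_{[i+1,j+1]}\rho^{\bar{A}}$.

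For the right-hand side, I would build the L-shaped marginal directly from $[i,j]$ outward. Starting from the scalar input, $[i,j]^{UR}$ produces the reduced state on $[i,j]$; the extension $[i,j+1]^{DR}$ then recovers $[i,j+1]$ with conditioning system equal to its full neighborhood inside the current support, so this step is exact by the property $[\mathcal{R}_A]_B^{BC}(\rho^B)=\rho^{BC}$; the final extension $[i+1,j]^{UL}$ recovers $[i+1,j]$, and its faithfulness up to $O(\epsilon)$ follows from the appropriate $\mathcal{C}_3$ local Markov condition via Theorem~\ref{thm:universal_recovery} (or is exact as well, should the conditioning again exhaust the neighborhood). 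Hence $[i,j]^{UR}[i,j+1]^{DR}[i+1,j]^{UL}\aeq \Tr_{[i+1,j+1]}\rho^{\bar{A}}$, and the triangle inequality closes the argument.

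The main obstacle is the bookkeeping at each extension: deciding, for a given polymorphic extension $[\,\cdot\,]^{I}$, whether the conditioning set $\mathcal{N}(a)\cap X$ is the full neighborhood of $a$ in the current support (making the step exact) or a proper subset (making it an $O(\epsilon)$ step governed by a specific local Markov condition), and, in the latter case, identifying which of the figures in FIG.~\ref{fig:CMI2D_3} or FIG.~\ref{fig:CMI2D_4} supplies the needed inequality $I(a : X\setminus\mathcal{N}(a) \mid \mathcal{N}(a)\cap X)_{\rho^{\bar{A}}}\le\epsilon^2$. This hinges entirely on the adjacency pattern among the four cells of $A$, in particular on which of the two diagonals of the block is an edge of $G$, since that pattern fixes every conditioning system that appears. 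Once these adjacencies are fixed, each step reduces either to the exact recovery identity or to a single application of Theorem~\ref{thm:universal_recovery}, and, the number of steps being bounded, the errors accumulate additively to the stated $O(\epsilon)$.
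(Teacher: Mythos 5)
Your proposal matches the paper's proof in structure: both sides are shown to be $O(\epsilon)$-close to the reduced density matrix of $\rho^{\bar{A}}$ on $[i,j]\cup[i+1,j]\cup[i,j+1]$ and the triangle inequality closes the argument, the only cosmetic difference being that the paper establishes the left-hand side directly from the inherited local Markov condition together with FIG.~\ref{fig:CMI2D_3-5} and FIG.~\ref{fig:CMI2D_4-4} rather than detouring through Lemma~\ref{lemma:2D_4cell_permutation}, and the right-hand side via the inherited condition plus FIG.~\ref{fig:CMI2D_3-2}. One small correction: the step $[i,j]^{UR}[i,j+1]^{DR}$ is not exact, because the conditioning system $\mathcal{N}([i,j+1])\cap[i,j]$ is a proper subset of the cell $[i,j]$, so it costs an $O(\epsilon)$ application of the inherited local Markov condition --- precisely the fallback you anticipate in your final paragraph, so the error budget is unaffected.
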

\begin{proof}
The proof is essentially identical to that of Lemma \ref{lemma:2D_contraction_LD}, so we only provide a sketch. Note that the specified clusters are in fact the same cluster, and that the marginal over this cluster is close to $[i+1,j+1]^{DL}[i,j+1]^{DR}[i+1,j]^{UL}[i,j]^{UR}$ up to a trace distance of $O(\epsilon)$. This can be shown by invoking the inherited local Markov condition and the local Markov conditions corresponding to FIG.\ref{fig:CMI2D_3-5} and FIG.\ref{fig:CMI2D_4-4}. Then we sequentially use the local Markov conditions to show that the reduced density matrix of this marginal on $[i,j]\cup [i,j+1] \cup [i+1,j]$ is close to $[i,j]^{UR}[i,j+1]^{DR}[i+1,j]^{UL}$ up to a trace distance of $O(\epsilon)$, by invoking the local Markov condition in $\mathcal{C}_3$(FIG.\ref{fig:CMI2D_3-2}). We then apply the triangle inequality to prove the claim.
\end{proof}

These findings are summarzied in Table \ref{table:inheritance}.
\begin{table}[h]
\begin{tabular}{c c c}
Relations & Derivation & Appearance \\ \hline
$[i,j]^{UR} [i+1,j]^{UL}S \aeq [i+1,j]^{DL}[i,j]^{DR}S$ & Inherited from $\mathcal{C}_3$, Consistency & Lemma \ref{lemma:2D_inheritance} \\ \hline
$[i,j]^{UR} [i,j+1]^{DR}S \aeq [i,j+1]^{DL}[i,j]^{UL}S$ & Inherited from $\mathcal{C}_3$, Consistency & Lemma \ref{lemma:2D_inheritance_vertical} \\ \hline
$[i,j]^{UR}[i+1,j]^{UL} [i,j+1]^{DR}[i+1,j+1]^{DL}S$ &  Inherited from $\mathcal{C}_3$,  Markov, &  \\
 $\aeq [i+1,j+1]^{DL} [i,j+1]^{DR}[i+1,j]^{UL}[i,j]^{UR}S$ & FIG.\ref{fig:CMI2D_3-1}, \ref{fig:CMI2D_4-2}, \ref{fig:CMI2D_3-5}, \ref{fig:CMI2D_4-4} &Lemma \ref{lemma:2D_4cell_permutation}\\
\hline
$[i,j]^{UR}[i+1,j]^{UL} [i,j+1]^{DR}[i+1,j+1]^{DL}[i,j]^{-1}S$  & Inherited from $\mathcal{C}_3$,  Markov, &  \\
 $\aeq [i+1,j+1]^{DL} [i+1,j]^{UL} [i,j+1]^{DR}S$ & FIG.\ref{fig:CMI2D_3-1}, \ref{fig:CMI2D_4-2}, \ref{fig:CMI2D_3-4} & Lemma \ref{lemma:2D_contraction_LD} \\
\hline
$[i+1,j+1]^{DL}[i,j+1]^{DR}[i+1,j]^{UL}[i,j]^{UR}[i+1,j+1]^{-1}S$   & Inherited from $\mathcal{C}_3$, Markov, &  \\
$\aeq [i,j]^{UR}[i,j+1]^{DR}[i+1,j]^{UL}S$ & FIG.\ref{fig:CMI2D_3-5}, \ref{fig:CMI2D_4-4}, \ref{fig:CMI2D_3-2}  &  Lemma \ref{lemma:2D_contraction_UR} \\
\hline
\end{tabular}
\caption{Relations that involve bounded number of cells. These relations make use of the inherited local Markov conditions that are defined in terms of the clusters in $\mathcal{C}_3$. The string $S$ can be arbitrary, provided that the state represented by the string is well-defined on both sides of the relations.\label{table:inheritance}}
\end{table}

\subsection{Row relations\label{section:row_relations}}
By making use of the relations summarized in Table \ref{table:inheritance}, we build up analogues of the forward contraction, cell exchange, and backward contraction operations that were discussed in Section \ref{section:Proof1}. The main difference is that these operations are defined on rows, as opposed to the individual cells. Recall that in Section \ref{section:Proof1} we were able to reduce a string of length $O(n)$ to a string of length $2$. Our temporary goal is to reduce a string that involves $O(n)$ rows of cells to a string that involves $2$ rows of cells. We need to first derive elementary row operations to achieve this goal.

Since we are dealing with rows, the following notation will be useful.
\begin{defi}
For $j=1,\cdots, n-1$
\begin{equation}
\begin{aligned}
[:,j]^U &:= [1,j]^{UR} \Pi_{i=1}^{n-1}[i+1,j]^{UL}\\
[\bar{:},j]^U &:= [n,j]^{UL} \Pi_{i=1}^{n-1}[n-i,j]^{UR}.
\end{aligned}
\end{equation}
For $j=2,\cdots, n$
\begin{equation}
\begin{aligned}
[:,j]^D &:= [1,j]^{DR} \Pi_{i=1}^{n-1}[i+1,j]^{DL} \\
[\bar{:},j]^D &:= [n,j]^{DL}\Pi_{i=1}^{n-1} [n-i,j]^{DR}.
\end{aligned}
\end{equation}
For $j=1,\cdots, n$
\begin{align}
[:,j]^{-1} &:= \Pi_{i=1}^n [i,j]^{-1}.
\end{align}
\end{defi}

A useful subroutine is the so called internal reversal operation. This operation relates two different strings of cells that appear in the same row, with different orders. Furthermore, it also changes the choice of the clusters that define the universal recovery maps.
\begin{lem}\label{lemma:2D_internal_reversal}
(Internal reversal) For $j=2, \cdots, n-1$
\begin{equation}
[:,j]^U \aeqn [\bar{:},j]^D.
\end{equation}
\end{lem}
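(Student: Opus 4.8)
The plan is to treat this as the two-dimensional, row-level analogue of the one-dimensional cell exchange (Lemma~\ref{lemma:1D_cell_exchange}): I will transport $[:,j]^U$ into $[\bar{:},j]^D$ through a family of intermediate strings, advancing one cell per step so that the accumulated error stays $O(n\epsilon)$. Concretely, I would prove by induction on $m$ the invariant
\begin{equation}
[:,j]^U \aeqm W_m, \qquad W_m := [m+1,j]^{DL}\Bigl(\Pi_{i=m+2}^{n}[i,j]^{UL}\Bigr)\Bigl(\Pi_{i=1}^{m}[m+1-i,j]^{DR}\Bigr),
\end{equation}
for $m=1,\dots,n-1$, where the last product is the reversed tail $[m,j]^{DR}[m-1,j]^{DR}\cdots[1,j]^{DR}$. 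The endpoints are exactly what we want: the first inheritance applied to $[:,j]^U$ produces $W_1$, while $W_{n-1} = [n,j]^{DL}[n-1,j]^{DR}\cdots[1,j]^{DR} = [\bar{:},j]^D$.

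Each inductive step $W_m \to W_{m+1}$ is carried out by the same three-move cycle. First, a \emph{consistency swap}: the leading symbol $[m+1,j]^{DL}$ acts on the scalar $1$, so it and $[m+1,j]^{UR}$ both produce the single-cell marginal over $[m+1,j]$ (of the $DL$- and $UR$-clusters respectively), which agree to $O(\epsilon)$ by the local consistency condition; hence $[m+1,j]^{DL}S \aeq [m+1,j]^{UR}S$. Second, the horizontal inheritance relation (Lemma~\ref{lemma:2D_inheritance}) is applied to the now-leading adjacent pair, $[m+1,j]^{UR}[m+2,j]^{UL} \aeq [m+2,j]^{DL}[m+1,j]^{DR}$, which reverses the pair and flips $U\to D$. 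Third, manifest commutations (Table~\ref{table:2D}): since $[m+1,j]$ is not adjacent to any of $[m+3,j],\dots,[n,j]$, the freed symbol $[m+1,j]^{DR}$ slides rightward past all remaining $UL$ symbols and lands immediately in front of the growing $DR$ tail, extending it by one cell and exposing $[m+2,j]^{DL}$ as the new leading symbol. This is precisely $W_{m+1}$. The first step, from $[:,j]^U$ to $W_1$, is identical except that no consistency swap is needed since $[1,j]^{UR}$ is already in the required form.

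For the error budget, each cycle invokes consistency once and Lemma~\ref{lemma:2D_inheritance} once, each contributing $O(\epsilon)$, while the manifest commutations are exact and the appended suffix is handled by the robustness relation $S\aeq S' \Rightarrow SS''\aeq S'S''$ of Table~\ref{table:2D}. There are $n-1$ cycles, so by the triangle inequality for the trace norm the total deviation is $O(n\epsilon)$, i.e. $[:,j]^U \aeqn [\bar{:},j]^D$, as claimed.

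The step I expect to be the main obstacle is the consistency swap and the bookkeeping that makes it legitimate. The relations in Table~\ref{table:inheritance} only convert a leading $U{*}R$--$U{*}L$ pair; after one inheritance the leading pair becomes $DL$--$UL$, for which no listed relation applies, so without re-converting $DL$ back to $UR$ the induction stalls. The swap is valid only because the symbol sits at the very front and therefore acts on the scalar, reducing both sides to single-cell marginals, and I must verify that this ``front'' property is preserved by the invariant at every step. I must also track adjacency carefully in the third move: the freed $[m+1,j]^{DR}$ may be commuted only past the non-adjacent $UL$ cells and must stop at the head of the tail, since it \emph{is} adjacent to its tail neighbor $[m,j]^{DR}$ and cannot be pushed through it.
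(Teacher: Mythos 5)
Your proof is correct and takes essentially the same route as the paper's: one pass of the horizontal inheritance relation (Lemma~\ref{lemma:2D_inheritance}), a local-consistency swap of the leading $DL$ symbol back to $UR$, and manifest commutations to slide the freed $DR$ symbol onto the growing tail, repeated $n-1$ times for a total error of $O(n\epsilon)$. Your explicit invariant $W_m$ merely formalizes the iteration that the paper describes informally, and you correctly isolate the consistency swap as the step that keeps the induction from stalling.
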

\begin{proof}
By Lemma \ref{lemma:2D_inheritance}, we have $[i,j]^{UR}[i+1,j]^{UL}\aeq [i+1,j]^{DL}[i,j]^{DR}$. By the local consistency condition,
\begin{equation}
[i+1,j]^{DL}[i,j]^{DR} \aeq [i+1,j]^{UR} [i,j]^{DR}.
\end{equation}
Also, one can move $[i,j]^{DR}$ towards the end of the string until it reaches $[i-1,j]$ by applying a manifest relation because $[i,j]\centernot{-} [i',j]$ for $i'\geq i+2$. We repeat this procedure until $i=n-2$ and then use $[n-1,j]^{UR}[n,j]^{UL} \aeq [n,j]^{DL} [n-1,j]^{DR}$. This completes the proof.
\end{proof}

\begin{lem}\label{lemma:2D_forward_row_contraction}
(Forward row contraction) For $j=1,\cdots, n-2$
\begin{equation}
[:, j]^U [:,j+1]^{D} [:,j]^{-1} \aeqn [:,j+1]^{U}.
\end{equation}
\end{lem}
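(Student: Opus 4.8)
The plan is to prove this as the two-dimensional, row-level analogue of the one-dimensional forward contraction (Lemma~\ref{lemma:1D_forward_contraction}): starting from $[:,j]^U[:,j+1]^D[:,j]^{-1}$ I would peel off the cells of row $j$ one column at a time, from left to right, and then finish with a single application of internal reversal. The elementary peeling step is the localized contraction of Table~\ref{table:inheritance} (Lemma~\ref{lemma:2D_contraction_LD}),
\[
[i,j]^{UR}[i+1,j]^{UL}[i,j+1]^{DR}[i+1,j+1]^{DL}[i,j]^{-1}S \aeq [i+1,j+1]^{DL}[i+1,j]^{UL}[i,j+1]^{DR}S,
\]
which deletes the bottom-left cell of the $2\times 2$ block anchored at column $i$.

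The peeling loop maintains the frontier in the canonical form $[i,j]^{UR}[i,j+1]^{DR}$ at the left end, with the already-finished row-$(j+1)$ cells $[i-1,j+1]^{DR},\dots,[1,j+1]^{DR}$ deposited to its right in reversed column order. For each $i$ I would first use the manifest commutation relations of Table~\ref{table:2D} to pull the extensions $[i+1,j]^{UL}$, $[i+1,j+1]^{DL}$ and the contraction $[i,j]^{-1}$ into place; this is legitimate because every extension at column $\ge i+2$ is non-adjacent to, and hence commutes past, the cells of the block and the contraction $[i,j]^{-1}$. I would then apply Lemma~\ref{lemma:2D_contraction_LD} to remove $[i,j]$ at a cost $O(\epsilon)$, leaving $[i+1,j+1]^{DL}[i+1,j]^{UL}[i,j+1]^{DR}$. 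Reading the vertical inheritance relation (Lemma~\ref{lemma:2D_inheritance_vertical}) backwards as $[i+1,j+1]^{DL}[i+1,j]^{UL}\aeq[i+1,j]^{UR}[i+1,j+1]^{DR}$ relabels the exposed pair, re-establishing the frontier at column $i+1$ while dropping $[i,j+1]^{DR}$ into the deposited list. Iterating advances the frontier across the whole row.

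Once every cell of row $j$ has been removed, row $j$ is gone and the surviving string is supported on row $j+1$ alone, in precisely the reversed down-form $[\bar{:},j+1]^D = [n,j+1]^{DL}\Pi_{i=1}^{n-1}[n-i,j+1]^{DR}$ accumulated by the loop. A single application of internal reversal (Lemma~\ref{lemma:2D_internal_reversal}), valid since $j+1\in\{2,\dots,n-1\}$, rewrites this as $[:,j+1]^U$. Collecting errors, each of the $O(n)$ peeling steps contributes $O(\epsilon)$ and the final reversal contributes $O(n\epsilon)$, so the two sides agree to order $\aeqn$, as claimed.

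The main obstacle I anticipate is the bookkeeping that keeps every intermediate string well-defined and in canonical form: one must check, using the fact that two cells are adjacent exactly when they are lattice neighbors (their indices differing by one along a lattice direction), that the block extensions, the deposited $[\,\cdot\,,j+1]^{DR}$ cells, and the moving contraction $[i,j]^{-1}$ all commute where needed and never destroy the conditioning neighbors required by a later extension. The genuinely exceptional case is the boundary column $i=n$, where no $2\times 2$ block exists; there I would remove $[n,j]$ directly, using that $\mathfrak{C}_{[n,j]}\circ\mathfrak{E}_{[n,j]^{I}}$ collapses to the identity on the remaining support once $[n,j]^{-1}$ is commuted next to the extension that created $[n,j]$, and I would verify that this leaves $[n,j+1]$ in exactly the $DL$ label demanded by $[\bar{:},j+1]^D$.
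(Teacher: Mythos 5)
Your proposal follows the paper's own proof almost step for step: the same left-to-right peeling loop built from Lemma~\ref{lemma:2D_contraction_LD} followed by Lemma~\ref{lemma:2D_inheritance_vertical} read in reverse, the same observation that the discarded $[\,\cdot\,,j+1]^{DR}$ extensions pile up at the end of the string in reversed column order so that the survivor is exactly $[\bar{:},j+1]^D$, and the same final application of Lemma~\ref{lemma:2D_internal_reversal} with the same $O(n\epsilon)$ error count. The bookkeeping you flag as the main obstacle does go through under the triangular-lattice adjacency ($[a,b]$ neighbors $[a\pm1,b]$, $[a,b\pm1]$, $[a+1,b+1]$, $[a-1,b-1]$), and is exactly what the paper's manifest relations are invoked for.

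The one genuine flaw is your treatment of the boundary. First, the exception begins one step earlier than you state: after the contraction at $i=n-1$ the frontier is $[n,j+1]^{DL}[n,j]^{UL}$, and the relabeling $[n,j+1]^{DL}[n,j]^{UL}\aeq[n,j]^{UR}[n,j+1]^{DR}$ is unavailable because Lemma~\ref{lemma:2D_inheritance_vertical} holds only for $i=2,\dots,n-1$ and the symbol $[n,j]^{UR}$ is not even well-defined (the $UR$ cluster of $[n,j]$ lies outside the lattice). More importantly, your proposed fix --- that $\mathfrak{C}_{[n,j]}\circ\mathfrak{E}_{[n,j]^{I}}$ ``collapses to the identity on the remaining support'' --- is not a valid exact identity: a universal recovery map $\Phi_B^{BC}$ followed by $\Tr_C$ is not the identity channel on $\mathcal{A}_B$; it only returns the correct marginal when fed the correct marginal $\rho^B$. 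The correct (and available) argument is the one the paper uses: by the inherited local Markov condition, $[n,j+1]^{DL}[n,j]^{UL}$ is $O(\epsilon)$-close in trace distance to the true marginal on $[n,j]\cup[n,j+1]$, and since partial trace is norm-nonincreasing, appending $[n,j]^{-1}$ yields a state $O(\epsilon)$-close to the marginal on $[n,j+1]$ alone, which is exactly the one-symbol string $[n,j+1]^{DL}$. With that substitution your proof is complete.
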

\begin{proof}
By using the manifest relations, one can rearrange the cells so that the beginning part of the string is of the following form:
\begin{equation}
[1,j]^{UR}[2,j]^{UL}[1,j+1]^{DR}[2,j+1]^{DL} [1,j]^{-1}\cdots.
\end{equation}
By using Lemma \ref{lemma:2D_contraction_LD}, this is equivalent, up to an $O(\epsilon)$ trace distance, to the following state:
\begin{equation}
[2,j+1]^{DL} [2,j]^{UL}[1,j+1]^{DR} \cdots.
\end{equation}
By using Lemma \ref{lemma:2D_inheritance_vertical}, this is equivalent, up to an $O(\epsilon)$ trace distance, to the following state:
\begin{equation}
[2,j]^{UR}[2,j+1]^{DR}[1,j+1]^{DR}\cdots.
\end{equation}
By using the manifest relations, one can move $[1,j+1]^{DR}$ towards the end of the string. Then one can bring $[3,j]^{UL}$, $[3,j+1]^{DL}$, and $[2,j]^{-1}$ to the beginning part of the string. Noting that $[3,j] \centernot{-} [2,j+1]$, we can rearrange the beginning part of the string as
\begin{equation}
[2,j]^{UR}[3,j]^{UL}[2,j+1]^{DR}[3,j+1]^{DL} [1,j]^{-1}\cdots.
\end{equation}
Now we can repeat the procedure delineated above. That is, we use Lemma \ref{lemma:2D_contraction_LD} and then Lemma \ref{lemma:2D_inheritance_vertical}, and then rearrange the string using the manifest relations, and repeat the process.

At the end of the procedure, the beginning part of the string looks as follows:
\begin{equation}
[n,j+1]^{DL}[n,j]^{UL}\cdots.
\end{equation}
By invoking the inherited Markov condition, $[n,j+1]^{DL}[n,j]^{UL}[n,j]^{-1} \aeq [n,j+1]^{DL}$, thus completely removing all the cells on the $j$th row.

Meanwhile, what has been going on at the end of the string? We have been dumping $[i,j+1]^{UR}$ from $i=1$ to $n-1$, and consequently, they have been piling up at the end. The entire string is
\begin{equation}
[n,j+1]^{DL} (\Pi_{i=1}^{n-1}[n-i,j+1]^{DR}).
\end{equation}
Thus we have shown that
\begin{equation}
[:, j]^U [:,j+1]^{D} [:,j]^{-1}\aeqn [\bar{:}, j+1]^D.
\end{equation}
By invoking Lemma \ref{lemma:2D_internal_reversal} and applying the triangle inequality, the claim is proved.
\end{proof}

Next is an analogue of the cell exchange, which we call as the row exchange.
\begin{lem}\label{lemma:2D_row_exchange} (Row exchange) For $j=1, \cdots, n-1$
\begin{equation}
[:,j]^U[:,j+1]^D \aeqn [\bar{:}, j+1]^D [\bar{:}, j]^{U}
\end{equation}
\end{lem}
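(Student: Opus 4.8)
The plan is to prove the row exchange by a left-to-right sweep over the columns, in direct analogy with the proofs of the internal reversal (Lemma \ref{lemma:2D_internal_reversal}) and the forward row contraction (Lemma \ref{lemma:2D_forward_row_contraction}), but now using the four-cell permutation (Lemma \ref{lemma:2D_4cell_permutation}) as the elementary flip. The key observation is that, read as a single sequence of cells, the right-hand side $[\bar{:},j+1]^D[\bar{:},j]^U$ is exactly the reverse of the build order of the left-hand side $[:,j]^U[:,j+1]^D$: the latter lists row $j$ from left to right followed by row $j+1$ from left to right, while the former lists row $j+1$ from right to left followed by row $j$ from right to left. Thus the lemma asserts that one may reverse the entire build order of the two-row region at the cost of flipping the interior cluster labels from the ``L'' variant to the ``R'' variant, and Lemma \ref{lemma:2D_4cell_permutation} is precisely the local version of this statement for a single $2\times 2$ block.

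First I would establish the generic iteration for a column $1\le i\le n-2$. Using the manifest relations of Table \ref{table:2D}---valid because a cell $[i,j]$ is adjacent only to $[i\pm 1,j]$, $[i,j\pm 1]$, $[i+1,j+1]$, and $[i-1,j-1]$---I bring the four cells of columns $i$ and $i+1$ to the front of the string in the form $[i,j]^{UR}[i+1,j]^{UL}[i,j+1]^{DR}[i+1,j+1]^{DL}$, exactly as in the opening rearrangement of Lemma \ref{lemma:2D_forward_row_contraction}. I then apply Lemma \ref{lemma:2D_4cell_permutation} to reverse this block, and use a single manifest commutation together with Lemma \ref{lemma:2D_inheritance_vertical} applied to column $i+1$ to convert its labels from the pair $(DL,UL)$ back to $(DR,UR)$---the same label-fixing maneuver that appears in the proof of Lemma \ref{lemma:2D_forward_row_contraction}. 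The front of the string then reads $[i+1,j]^{UR}[i+1,j+1]^{DR}[i,j+1]^{DR}[i,j]^{UR}$, so that column $i+1$ is primed for the next iteration while the now-finished column-$i$ cells $[i,j+1]^{DR}$ and $[i,j]^{UR}$ already carry the ``R'' labels demanded by the target.

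Next I would eject the two completed cells toward the tail of the string, where the reversed rows accumulate. Because the target interleaves the two rows (row $j+1$ entirely preceding row $j$), there are two accumulation zones: $[i,j]^{UR}$ is commuted all the way to the very end, halting just before the previously deposited $[i-1,j]^{UR}$, whereas $[i,j+1]^{DR}$ is commuted only to the end of the row-$(j+1)$ zone, halting just before $[i-1,j+1]^{DR}$. Each such move is an exact manifest commutation; one checks that $[i,j]$ (respectively $[i,j+1]$) is non-adjacent to every cell it must pass, being adjacent within the surviving string only to the already-placed $[i-1,j]$ (respectively $[i-1,j+1]$) at which it halts. Iterating over $i=1,\dots,n-2$ and handling the final flip on columns $n-1$ and $n$---where column $n$ retains its boundary labels $(DL,UL)$ and hence needs no vertical inheritance---deposits the cells in decreasing column order and assembles precisely $[\bar{:},j+1]^D[\bar{:},j]^U$. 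Since only the $O(n)$ applications of Lemma \ref{lemma:2D_4cell_permutation} and Lemma \ref{lemma:2D_inheritance_vertical} carry error, the manifest commutations being exact, the triangle inequality yields the total bound $O(n\epsilon)$, that is $\aeqn$.

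The main obstacle is the bookkeeping of this two-zone routing rather than any new conceptual ingredient. Unlike the internal reversal, which collapses a single row into a single accumulation zone, here I must simultaneously maintain two interleaved zones and verify at every shuttle that the cell being moved is non-adjacent to all intervening cells---and in particular that the two ejected cells $[i,j+1]^{DR}$ and $[i,j]^{UR}$, which are themselves adjacent, are routed so that they never need to cross. Additional care is required at the boundary columns $i=1$ and $i=n$, where the labels are forced to the ``L'' variant and the assembly and consistency steps differ slightly from the interior case. I expect these adjacency checks, though elementary, to be where essentially all of the labor resides.
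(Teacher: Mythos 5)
Your proof is correct and follows essentially the same route as the paper's: the same left-to-right sweep using Lemma \ref{lemma:2D_4cell_permutation} as the elementary flip, the same label fix via a manifest commutation plus Lemma \ref{lemma:2D_inheritance_vertical}, and the same special treatment of the last column pair. The only (immaterial) difference is bookkeeping: you route the two ejected cells of column $i$ immediately into two separate accumulation zones, whereas the paper keeps each pair $[i,j+1]^{DR}[i,j]^{UR}$ together at the tail and de-interleaves the rows with manifest commutations in a single final step.
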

\begin{proof}
By using the manifest relations, one can rearrange the cells so that the beginning part of the string is of the following form:
\begin{equation}
[1,j]^{UR}[2,j]^{UL}[1,j+1]^{DR}[2,j+1]^{DL} \cdots.
\end{equation}
By Lemma \ref{lemma:2D_4cell_permutation}, this is equivalent, up to an $O(\epsilon)$ trace distance, to the following state:
\begin{equation}
[2,j+1]^{DL}[1,j+1]^{DR}[2,j]^{UL}[1,j]^{UR} \cdots.
\end{equation}
Since $[1,j+1] \centernot{-}[2,j]$, this is exactly equal to the following string:
\begin{equation}
[2,j+1]^{DL}[2,j]^{UL}[1,j+1]^{DR}[1,j]^{UR} \cdots.
\end{equation}
By using the manifest relations, one can move $[1,j+1]^{DR}[1,j]^{UR}$ to the end of the string.
\begin{equation}
[2,j+1]^{DL}[2,j]^{UL}\cdots [1,j+1]^{DR}[1,j]^{UR}.
\end{equation}
Now we use Lemma \ref{lemma:2D_inheritance_vertical} to convert the string into the following form with an $O(\epsilon)$ error:
\begin{equation}
[2,j]^{UR}[2,j+1]^{DR}\cdots [1,j+1]^{DR}[1,j]^{UR}.
\end{equation}
By using the manifest relations, one can rearrange the beginning part of the string as follows:
\begin{equation}
[2,j]^{UR}[3,j]^{UL}[2,j+1]^{DR}  [3,j+1]^{DL} \cdots  [1,j+1]^{DR}[1,j]^{UR}.
\end{equation}
By repeating this procedure $O(n)$ times, the string is transformed into the following form:
\begin{equation}
[n,j+1]^{DL}[n,j]^{UL} (\Pi_{i=1}^{n-1} [n-i,j+1]^{DR}[n-i,j]^{UR}),
\end{equation}
where in the last step we used $[n-1,j]^{UR}[n,j]^{UL}[n-1,j+1]^{DR}[n,j+1]^{DL} \aeq[n,j+1]^{DL}[n,j]^{UL}[n-1,j+1]^{DR}[n-1,j]^{UR}$.
Using the manifest relations, one can rearrange the string as
\begin{equation}
[n,j+1]^{DL}(\Pi_{i=1}^{n-1} [n-i,j+1]^{DR})[n,j]^{UL} (\Pi_{i=1}^{n-1} [n-i,j]^{UR}).
\end{equation}
Throughout this entire procedure we have used $O(n)$ derived relations, and thus the claim is proved.
\end{proof}

Now we prove an analogue of the backward contraction.
\begin{lem} \label{lemma:2D_backward_row_contraction}
(Backward row contraction) For $j=2,\cdots, n-1$,
\begin{equation}
[\bar{:}, j+1]^D[\bar{:},j]^U [:, j+1]^{-1} \aeqn [\bar{:}, j]^D.
\end{equation}
\end{lem}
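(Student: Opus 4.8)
The plan is to mirror the proof of Lemma~\ref{lemma:2D_forward_row_contraction} (forward row contraction), interchanging the roles of ``up'' and ``down'' and sweeping the row from the $n$-end rather than the $1$-end. The barred notation in the statement is exactly what makes this clean: writing out
\begin{equation}
[\bar{:},j+1]^D = [n,j+1]^{DL}\Pi_{i=1}^{n-1}[n-i,j+1]^{DR}, \quad [\bar{:},j]^U = [n,j]^{UL}\Pi_{i=1}^{n-1}[n-i,j]^{UR},
\end{equation}
the leading cells $[n,j+1]^{DL}[n-1,j+1]^{DR}$ and $[n,j]^{UL}[n-1,j]^{UR}$ already match the left-hand side of Lemma~\ref{lemma:2D_contraction_UR} with $i=n-1$. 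So I would first use the manifest relations of Table~\ref{table:2D} to commute the contraction $[n,j+1]^{-1}$ to the front, forming the block $[n,j+1]^{DL}[n-1,j+1]^{DR}[n,j]^{UL}[n-1,j]^{UR}[n,j+1]^{-1}$; this is licensed because $[n,j+1]$ is non-adjacent to every row cell lying further down.

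Next I would run the sweep. Applying Lemma~\ref{lemma:2D_contraction_UR} with $i=n-1$ deletes $[n,j+1]$ and leaves $[n-1,j]^{UR}[n-1,j+1]^{DR}[n,j]^{UL}$ at the head; Lemma~\ref{lemma:2D_inheritance_vertical} then rewrites $[n-1,j]^{UR}[n-1,j+1]^{DR}$ as $[n-1,j+1]^{DL}[n-1,j]^{UL}$, exposing a fresh $[n-1,j+1]^{DL}$ for the next contraction. I would push the finished row-$j$ cell $[n,j]^{UL}$ toward the right end with manifest relations; because consecutive row cells are horizontally adjacent it cannot pass its predecessor, so the finished cells stack in increasing index order. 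Bringing in the next triple $[n-2,j+1]^{DR}, [n-2,j]^{UR}, [n-1,j+1]^{-1}$ and rearranging (using $[i,j+1]\centernot{-}[i+1,j]$) reconstitutes the block pattern with $i\to n-2$, and the process repeats down to $i=1$.

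At the boundary I would terminate with an inherited Markov condition on the two-cell cluster $\{[1,j],[1,j+1]\}$ of the form $[1,j+1]^{DL}[1,j]^{UL}[1,j+1]^{-1}\aeq[1,j]^{UR}$, removing the last row-$(j+1)$ cell and supplying the leading $[1,j]^{UR}$ of the pile. The accumulated string is then exactly $[1,j]^{UR}\Pi_{i=1}^{n-1}[i+1,j]^{UL} = [:,j]^U$, so I would finish by invoking Lemma~\ref{lemma:2D_internal_reversal}, $[:,j]^U \aeqn [\bar{:},j]^D$, together with the triangle inequality. Since each of the $O(n)$ contracted cells costs a constant number of $O(\epsilon)$ derived relations and the final reversal is itself $O(n\epsilon)$, the total error is $O(n\epsilon)$, as claimed.

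The main obstacle is bookkeeping rather than a new idea. One must verify at every step that the manifest commutations are permitted by the adjacency structure --- in particular the crucial non-adjacencies $[i,j+1]\centernot{-}[i+1,j]$ and the horizontal adjacency of neighbouring row cells that forces the clean stacking --- and one must treat the two row-ends carefully so that the surviving cell acquires precisely the $UR$ type needed to complete $[:,j]^U$. These are the same delicate points that appear in Lemma~\ref{lemma:2D_forward_row_contraction}, and I would dispatch them in the identical manner.
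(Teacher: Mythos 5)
Your proposal is correct and follows essentially the same route as the paper, which likewise forms the head block $[n,j+1]^{DL}[n-1,j+1]^{DR}[n,j]^{UL}[n-1,j]^{UR}[n,j+1]^{-1}$, sweeps down the row with Lemma \ref{lemma:2D_contraction_UR} and Lemma \ref{lemma:2D_inheritance_vertical} plus manifest relations, arrives at $[:,j]^U$, and finishes with Lemma \ref{lemma:2D_internal_reversal}. The only slip is at the left boundary: the clusters named by $[1,j+1]^{DL}$ and $[1,j]^{UL}$ do not exist (they would sit at column $0$), so the terminal step should instead read $[1,j]^{UR}[1,j+1]^{DR}[1,j+1]^{-1}\aeq[1,j]^{UR}$, which follows from the same inherited Markov condition you invoke.
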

\begin{proof}
The proof is similar to Lemma \ref{lemma:2D_forward_row_contraction}. The only difference is the order of the cells and the fact that we use Lemma \ref{lemma:2D_contraction_UR} instead of Lemma \ref{lemma:2D_contraction_LD}. By rotating the diagram by $\pi$, one should be able to see that the structure of the proof is exactly the same. Nevertheless, we explain each of the steps for concreteness. First, we use the manifest relations so that the beginning part of the string has the form of $[n,j+1]^{DL}[n-1,j+1]^{DR}[n,j]^{UL}[n-1,j]^{UR}[n,j+1]^{-1}$. We then apply Lemma \ref{lemma:2D_contraction_UR}, apply Lemma \ref{lemma:2D_inheritance_vertical} and manifest relations so that the same type of procedure can be applied repeatedly. After applying this procedure $O(n)$ times, we end up with the string $[:,j]^U$. By Lemma \ref{lemma:2D_internal_reversal}, this is close to $[:,i]^D$ up to a trace distance of $O(n\epsilon)$. Thus the claim is proved.
\end{proof}

The relations involving the rows are summarized in Table \ref{table:row_relations}.
\begin{table}[h]
\begin{tabular}{c c c c}
Name & Relations & Derivation & Appearance \\ \hline
Internal reversal & $[:,j]^US \aeqn [\bar{:},j]^DS$ & Lemma \ref{lemma:2D_inheritance}, Consistency & Lemma \ref{lemma:2D_internal_reversal} \\ \hline
Forward row contraction& $[:,j]^U [:,j+1]^{D} [:,j]^{-1}S \aeqn [:,j+1]^{U}S$ & Lemma \ref{lemma:2D_inheritance_vertical}, \ref{lemma:2D_contraction_LD}, \ref{lemma:2D_internal_reversal} & Lemma \ref{lemma:2D_forward_row_contraction} \\ \hline
Row exchange&$[:,j]^U[:,j+1]^DS \aeqn [\bar{:}, j+1]^D [\bar{:} j]^{U}S$ & Lemma \ref{lemma:2D_inheritance_vertical}, \ref{lemma:2D_4cell_permutation} & Lemma \ref{lemma:2D_row_exchange} \\ \hline
Backward row contraction&$[\bar{:}, j+1]^D[\bar{:},j]^U [:, j+1]^{-1}S \aeqn [\bar{:}, j]^DS$ & Lemma \ref{lemma:2D_inheritance_vertical}, \ref{lemma:2D_contraction_UR} & Lemma \ref{lemma:2D_backward_row_contraction} \\ \hline
\end{tabular}
\caption{Relations involving rows of cells. The string $S$ can be arbitrary, provided that the strings on both sides of the relations are well-defined.\label{table:row_relations}}
\end{table}

Now we arrive at our temporary conclusion: that the reduced density matrix of our proposed state can be represented by a string of cells supported on two rows.
\begin{prop}\label{prop:two_row}
\begin{equation}
[:,1]^U(\Pi_{i=1}^{n-1} [:,i+1]^{D}) (\Pi_{i=1}^{m-1} [:,i]^{-1}) (\Pi_{i=m+2}^{n}[:,i]^{-1}) \underset{O(n^2\epsilon)}{\approx} [:,m]^{U}[:,m+1]^D. \label{eq:two_row}
\end{equation}
\end{prop}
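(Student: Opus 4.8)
The plan is to reproduce the three-step argument of Section \ref{section:Proof1} one level up, with the row relations of Table \ref{table:row_relations} playing the role that the cell relations of Table \ref{table:derived_1D} played in the one-dimensional case. The proposed state Eq.\ref{eq:proposed_state_2D} is grown one row at a time from the bottom ($j=1$) to the top ($j=n$), exactly as the $D=1$ state was grown from left to right, so I expect forward row contraction (Lemma \ref{lemma:2D_forward_row_contraction}), row exchange (Lemma \ref{lemma:2D_row_exchange}), and backward row contraction (Lemma \ref{lemma:2D_backward_row_contraction}) to be the exact analogues of the three subroutines used there. Internal reversal (Lemma \ref{lemma:2D_internal_reversal}) will be the extra tool needed to reconcile the barred and unbarred orderings, a degree of freedom with no one-dimensional counterpart.

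First I would peel off the rows below $m$. Using the manifest relations (Table \ref{table:2D}) the contraction $[:,1]^{-1}$, which touches only row $1$, commutes past every $[:,i]^D$ with $i\geq 3$, so it can be brought adjacent to $[:,1]^U[:,2]^D$; Lemma \ref{lemma:2D_forward_row_contraction} then collapses this to $[:,2]^U$. Iterating for rows $2,\dots,m-1$ gives
\begin{equation}
[:,1]^U(\Pi_{i=1}^{n-1}[:,i+1]^D)(\Pi_{i=1}^{m-1}[:,i]^{-1}) \underset{O(mn\epsilon)}{\approx} [:,m]^U(\Pi_{i=1}^{n-m}[:,m+i]^D),
\end{equation}
since each forward row contraction is built from $O(n)$ derived relations and so costs $O(n\epsilon)$, and it is applied $m-1$ times.

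Next I would reverse the order of the surviving rows so that the state is grown top-down, the form required by backward row contraction; here the bar bookkeeping is the crux. Applying row exchange to the first pair yields $[:,m]^U[:,m+1]^D \aeqn [\bar{:},m+1]^D[\bar{:},m]^U$; the freed row $[\bar{:},m]^U$ lives on rows $m,m+1$ and is therefore non-adjacent, by the lattice property recorded just before Table \ref{table:2D}, to every row $\geq m+2$, so the manifest relations let me slide it to the right edge of the extension block. The leftover front $[\bar{:},m+1]^D$ must then be turned back into the pattern $[:,m+1]^U$ by internal reversal before the next exchange can fire, and I repeat. Each iteration spends one row exchange and one internal reversal, both $O(n\epsilon)$, and there are $O(n-m)$ of them, producing
\begin{equation}
[:,m]^U(\Pi_{i=1}^{n-m}[:,m+i]^D) \underset{O((n-m)n\epsilon)}{\approx} [\bar{:},n]^D(\Pi_{i=1}^{n-m}[\bar{:},n-i]^U).
\end{equation}

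Finally I would contract away the rows above $m+1$. The contraction $[:,n]^{-1}$, waiting at the end of the string, commutes leftward past $[\bar{:},n-2]^U,\dots,[\bar{:},m]^U$ (all on rows $\leq n-2$, hence non-adjacent to row $n$) until it sits just behind $[\bar{:},n-1]^U$, at which point Lemma \ref{lemma:2D_backward_row_contraction} removes the top row and returns $[\bar{:},n-1]^D$. Repeating for rows $n,\dots,m+2$ leaves $[\bar{:},m+1]^D[\bar{:},m]^U$, which a single application of row exchange read in reverse converts to the claimed $[:,m]^U[:,m+1]^D$. Summing the contributions $O(mn\epsilon)$, $O((n-m)n\epsilon)$, $O((n-m)n\epsilon)$ and the final $O(n\epsilon)$ yields the stated $O(n^2\epsilon)$. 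The hard part will be purely combinatorial: tracking the internal orientation (bar/unbar) so that after each exchange the pattern $[:,j]^U[:,j+1]^D$ is restored via internal reversal, and verifying at each stage that the relevant single-row contraction can be slid into the exact position demanded by Lemma \ref{lemma:2D_backward_row_contraction} without ever having to pass through the physically adjacent row directly below it.
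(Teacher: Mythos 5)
Your proposal follows the paper's own proof essentially step for step: forward row contractions to strip rows below $m$, row exchange combined with internal reversal to reverse the growth order while sliding the freed $[\bar{:},j]^U$ blocks to the end, backward row contractions to strip rows above $m+1$, and a final row exchange to restore $[:,m]^U[:,m+1]^D$, with the same $O(nm\epsilon)+O(n(n-m)\epsilon)=O(n^2\epsilon)$ error accounting. The approach and the bookkeeping of the barred orderings are the same as in the paper, so there is nothing to add.
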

\begin{proof}
By applying manifest relations, the string can be rearranged so that the beginning part is of the following form:
\begin{equation}
[:,1]^U[:,2]^D[:,1]^{-1}\cdots,
\end{equation}
which is equivalent to $[:,2]^U\cdots $ up to a trace distance of $O(n\epsilon)$ by using the forward contraction; see Lemma \ref{lemma:2D_forward_row_contraction}. After repeating this procedure, we arrive at the following string:
\begin{equation}
[:,m]^U (\Pi_{i=1}^{n-m} [:,m+i]^D) (\Pi_{i=m+2}^{n}[:,i]^{-1}),
\end{equation}
with a trace distance bounded from above by $O(nm\epsilon)$.
Now we apply the row exchange operation. Specifically,
\begin{equation}
\begin{aligned}
[:,m]^U [:,m+1]^D &\aeqn [\bar{:},m+1]^D[\bar{:},m]^U\\
&\aeqn [:,m+1]^U [\bar{:},m]^U,
\end{aligned}
\end{equation}
where we used Lemma \ref{lemma:2D_row_exchange} in the first line and Lemma \ref{lemma:2D_internal_reversal} in the second line. By using the manifest relation, one can move $[\bar{:},m]^U$ towards the end of the string. Specifically,
\begin{equation}
[:,m+1]^U [:,m+2]^D\cdots [\bar{:},m]^U(\Pi_{i=m+2}^{n}[:,i]^{-1}).
\end{equation}
By repeatedly applying this procedure $O(n-m)$ times, the string can be converted to the following form:
\begin{equation}
[\bar{:},n]^D(\Pi_{i=1}^{n-m}[\bar{:},n-i]^U)(\Pi_{i=m+2}^{n}[:,i]^{-1}).
\end{equation}
After applying the backward row contraction(cf. Lemma \ref{lemma:2D_backward_row_contraction}) $O(n-m)$ times, the string is converted into the following form:
\begin{equation}
[\bar{:},m+1]^D [\bar{:},m]^U,
\end{equation}
which can be then converted to $[:,m]^U[:,m+1]^D$ by Lemma \ref{lemma:2D_row_exchange}. All the trace distances that are incurred throughout this procedure is bounded by $O(n(n-m)\epsilon)$. Thus the trace distance between the two states in Eq.\ref{eq:two_row} is bounded from above by $O(n^2 \epsilon)$.
\end{proof}

\subsection{Supercell relations\label{section:supercell_relations}}
The main takeaway message so far should be the fact that the reduced density matrix of the proposed state on any two contiguous rows is approximately equal to a string that only consists of cells on these rows. Since all of the clusters are supported on at most two rows, proving Theorem \ref{thm:2D} now amounts to proving that the reduced density matrix of these states over each of the clusters in the two rows is (approximately) equal to the given marginals. Unsurprisingly, the main theme of Theorem \ref{thm:1D} and Proposition \ref{prop:two_row} carries over.
The key observation is that one can rearrange the string $[:,m]^U[:,m+1]^D$ into the following form:
\begin{equation}
[1,m]^{UR}[1,m+1]^{DR}\Pi_{i=1}^{n-1} [1+i,m]^{UL}[1+i,m+1]^{DL}.
\end{equation}
Now we can view $[i,m]\cup [i,m+1]$ as one supercell. The forward contraction, cell exchange, and the backward contraction involves these supercells.\footnote{We do not have a table that summarizes these relations. Unfortunately the strings are too long to be contained in a table.}
\begin{lem}\label{lemma:2D_forward_supercell_contraction}
(Forward supercell contraction) For $i=1,\cdots,n-2$
\begin{equation}
[i,m]^{UR}[i,m+1]^{DR}[i+1,m]^{UL}[i+1,m+1]^{DL} [i,m]^{-1}[i,m+1]^{-1} \aeq [i+1,m]^{UR}[i+1,m+1]^{DR}.
\end{equation}
\end{lem}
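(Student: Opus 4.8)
The plan is to imitate the one-dimensional forward contraction (Lemma \ref{lemma:1D_forward_contraction}), but with the four-cell contraction of Lemma \ref{lemma:2D_contraction_LD} playing the role of the elementary Markov move. First I would bring the left-hand side into the canonical order required by that lemma. The cells $[i,m+1]$ and $[i+1,m]$ are \emph{not} adjacent (their centers sit at distance $\sqrt 3$), so the manifest relation for non-adjacent cells (Table \ref{table:2D}) lets me swap $[i,m+1]^{DR}$ and $[i+1,m]^{UL}$ exactly, rewriting the left-hand side as
\[
[i,m]^{UR}[i+1,m]^{UL}[i,m+1]^{DR}[i+1,m+1]^{DL}\,[i,m]^{-1}[i,m+1]^{-1}.
\]
Applying Lemma \ref{lemma:2D_contraction_LD} with $j=m$ and trailing string $S=[i,m+1]^{-1}$ then removes the bottom-left cell $[i,m]$ and leaves $[i+1,m+1]^{DL}[i+1,m]^{UL}[i,m+1]^{DR}[i,m+1]^{-1}$ up to $O(\epsilon)$.

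The hard part is the last pair $[i,m+1]^{DR}[i,m+1]^{-1}$. One is tempted to delete it as a manifest identity, but a polymorphic extension followed by the contraction of the \emph{same} cell is not the identity: the universal recovery map $[\mathcal{R}]_B^{Ba}$ is not inverted by $\Tr_a$ on states that are correlated with the spectator (indeed $\Tr_a\circ[\mathcal{R}]_B^{Ba}$ merely fixes the reference marginal, not arbitrary extensions), so no manifest relation cancels these two symbols. The device that gets around this is to reinterpret the intermediate string as a genuine cluster marginal. Exactly as in the proof of Lemma \ref{lemma:2D_contraction_LD}, the string $[i+1,m+1]^{DL}[i+1,m]^{UL}[i,m+1]^{DR}$ is $O(\epsilon)$-close to the reduced density matrix of $\rho^{\bar A}$ over $[i+1,m]\cup[i+1,m+1]\cup[i,m+1]$, where $A=\{[i,m],[i+1,m],[i,m+1],[i+1,m+1]\}$ is the left four-cell cluster. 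Since $[i,m+1]^{-1}=\mathfrak{C}_{[i,m+1]}$ is a partial trace, hence norm-nonincreasing, appending it keeps us $O(\epsilon)$-close to $\Tr_{[i,m+1]}\rho^{\bar A}$, the reduced density matrix of $\rho^{\bar A}$ over the right supercell $[i+1,m]\cup[i+1,m+1]$.

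To reach the target it remains to hop from cluster $A$ to the neighboring cluster. The right supercell $[i+1,m]\cup[i+1,m+1]$ also lies in the four-cell cluster $A'=\{[i+1,m],[i+2,m],[i+1,m+1],[i+2,m+1]\}$, which exists precisely because $i\le n-2$, so the local consistency condition bounds the trace distance between the reductions of $\rho^{\bar A}$ and $\rho^{\bar{A'}}$ over these two cells by $O(\epsilon)$. Finally, the inherited-Markov argument used in Lemma \ref{lemma:2D_inheritance_vertical} shows that $[i+1,m]^{UR}[i+1,m+1]^{DR}$ is $O(\epsilon)$-close to the reduction of $\rho^{\bar{A'}}$ over the same two cells, the point being that the $UR$ cluster of $[i+1,m]$ and the $DR$ cluster of $[i+1,m+1]$ are both $A'$. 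Chaining these four $O(\epsilon)$ estimates with the triangle inequality yields the claim. Beyond the non-cancellation issue, the only care needed is routine: checking that each intermediate string is well-defined so the manifest relations apply, and confirming that the constant number of invoked bounds keeps the total error within the single-$O(\epsilon)$ budget demanded by $\aeq$.
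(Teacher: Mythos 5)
Your proof is correct and rests on the same mechanism as the paper's: identify the string (before the contractions) with the reduced density matrix of the four-cell cluster marginal $\rho^{\bar A}$ via the local Markov conditions, trace out, hop to the neighboring cluster $B$ by local consistency, and land on $[i+1,m]^{UR}[i+1,m+1]^{DR}$ by the inherited Markov condition. The only difference is cosmetic: you take a detour through the exact swap and Lemma \ref{lemma:2D_contraction_LD} to remove $[i,m]$ first, whereas the paper removes both cells at once after identifying the full four-cell string with $\rho^{\bar A}$; your observation that $[i,m+1]^{DR}[i,m+1]^{-1}$ does not cancel manifestly, and your workaround for it, are exactly right.
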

\begin{proof}
Note that all the clusters that appear on the left hand side are in fact the same cluster. Let us denote this cluster as $A$ and the cluster on its right as $B$. To be clear, $B$ consists of cells $[i+1,m], [i+1,m+1], [i+2,m],$ and $[i+2,m+1]$. By using the local Markov conditions, one can show that $[i,m]^{UR}[i,m+1]^{DR}[i+1,m]^{UL}[i+1,m+1]^{DL}$ is close to the marginal over $A$ up to a trace distance of $O(\epsilon)$.\footnote{This is the same argument that was used in Lemma \ref{lemma:2D_4cell_permutation}, \ref{lemma:2D_contraction_LD}, and \ref{lemma:2D_contraction_UR}.} By the local consistency condition, the reduced density matrix of this marginal over $[i+1,m]\cup [i+1,m+1]$ is close to the reduced density matrix of the same region for $\rho^{\bar{B}}$. By the inherited local Markov condition, the reduced density matrix over this region for $\rho^{\bar{B}}$ is $O(\epsilon)$ close to $[i+1,m]^{UR}[i+1,m+1]^{DR}$ in trace distance. This completes the proof.
\end{proof}

\begin{lem}\label{lemma:2D_supercell_exchange}
(Supercell exchange) For $i=1,\cdots, n-2$,
\begin{equation}
[i,m]^{UR}[i,m+1]^{DR}[i+1,m]^{UL}[i+1,m+1]^{DL} \aeq [i+1,m]^{UR}[i+1,m+1]^{DR}[i,m+1]^{DR} [i,m]^{UR}
\end{equation}
\end{lem}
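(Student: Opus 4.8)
The plan is to mirror the proofs of the cell exchange (Lemma \ref{lemma:1D_cell_exchange}) and of the four-cell permutation (Lemma \ref{lemma:2D_4cell_permutation}): I will show that \emph{both} strings are $O(\epsilon)$-close in trace distance to one and the same marginal, namely $\rho^{\bar{A}}$, where $A$ is the unique four-cell cluster with cells $[i,m],[i+1,m],[i,m+1]$, and $[i+1,m+1]$. Both strings manifestly build a state supported on $\bar{A}$, so once each is shown to be $\aeq \rho^{\bar{A}}$, the triangle inequality for the trace norm gives the claim.

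For the left-hand side, note that $[i,m+1] \centernot{-} [i+1,m]$ (their hexagon centers lie at distance $\sqrt{3}>1$), so by a manifest relation (Table \ref{table:2D}) the two middle symbols commute,
\begin{equation}
[i,m]^{UR}[i,m+1]^{DR}[i+1,m]^{UL}[i+1,m+1]^{DL} = [i,m]^{UR}[i+1,m]^{UL}[i,m+1]^{DR}[i+1,m+1]^{DL}.
\end{equation}
The string on the right is exactly the one whose proof in Lemma \ref{lemma:2D_4cell_permutation} establishes that it is $\aeq \rho^{\bar{A}}$: build up $[i,m]\cup[i+1,m]$ by the inherited relation, extend to $[i,m+1]$ using the $\mathcal{C}_3$ condition of FIG.\ref{fig:CMI2D_3-1}, and recover $[i+1,m+1]$ using the $\mathcal{C}_4$ condition of FIG.\ref{fig:CMI2D_4-2}. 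Hence the left-hand side is $\aeq \rho^{\bar{A}}$.

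For the right-hand side I will rebuild $\rho^{\bar{A}}$ in the order dictated by the string. Writing $B$ for the four-cell cluster to the right of $A$ (cells $[i+1,m],[i+2,m],[i+1,m+1],[i+2,m+1]$), the first two symbols $[i+1,m]^{UR}[i+1,m+1]^{DR}$ are recovery maps defined from $\rho^{\bar{B}}$; by the vertical inherited relation (Lemma \ref{lemma:2D_inheritance_vertical}) this substring is $\aeq$ the reduced state of $\rho^{\bar{B}}$ over $[i+1,m]\cup[i+1,m+1]$, which by the local consistency condition is in turn $\aeq$ the reduced state of $\rho^{\bar{A}}$ over the same two cells. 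I then apply the two cluster-$A$ recovery maps: $[i,m+1]^{DR}$ recovers $[i,m+1]$ from its unique already-present neighbor $[i+1,m+1]$, justified through Theorem \ref{thm:universal_recovery} by the inherited $\mathcal{C}_3$ condition $I([i,m+1]:[i+1,m]\mid[i+1,m+1])\leq\epsilon^2$ (FIG.\ref{fig:CMI2D_3}); and $[i,m]^{UR}$ then recovers $[i,m]$ from all three remaining cells, for which the conditioned-out subsystem is empty, so that by the defining property $\Phi_B^{BC}(\rho^B)=\rho^{BC}$ the recovery is exact. Accumulating the $O(\epsilon)$ errors and using that universal recovery maps are norm-nonincreasing, the right-hand side is $\aeq \rho^{\bar{A}}$, and the triangle inequality finishes the argument.

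The main obstacle is bookkeeping rather than any new idea. I must keep straight which cluster defines each recovery map (the first two symbols from $\rho^{\bar{B}}$, the last two from $\rho^{\bar{A}}$) and bridge the two clusters through the local consistency condition on their shared cells $[i+1,m]\cup[i+1,m+1]$; and at each intermediate extension I must confirm that the conditioning subsystem is precisely the set of already-recovered neighbors, so that the weaker inherited $\mathcal{C}_3$ Markov condition is the one invoked rather than the cluster-$A$ condition from the definition, which conditions on a strictly larger set of cells. This is the same subtlety already resolved in Lemmas \ref{lemma:2D_contraction_LD}, \ref{lemma:2D_contraction_UR}, and \ref{lemma:2D_forward_supercell_contraction}, and it transfers here without essential change.
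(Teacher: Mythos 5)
Your proof is correct and rests on the same underlying facts as the paper's: the paper simply cites Lemma \ref{lemma:2D_4cell_permutation} (after the manifest commutation of the non-adjacent middle symbols, exactly as you do) and then substitutes the prefix via Lemma \ref{lemma:2D_inheritance_vertical}, whereas you inline those two lemmas by showing both strings are $O(\epsilon)$-close to the four-cell marginal $\rho^{\bar{A}}$ and invoking the triangle inequality. The only difference is economy of presentation --- your direct build-up of the right-hand side (bridging to cluster $B$ by local consistency and then recovering $[i,m+1]$ and $[i,m]$ with the cluster-$A$ maps) reproduces the content of the cited lemmas rather than reusing them as string-rewriting rules, but every step, including the adjacency checks and the exactness of the final extension onto the full cluster, is sound.
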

\begin{proof}
We first apply Lemma \ref{lemma:2D_4cell_permutation} to show that
\begin{equation}
[i,m]^{UR}[i,m+1]^{DR}[i+1,m]^{UL}[i+1,m+1]^{DL} \aeq [i+1,m+1]^{DL}[i+1,m]^{UL}[i,m+1]^{DR}[i,m]^{UR}.
\end{equation}
Then we apply Lemma \ref{lemma:2D_inheritance_vertical} to show that
\begin{equation}
[i+1,m+1]^{DL}[i+1,m]^{UL}[i,m+1]^{DR}[i,m]^{UR} \aeq [i+1,m]^{UR}[i+1,m+1]^{DR}[i,m+1]^{DR}[i,m]^{UR}.
\end{equation}
This completes the proof.
\end{proof}

\begin{lem}\label{lemma:2D_backward_supercell_contraction}
(Backward supercell contraction) For $i=2,\cdots,n-1$,
\begin{equation}
[i+1,m+1]^{DL}[i+1,m]^{UL}[i,m+1]^{DR}[i,m]^{UR} [i+1,m+1]^{-1}[i+1,m]^{-1} \aeq [i,m+1]^{DL}[i,m]^{UL}
\end{equation}
\end{lem}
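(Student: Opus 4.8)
The plan is to follow verbatim the template of the forward supercell contraction (Lemma~\ref{lemma:2D_forward_supercell_contraction}); this statement is its $\pi$-rotated dual, in exactly the sense that Lemma~\ref{lemma:2D_backward_row_contraction} is the dual of Lemma~\ref{lemma:2D_forward_row_contraction}. The first observation is that all four polymorphic extensions on the left-hand side are defined in terms of a single cluster $A=\{[i,m],[i+1,m],[i,m+1],[i+1,m+1]\}$: the cell $[i+1,m+1]$ sits at the top-right corner of $A$ (hence $DL$), $[i+1,m]$ at the bottom-right ($UL$), $[i,m+1]$ at the top-left ($DR$), and $[i,m]$ at the bottom-left ($UR$). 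The right-hand side, by contrast, is built from the neighbouring cluster to the left, $B=\{[i-1,m],[i,m],[i-1,m+1],[i,m+1]\}$, with $[i,m+1]$ at its top-right ($DL$) and $[i,m]$ at its bottom-right ($UL$). The two clusters overlap precisely on the column $[i,m]\cup[i,m+1]$, and the existence of $B$ is exactly why the lemma is restricted to $i\geq 2$.

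First I would rewrite the left-hand string in its canonical cluster-$A$ form. Since $[i+1,m]\centernot{-}[i,m+1]$, a single manifest relation gives
\begin{equation}
[i+1,m+1]^{DL}[i+1,m]^{UL}[i,m+1]^{DR}[i,m]^{UR} = [i+1,m+1]^{DL}[i,m+1]^{DR}[i+1,m]^{UL}[i,m]^{UR},
\end{equation}
whose right-hand side is precisely the string whose proximity to the full marginal $\rho^{\bar A}$ was established inside the proof of Lemma~\ref{lemma:2D_contraction_UR} (growing out of $[i+1,m]\cup[i+1,m+1]$ by the inherited Markov condition and then invoking the local Markov conditions of FIG.~\ref{fig:CMI2D_3-5} and FIG.~\ref{fig:CMI2D_4-4}). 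Hence this string is $O(\epsilon)$-close to $\rho^{\bar A}$ in trace distance.

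Next I would apply the contraction $[i+1,m+1]^{-1}[i+1,m]^{-1}$, i.e. trace out the right column. Because the partial trace is norm-nonincreasing, the resulting state is $O(\epsilon)$-close to the reduced density matrix of $\rho^{\bar A}$ over the overlap column $[i,m]\cup[i,m+1]$. This is where the transfer onto the cluster $B$ takes place: by the inherited Markov condition in $A$ this reduced state is $O(\epsilon)$-close to $[i,m]^{UR}[i,m+1]^{DR}$, and Lemma~\ref{lemma:2D_inheritance_vertical} (which already packages local consistency between $A$ and $B$ together with the inherited Markov condition in $B$) gives $[i,m]^{UR}[i,m+1]^{DR}\aeq [i,m+1]^{DL}[i,m]^{UL}$. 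Assembling these three $O(\epsilon)$ estimates with the triangle inequality for the trace norm yields the claim.

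There is no genuinely new obstacle here: the only points requiring care are bookkeeping rather than analysis. One must check that the four left-hand extensions really do collapse onto the same cluster $A$, and that the reordering of the two diagonal (non-adjacent) cells is legitimate as a manifest relation; and one must invoke the $UR$-oriented data of Lemma~\ref{lemma:2D_contraction_UR} (FIG.~\ref{fig:CMI2D_3-5}, FIG.~\ref{fig:CMI2D_4-4}) rather than the $LD$-oriented data used in the forward version, so that the growth direction matches the column being contracted. The restriction $i\geq 2$, needed for $B$ to exist, is the one substantive difference from Lemma~\ref{lemma:2D_forward_supercell_contraction}, where the analogous restriction was $i\leq n-2$.
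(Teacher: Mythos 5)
Your proposal is correct and follows essentially the same route as the paper's proof: identify that all four extensions on the left are defined by the single cluster $A$, show that string is $O(\epsilon)$-close to $\rho^{\bar A}$ via the argument of Lemma~\ref{lemma:2D_contraction_UR}, contract the right column, and transfer to the neighbouring cluster $B$ via local consistency and the inherited Markov condition. The only (harmless) cosmetic differences are that you make the manifest-relation reordering of the two non-adjacent cells explicit and route the final step through Lemma~\ref{lemma:2D_inheritance_vertical} rather than invoking local consistency and the inherited Markov condition in $B$ directly, which is the same content.
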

\begin{proof}
Note that all the clusters that appear on the left hand side are in fact the same cluster. Let us denote this cluster as $A$ and the cluster on its left as $B$. To be clear, $B$ consists of cells $[i-1,m], [i-1,m+1], [i,m],$ and $[i,m+1]$. By using the local Markov conditions, one can show that $[i+1,m+1]^{DL}[i+1,m]^{UL}[i,m+1]^{DR}[i,m]^{UR}$ is close to the marginal over $A$ up to a trace distance of $O(\epsilon)$. By the local consistency condition, the reduced density matrix of this marginal over $[i,m+1]\cup [i,m]$ is close to the reduced density matrix of the same region for $\rho^{\bar{B}}$. By the inherited local Markov condition, the reduced density matrix over this region for $\rho^{\bar{B}}$ is $O(\epsilon)$ close to $[i,m+1]^{DL}[i,m]^{UL}$ in trace distance. This completes the proof.
\end{proof}

Now we are in a position to complete the proof of Theorem \ref{thm:2D}. Without loss of generality, pick some $i$ between $1$ and $n,$ and trace out all the cells but the ones with a $x$ coordinate of $i$ and $i+1$ for the state $[:,m]^U[:,m+1]^D$. By applying the forward supercell contraction, one can remove all the cells with $x$ coordinate less than $i$. By sequentially applying the supercell exchange, the order of the cells are reversed.\footnote{At the last step, one should use $[n-1,m]^{UR}[n-1,m+1]^{DR}[n,m]^{UL}[n,m+1]^{DL} \aeq[n,m+1]^{DL}[n,m]^{UL}[n-1,m+1]^{DR}[n-1,m]^{UR}$(Lemma \ref{lemma:2D_4cell_permutation}) as opposed to Lemma \ref{lemma:2D_supercell_exchange}.} Then one can sequentially apply the backward supercell contraction. What remains is a state over $4$ cells $[i,m],[i+1,m],[i,m+1]$, and $[i+1,m+1]$, which is represented by the following string:
\begin{equation}
[i+1,m+1]^{DL} [i,m+1]^{DR}[i+1,m]^{UL}  [i,m]^{UR}.
\end{equation}
As already discussed in Lemma \ref{lemma:2D_4cell_permutation}, \ref{lemma:2D_contraction_LD}, and \ref{lemma:2D_forward_supercell_contraction}, this is close to the marginal over $[i,m]\cup [i,m+1]\cup [i+1,m] \cup [i+1,m+1]$ up to a trace distance of $O(\epsilon).$ Combining all the trace distance estimates, we conclude that for all $i=1,\cdots, n-1$ and $m=1,\cdots, n-1$ the reduced density matrix of the proposed state(cf. Eq.\ref{eq:proposed_state_2D}) over this regions is $O(n^2\epsilon)$ close to the given marginals, thus completing the proof.

\section{Conclusion\label{section:conclusion}}
We introduced a special class of marginals, the so called Markovian marginals. This construction leads to a large family of nontrivial solutions to the quantum marginal problem. This was possible because the local Markov condition ensures a set of nontrivial relations involving the polymorphic extensions and contractions. By combining these relations, we were able to prove the consistency of the given marginals. The conditions that ensure the consistency of the Markovian marginal is physically well-motivated and reasonable, as we explained in Section \ref{section:summary}. By assuming translational invariance, a single density matrix obeying these conditions defines a local reduced density matrix of an infinite system in the $\epsilon \to 0$ limit. It would be interesting to minimize the energy in the space of such marginals and study the thermodynamic properties of interacting quantum many-body systems.

There are several issues that we did not discuss in this paper. This includes the maximum global entropy that is compatible with the given Markovian marginal. At least for the examples discussed in this paper, there is an exact formula with rigorous stability bound. Also, it is possible to compute the long-range correlation functions efficiently. An algorithm for minimizing the energy of a Markovian marginal is also an important problem. These studies will appear elsewhere.

Let us comment on our choice of partitions and clusters. Our sole intention was to simplify certain aspects of the proof and the formulation of the statement.\footnote{Without these objects one would have had to specify the reduced density matrix as well as the domain and the codomain of the universal recovery maps for every lemma. } One could have completely abandoned this notion and just specify the local Markov conditions and local consistency conditions explicitly, but we found such a formulation to be a bit awkward. This is not to say that the present formulation is without any flaws. For example, the local Markov conditions corresponding to FIG.\ref{fig:CMI2D_3-3}, \ref{fig:CMI2D_3-6}, \ref{fig:CMI2D_4-1}, and \ref{fig:CMI2D_4-3} were never used. These conditions are implied by Eq.\ref{eq:EEscaling} anyway, so we did not bother to point out that they are irrelevant for the proof. Also, there are Markovian marginals that lie strictly oustide of the framework developed in this paper. They are likely to be more practical, because the conditions that ensure the consistency for those Markovian marginals is weaker. In light of these facts, this paper should be viewed as an existence proof that nontrivial class of physically relevant solutions to the quantum marginal problem exists. Finding a minimal set of conditions that ensures the consistency of the marginals is an important problem that warrants a further study.

We should also point out a subtlety in our claim: that our solution is applicable to topologically ordered states. This is only true in the sense that our solution is applicable to arbitrarily large regions of a topologically ordered system on an infinite plane. Our solution is not applicable to topologically ordered states on a closed system. In fact, the following argument shows that such a solution is unlikely to exist. Suppose, for example, that there exists a Markovian marginal whose marginals consist of bounded regions of the toric code\cite{Kitaev1997} ground state such that it ensures the existence of a consistent global state. Then one can consider a different Markovian marginal such that one of the marginals is replaced with a reduced density matrix of the toric code that contains a nontrivial topological charge in that region. For such marginals, the charge of the entire system cannot add up to be a trivial charge, and thus cannot be consistent. However, the formulation of the Markovian marginal is such that the requisite constraints are specified only in terms of the local consistency condition and the entanglement entropy. The value of the entanglement entropy, for the case of the toric code, is independent of the topological charge that is enclosed in the region.\cite{Kitaev2006} Therefore, even for such a choice of marginal, the local Markov condition would be satisfied. Also, by placing the topological charge on a region that is sufficiently far away from the overlaps between the marginals, one can ensure that the local consistency condition remains intact. This means that, if there exists a Markovian marginal that consists of marginals of the toric code such that one can prove its consistency in a manner we did in this paper, the same proof must go through for a set of marginals for which a consistent global state cannot exist.\footnote{I thank Sergey Bravyi for pointing out this fact.}

An intriguing question is whether one can design a suitable family of Markovian marginal to study large molecules. Markovian marginal seems to be sufficiently flexible that it may allow such a possibility. Obviously, the answer to this question would depend on the details of the molecule and the structure of the correlations that are present in such systems. It will be illuminating to study the local Markov condition over different orbitals in small molecules and deduce a potential pattern that may be applicable to larger molecules. On a more technical side, one would need to judiciously reformulate the definition of polymorphic extensions and contractions for CAR algebra. The existence of the universal recovery map for fermions should follow from Ref.\cite{Junge2015}.

Lastly, we emphasize that a Markovian marginal is not a graphical model,\cite{Pearl1988} even if we restrict ourselves to classical probability distributions. While both the graphical model and a Markovian marginal is formulated in terms of Markov conditions, the Markov conditions for the Markovian marginal are local, whereas the Markov conditions for graphical models are global. As such, a Markovian marginal, even in the classical regime, should be thought as an alternative generalization of a Markov chain. It will be interesting to study its applications in the domain of variational Bayesian methods.
\begin{acknowledgements}
I thank Sergey Bravyi for the helpful discussions. Part of this work was done during a workshop on quantum marginals and numerical ranges at the University of Guelph. My research at Perimeter Institute was supported by the Government of Canada through Industry Canada and by the Province of Ontario through the Ministry of Economic Development and Innovation.
\end{acknowledgements}
\bibliography{bib}

\end{document}